\def\NAT@def@citea{\def\@citea{\NAT@separator}}
\theoremstyle{plain}
\newtheorem{theorem}{Theorem}[section]
\newtheorem{lemma}[theorem]{Lemma}
\newtheorem{corollary}[theorem]{Corollary}
\newtheorem{proposition}[theorem]{Proposition}
\theoremstyle{definition}
\newtheorem{definition}[theorem]{Definition}
\newtheorem{example}[theorem]{Example}
\theoremstyle{remark}
\newtheorem{remark}{Remark}
\newcommand\rank[1]{\operatorname{rank}({#1})}
\newcommand{\identity}{\mathbbm{1}}
\newcommand{\tr}{\operatorname{tr}}
\newcommand{\diag}{\operatorname{diag}}
\newcommand{\id}{\operatorname{id}{\!}}
\newcommand{\conj}{*}
\newcommand{\unitvector}{e}
\begin{document}

\title{Strict Positivity and $D$-Majorization}

\author{
\name{Frederik vom Ende\textsuperscript{a,b}\thanks{CONTACT Frederik vom Ende. Email: frederik.vom-ende@tum.de}}
\affil{\textsuperscript{a}Department of Chemistry, Lichtenbergstra{\ss}e 4, 85747 Garching, Germany\\
\textsuperscript{b}Munich Centre for Quantum Science and Technology (MCQST),  Schellingstra{\ss}e~4, 80799~M{\"u}nchen, Germany}
}

\maketitle

\begin{abstract}
Motivated by quantum thermodynamics we first investigate the notion of strict positivity, that is, linear maps which map positive definite states to something positive definite again. We show that strict positivity is decided by the action on any full-rank state, and that the image of non-strictly positive maps lives inside a lower-dimensional subalgebra. This implies that the distance of such maps to the identity channel is lower bounded by one. 

The notion of strict positivity comes in handy when generalizing the majorization ordering on real vectors with respect to a positive vector $d$ to majorization on square matrices with respect to a positive definite matrix $D$. For the two-dimensional case we give a characterization of this ordering via finitely many trace norm inequalities and, moreover, investigate some of its order properties. In particular it admits a unique minimal and a maximal element. The latter is unique as well if and only if minimal eigenvalue of $D$ has multiplicity one.
\end{abstract}

\begin{keywords}
strict positivity, majorization relative to $d$, majorization on matrices, preorder, quantum channel
\end{keywords}
\begin{amscode}
15A45, 15B48, 47B65, 81P47
\end{amscode}
\section{Introduction}
A fundamental aspect of resource theories is finding conditions which characterize state-transfers via ``allowed'' operations. In quantum thermodynamics, for example, one usually asks whether a state can be generated from an initial state via a Gibbs-preserving quantum channel, that is, a
completely positive and trace-preserving 
map which preserves the Gibbs state of the system \cite{Brandao15,DallArno20,Gour15,Horodecki13}. 
As Gibbs states are of the form $e^{-\beta H}/\tr(e^{-\beta H})$ for some system's Hamiltonian $H$---usually a hermitian $n\times n$ matrix---
and some inverse temperature $\beta>0$, these states in particular are of full rank. Actually every full-rank state $D$ is the Gibbs state of some system by simply choosing $\beta=1$ and $H=\ln(D^{-1})$. Therefore gaining a better understanding of the set of all channels with a given full-rank fixed point and its geometry, properties, etc.~would greatly benefit the aforementioned state-conversion problem. Channels with a full-rank fixed point, sometimes called \textit{faithful} \cite{Albert19}, are characterized by not containing a decaying subspace (under the asymptotic projection). Also conserved quantities of such channels commute with all Kraus operators up to a phase \cite[Prop.~1]{Albert19}. 
Topics related to faithful channels and fixed-point analysis of {\sc cptp} maps are (mean-)ergodic channels \cite{Burgarth13,Ohya81}, irreducible channels \cite{Davies70,Sanz10}, zero-error \cite{BlumeKohout10}, \cite[Ch.~4]{Gupta15} and relaxation properties of discrete-time \cite{Cirillo15} and continuous-time \cite{Frigerio77,Spohn76,Spohn77} Markovian systems.

As a notable special case if the initial and the final state commute with the Hamiltonian then one is in the classical realm which, mathematically 
speaking, is handled by majorization relative to a positive vector $d\in\mathbb R^n$ as introduced by Veinott \cite{Veinott71} 
and (in the quantum regime) Ruch, Schranner and Seligman \cite{Ruch78} (cf.~\cite[Appendix E]{Brandao15}). Given such positive $d$ some vector 
$x$ is said to $d$-majorize $y$, denoted by $x\prec_d y$, if there exists a column-stochastic\footnote{A matrix 
$A\in\mathbb R^{n\times n}$ is said to be \textit{column stochastic} if all its entries are non-negative and 
$\sum\nolimits_{i=1}^n A_{ij}=1$ for all $j=1,\ldots,n$, i.e.~the entries of each column sums up to one.\label{footnote:column_stoch}} matrix $A$ with 
$Ad=d$ and $x=Ay$. 
A variety of characterizations of $\prec_d$ are collected at the start of Chapter \ref{sec:matrix_d_maj}. In particular the above state-conversion problem in this classical case reduces to $n$ vector-$1$-norm inequalities (where $n$ is the dimension of the system) and the vectors $d$-majorized by some initial vector form a convex polytope with at most $n!$ extreme points. In addition, these extreme points can be easily computed analytically \cite{vomEnde19polytope}. For more information on classical as well as more general notions of majorization as well as its applications we refer to \cite{MarshallOlkin}. 

When studying channels with a full-rank fixed point one finds that those belong to the larger class of linear maps which preserve positive definiteness, sometimes called \textit{strictly positive} maps. While this is a rather large class of maps it turns out that some results regarding positive and strictly positive maps will be useful tools when generalizing $d$-majorization from vectors to matrices. 
This is why---after a short review of quantum channels in Chapter \ref{ch:prelim}---Section \ref{sec:3_1} features the definition of strict positivity, its properties and how it ``interacts'' with complete positivity. In Section \ref{sec:nonstrpos} we study positive maps which are not strictly positive and prove that their image lives inside a lower-dimensional subalgebra. Also we discuss some physical applications of these results such as characterizing qubit channels which are not strictly positive (Corollary \ref{coro1}) or proving that all Markovian quantum processes are strictly positive at all times (Remark \ref{rem_markov}). 

In Chapter \ref{sec:matrix_d_maj} after recapping $d$-majorization on real vectors 
we present the motivation for our definition of $D$-majorization on matrices---denoted by $\prec_D$---and
link some special cases to the vector situation. Following up in Section \ref{subsec:1b} we characterize $\prec_D$ in the qubit case via the trace norm and elaborate on why this is a challenging task when going 
beyond two dimensions. Finally Section \ref{subsec:2} deals with characterizing minimal and maximal 
elements of this preorder (Thm.~\ref{d_matrix_minmax}) as well as the set of all matrices $D$-majorized by 
some initial state, and its properties (Thm.~\ref{lemma_R_d_closed} \& Remark \ref{rem_md_cont}).
 
\section{Recap: Quantum States and Quantum Channels}\label{ch:prelim}

Before talking about strict positivity as well as majorization in the context of 
matrices we need some well-known results from quantum information theory. For the purpose of this paper be aware of the following notions and notations.
\begin{itemize}
\item The tensor product---more specifically the Kronecker product---is denoted by $\otimes$.\smallskip
\item We take the inner product $\langle \cdot,\cdot\rangle$ to be antilinear in the first and linear in the second component. Given some vectors $a,b\in\mathbb C^n$ the bra-ket notation $|b\rangle\langle a|$ refers to the linear map $x\mapsto \langle a,x\rangle b$ on $\mathbb C^n$.\smallskip
\item The trace norm for any $A\in \mathbb C^{n\times n}$ is given by $\|A\|_1=\tr(\sqrt{A^\conj A})$.\smallskip
\item Some $A\in \mathbb C^{n\times n}$ is called positive semi-definite (positive definite), denoted by $A\geq 0$ ($>0$), if $\langle x,Ax\rangle\geq 0$ ($>0$) for all non-zero vectors $x\in\mathbb C^n$. Because the considered field is $\mathbb C$, $\langle x,Ax\rangle\geq 0$ in particular implies that $A$ is hermitian due to the polarization identity---over the reals, however, symmetry of $A$ would have to be required explicitly. \smallskip
\item The collection of all \textit{states} (or \textit{density matrices}) of size $n$ is defined to be
$
\mathbb D(\mathbb C^n):=\lbrace \rho\in\mathbb C^{n\times n}\,|\,\rho\geq 0\text{ and }\tr(\rho)=1\rbrace\,.
$
\end{itemize}

We start with a well-known result.
\begin{lemma}\label{lemma_pure_extreme}
For all $n\in\mathbb N$ the set of all states $\mathbb D(\mathbb C^n)$ is convex and compact, and the rank-one projections (called pure states) are precisely its extreme points.
\end{lemma}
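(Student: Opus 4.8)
The plan is to treat the three assertions---convexity, compactness, and the identification of the extreme points---separately, the first two being routine and the last carrying the actual content.

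For convexity, given states $\rho_1,\rho_2$ and $\lambda\in[0,1]$ I would observe that $\langle x,(\lambda\rho_1+(1-\lambda)\rho_2)x\rangle=\lambda\langle x,\rho_1 x\rangle+(1-\lambda)\langle x,\rho_2 x\rangle\geq 0$ for every $x$, so positive semi-definiteness is preserved, while linearity of the trace gives $\tr(\lambda\rho_1+(1-\lambda)\rho_2)=1$. For compactness, since $\mathbb C^{n\times n}$ is finite-dimensional I would invoke Heine--Borel and verify closedness and boundedness. Closedness follows by writing $\mathbb D(\mathbb C^n)$ as the intersection of the closed affine hyperplane $\{\,A:\tr(A)=1\,\}$ with the closed cone $\{\,A:\langle x,Ax\rangle\geq 0\text{ for all }x\,\}$, the latter being an intersection of closed half-spaces. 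Boundedness follows because a state has non-negative eigenvalues summing to $1$, so each eigenvalue lies in $[0,1]$, which bounds any matrix norm.

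The substantive part is the extreme-point characterization, which I would establish by two inclusions. For one direction I would take a state $\rho$ of rank $r\geq 2$ and diagonalize it as $\rho=\sum_{i=1}^r\lambda_i|v_i\rangle\langle v_i|$ with all $\lambda_i>0$, $\sum_i\lambda_i=1$ and orthonormal $v_i$; this already exhibits $\rho$ as a non-trivial convex combination of the distinct pure states $|v_i\rangle\langle v_i|$, so $\rho$ is not extreme. Hence every extreme point must have rank one. For the converse I would fix a rank-one projection $P=|v\rangle\langle v|$ with $\|v\|=1$ and suppose $P=\lambda\rho_1+(1-\lambda)\rho_2$ for states $\rho_1,\rho_2$ and some $\lambda\in(0,1)$.

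The key step---and the one I expect to be the main obstacle---is showing this decomposition forces $\rho_1=\rho_2=P$. For any $w$ orthogonal to $v$ we have $\langle w,Pw\rangle=0$, hence $\lambda\langle w,\rho_1 w\rangle+(1-\lambda)\langle w,\rho_2 w\rangle=0$; as both summands are non-negative and $\lambda\in(0,1)$, each must vanish. Using that a positive semi-definite matrix $\rho$ satisfies $\langle w,\rho w\rangle=0$ if and only if $\rho w=0$ (factor $\rho=B^\conj B$ so that $\langle w,\rho w\rangle=\|Bw\|^2$), I would conclude that $v^\perp$ lies in the kernel of each $\rho_j$, forcing their ranges into $\operatorname{span}(v)$ and thus $\rho_j=c_j P$. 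The trace constraint then yields $c_j=1$, so $\rho_1=\rho_2=P$ and $P$ is extreme, completing the proof.
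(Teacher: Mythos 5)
Your proposal is correct, and it takes a genuinely different (more self-contained) route than the paper. The paper dismisses convexity and boundedness as evident, proves closedness by a sequential argument---if a sequence of states converges in trace norm, then it also converges in trace (via $|\tr(\cdot)|\leq\|\cdot\|_1$) and weakly (via $|\langle x,(\cdot)x\rangle|\leq\|\cdot\|_1\|x\|^2$), so the limit is again a state---and then simply \emph{cites} Holevo for the extreme-point characterization, which is the substantive content of the lemma. You instead prove everything from scratch: closedness by writing $\mathbb D(\mathbb C^n)$ as an intersection of closed sets (one small quibble: over $\mathbb C$ the set $\{A:\langle x,Ax\rangle\geq 0\}$ is not literally a half-space, since the condition also forces $\operatorname{Im}\langle x,Ax\rangle=0$; it is the intersection of a real hyperplane with a half-space, which is still closed, so nothing breaks), boundedness via the eigenvalues lying in $[0,1]$, and both inclusions of the extreme-point statement by hand. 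Your two-inclusion argument is sound: the spectral decomposition exhibits any state of rank at least two as a non-trivial convex combination of distinct states, and for a rank-one projection $P=|v\rangle\langle v|$ the implication $\langle w,\rho w\rangle=0\Rightarrow\rho w=0$ (the same $\rho=B^\conj B$ trick the paper itself uses later in the proof of Proposition \ref{prop_positivity}) forces both summands of any convex decomposition of $P$ to have range in $\operatorname{span}(v)$, hence to equal $P$. What the paper's route buys is brevity and a clean reduction to a standard reference; what yours buys is a citation-free, elementary proof at the cost of length. Both are valid proofs of the lemma.
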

\begin{proof}
Convexity and boundedness are evident. For closedness consider some sequence in $\mathbb D(\mathbb C^n)$ which converges in trace norm to some $A\in\mathbb C^{n\times n}$. Then in particular this sequence converges in trace ($|\tr(\cdot)|\leq\|\cdot\|_1$) and weakly ($|\langle x,(\cdot)x\rangle|\leq \|\cdot\|\|x\|^2\leq \|\cdot\|_1\|x\|^2$ for all $x\in\mathbb C^n$) hence $A\in\mathbb D(\mathbb C^n)$. 
Finally the statement regarding the extreme points is for example shown in \cite[Thm.~2.3]{Holevo12}. 
\end{proof}
A linear map $T:\mathbb C^{n\times n}\to\mathbb C^{k\times k}$  where, here and henceforth, $k,n\in\mathbb N$ are arbitrary is said to be positivity-preserving (for short: \textit{positive}) if $T(A)\geq 0$ for all $A\geq 0$. It furthermore is said to be \textit{completely positive}
if $T\otimes\operatorname{id}_{m}:\mathbb C^{nm\times nm}\to \mathbb C^{km\times km}$
is positive for all $m\in\mathbb N$. Obviously if domain and co-domain are fixed the (completely) positive maps form a semigroup. 
As we are in finite dimensions, complete positivity can be characterized as follows \cite{Choi75}:
\begin{lemma}\label{lemma_choi_matrix}
Let linear $T:\mathbb C^{n\times n}\to\mathbb C^{k\times k}$ be given. The following are equivalent.
\begin{itemize}
\item[(i)] $T$ is completely positive.\smallskip
\item[(ii)] The Choi matrix $
C(T):=\big(T(|e_j\rangle\langle e_k|)\big)_{j,k=1}^n$ of $T$ is positive semi-definite.\smallskip
\item[(iii)] There exist $\{K_i\}_{i\in I}\subset\mathbb C^{n\times k}$ with $|I|\leq nk$---called Kraus operators---such that $T(A)=\sum_{i\in I}K_i^\conj AK_i$ for all $A\in\mathbb C^{n\times n}$.
\end{itemize}
\end{lemma}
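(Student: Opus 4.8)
The plan is to prove all three conditions equivalent along the cycle (i) $\Rightarrow$ (ii) $\Rightarrow$ (iii) $\Rightarrow$ (i). This is Choi's theorem, so the overall route is classical; the only point requiring genuine care is to match the slightly non-standard, adjoint-style Kraus convention $T(A)=\sum_i K_i^\conj A K_i$ that is forced on us by taking the inner product antilinear in its first slot.

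The implications (i) $\Rightarrow$ (ii) and (iii) $\Rightarrow$ (i) are the quick ones. For the former I would read the block matrix $C(T)=(T(|e_j\rangle\langle e_k|))_{j,k}$ as $\sum_{j,k}|e_j\rangle\langle e_k|\otimes T(|e_j\rangle\langle e_k|)$ and recognize it as $(\operatorname{id}_n\otimes T)(|\omega\rangle\langle\omega|)$, where $\omega:=\sum_{j=1}^n e_j\otimes e_j$. Since $|\omega\rangle\langle\omega|\geq 0$ and $\operatorname{id}_n\otimes T$ is positive — it equals $T\otimes\operatorname{id}_n$ up to conjugation by the (unitary) swap operator, which preserves positive semi-definiteness — complete positivity of $T$ forces $C(T)\geq 0$. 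For the latter I would fix $m\in\mathbb N$ and expand $(T\otimes\operatorname{id}_m)(X)=\sum_{i\in I}(K_i\otimes\operatorname{id}_m)^\conj\,X\,(K_i\otimes\operatorname{id}_m)$; each summand has the form $M^\conj X M$ and satisfies $\langle y,M^\conj XMy\rangle=\langle My,XMy\rangle\geq 0$ whenever $X\geq 0$, so the whole sum is positive semi-definite for every $m$, giving complete positivity.

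The substantive step is (ii) $\Rightarrow$ (iii). Starting from $C(T)\geq 0$ I would take a spectral decomposition and absorb the non-negative eigenvalues into the eigenvectors to write $C(T)=\sum_{i\in I}|v_i\rangle\langle v_i|$ with $|I|=\rank{C(T)}\leq nk$ — this rank bound is precisely what yields the asserted bound on the number of Kraus operators. Under the identification $\mathbb C^{nk}\cong\mathbb C^n\otimes\mathbb C^k$ I expand $v_i=\sum_{a=1}^n e_a\otimes c_{ia}$ with $c_{ia}\in\mathbb C^k$, so that the $(a,b)$-block of $|v_i\rangle\langle v_i|$ is $|c_{ia}\rangle\langle c_{ib}|$ and hence $T(|e_a\rangle\langle e_b|)=\sum_{i\in I}|c_{ia}\rangle\langle c_{ib}|$. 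Defining $K_i\in\mathbb C^{n\times k}$ by $K_i^\conj e_a=c_{ia}$ (its $a$-th row is $c_{ia}^\conj$) gives $K_i^\conj|e_a\rangle\langle e_b|K_i=|c_{ia}\rangle\langle c_{ib}|$, and summing over $i$ then extends by linearity to $T(A)=\sum_{i\in I}K_i^\conj A K_i$ on all of $\mathbb C^{n\times n}$. The one place demanding attention — the main obstacle, to the extent there is one — is exactly this reshaping: one must vectorize the $v_i$ so that the complex conjugation lands on the correct tensor slot and produces $K_i^\conj A K_i$ rather than $K_i A K_i^\conj$; keeping that bookkeeping consistent with the first-slot-antilinear convention is the only spot where a transpose or conjugation slip would derail the argument, everything else being immediate from the positivity of $M^\conj X M$ and the block structure of the Choi matrix.
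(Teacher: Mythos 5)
Your proof is correct, and it is essentially Choi's original argument — which the paper itself does not reproduce but simply states with a citation to [Choi75]. You also handled the paper's particular conventions properly: the antilinear-in-the-first-slot inner product, the Kraus form $T(A)=\sum_{i\in I}K_i^\conj A K_i$ with $K_i\in\mathbb C^{n\times k}$ (so that $K_i^\conj|e_a\rangle\langle e_b|K_i=|K_i^\conj e_a\rangle\langle K_i^\conj e_b|$ comes out with the conjugation on the correct side), and the bound $|I|\leq nk$ via $\operatorname{rank}(C(T))$ — so there is nothing to correct.
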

With this a \textit{quantum channel} is a linear, completely positive and trace-preserving map
$T:\mathbb C^{n\times n}\to\mathbb C^{k\times k}$, also called {\sc cptp} map. Moreover one defines
\begin{align*}
Q(n,k):=\lbrace T:\mathbb C^{n\times n}\to\mathbb C^{k\times k}\, |\, T\text{ is quantum channel\,}\rbrace
\end{align*}
and $Q(n):=Q(n,n)$. The set of channels with a common fixed point $X\in\mathbb D(\mathbb C^n)$ is denoted by
$
Q_X(n):=\lbrace T\in Q(n)\,|\,T(X)=X\rbrace
$. One finds\footnote{
The positive trace-preserving maps are precisely those linear maps $T$ which satisfy $T(\mathbb D(\mathbb C^n))\subseteq\mathbb D(\mathbb C^n)$---hence this also holds for every quantum channel. As the states form a convex and compact set (Lemma \ref{lemma_pure_extreme}) by the Brouwer fixed-point theorem \cite{Brouwer11} every such $T$ has a fixed point in $\mathbb D(\mathbb C^n)$.
}
$Q(n)=\bigcup_{X\in\mathbb D(\mathbb C^n)}Q_X(n)$. 
We list the following important properties of quantum channels.
\begin{lemma}\label{lemma_qc_op_bound}
Every map $T:(\mathbb C^{n\times n},\|\cdot\|_1)\to (\mathbb C^{k\times k},\|\cdot\|_1)$ which is linear, positive and trace-preserving has operator norm $\|T\|= 1$.  
\end{lemma}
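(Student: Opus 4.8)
The plan is to establish the two inequalities $\|T\|\ge 1$ and $\|T\|\le 1$ separately. For the lower bound I would feed $T$ any state, e.g.\ a pure state $\rho=|e_1\rangle\langle e_1|$, which satisfies $\|\rho\|_1=\tr(\rho)=1$. Since $T$ is positive, $T(\rho)\ge 0$, so its trace norm coincides with its trace; and since $T$ is trace-preserving, $\|T(\rho)\|_1=\tr(T(\rho))=\tr(\rho)=1$. Hence $\|T\|\ge\|T(\rho)\|_1=1$.

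For the upper bound $\|T(A)\|_1\le\|A\|_1$ I would first treat Hermitian $A$. Writing the spectral (Jordan) decomposition $A=A_+-A_-$ with $A_\pm\ge 0$ of mutually orthogonal support gives $\|A\|_1=\tr(A_+)+\tr(A_-)$. Applying $T$, using the triangle inequality for $\|\cdot\|_1$, and then exploiting that $T(A_\pm)\ge 0$ (so their trace norm equals their trace) together with trace preservation yields
\begin{align*}
\|T(A)\|_1&\le\|T(A_+)\|_1+\|T(A_-)\|_1\\
&=\tr(T(A_+))+\tr(T(A_-))=\tr(A_+)+\tr(A_-)=\|A\|_1.
\end{align*}

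The remaining and genuinely delicate point is to pass from Hermitian inputs to arbitrary $A\in\mathbb C^{n\times n}$, since the trace norm does not split along the decomposition $A=B+iC$ into Hermitian parts. Here I would dualize: using that the spectral norm $\|\cdot\|$ is dual to $\|\cdot\|_1$ under the trace pairing, one has $\|T(A)\|_1=\sup_{\|Y\|\le 1}|\tr(Y^\conj T(A))|=\sup_{\|Y\|\le 1}|\tr(T^\dagger(Y)^\conj A)|$, where $T^\dagger$ denotes the Hilbert--Schmidt adjoint of $T$. Trace preservation of $T$ is equivalent to $T^\dagger$ being unital ($T^\dagger(\identity)=\identity$), and positivity of $T$ is equivalent to positivity of $T^\dagger$; thus $\|T(A)\|_1\le\|A\|_1$ follows once one knows $\|T^\dagger(Y)\|\le 1$ whenever $\|Y\|\le 1$, i.e.\ that the positive unital map $T^\dagger$ is a contraction in the spectral norm. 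For Hermitian $Y$ with $\|Y\|\le 1$ this is immediate from $-\identity\le Y\le\identity$ combined with positivity and unitality, giving $-\identity\le T^\dagger(Y)\le\identity$. The extension to non-Hermitian $Y$ is the main obstacle: it is exactly the content of the Russo--Dye theorem (equivalently, one invokes that the unit ball of the spectral norm is the closed convex hull of the unitaries). I expect this last step to require the most care, and it is the reason the statement cannot be reduced to the elementary Hermitian computation alone.
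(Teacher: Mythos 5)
Your proof is correct, but it is not the paper's proof: the paper disposes of this lemma in two lines, arguing the lower bound $\|T\|\geq 1$ exactly as you do (positivity plus trace preservation make $T$ a trace-norm isometry on positive semi-definite matrices) and simply \emph{citing} the upper bound $\|T\|\leq 1$ from [Wolf06, Thm.~2.1]. What you have done is reconstruct a proof of that outsourced contractivity statement, and your reconstruction follows what is essentially the standard proof of the cited result: the Hermitian case via the Jordan decomposition $A=A_+-A_-$ with orthogonal supports, and the general case by dualizing to the positive unital map $T^\dagger$ and invoking Russo--Dye. Your emphasis that the non-Hermitian inputs are the genuinely delicate point is well placed: for merely positive (not $2$-positive) maps the Kadison--Schwarz route is unavailable, and Russo--Dye or a substitute is really needed, so the lemma as stated---and as used in the paper, e.g.\ for arbitrary $X\in P$ in Theorem \ref{lemma_R_d_closed}\,(iii)---cannot be reduced to the Hermitian computation alone. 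One small caveat: your parenthetical suggestion that quoting ``the unit ball is the closed convex hull of the unitaries'' is an \emph{equivalent} way to finish is slightly loose, since after that reduction one still must show $\|T^\dagger(U)\|\leq 1$ for unitary $U$, which for a merely positive unital map needs its own argument (e.g.\ write $U=\sum_j e^{i\theta_j}P_j$ spectrally, note the $T^\dagger(P_j)\geq 0$ sum to $\identity$, and apply Cauchy--Schwarz twice); with that step included, and since in finite dimensions every contraction is an average of two unitaries by the singular value decomposition, your argument becomes fully self-contained---something the paper's citation-based proof deliberately is not.
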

\begin{proof}
On the one hand $\|T\|\leq 1$ \cite[Thm.~2.1]{Wolf06}. On the other hand, $T$ being positive and trace-preserving implies that it is a linear isometry on the positive semi-definite matrices as there trace norm and trace coincide---hence $\|T\|\geq 1$.
\end{proof}

\begin{lemma}\label{lemma_convex_subsemigroup}
The set $Q(n)$, as well as $Q_X(n)$ for arbitrary $X\in\mathbb D(\mathbb C^{n})$, forms a convex and compact semigroup with identity element $\operatorname{id}_n$.
\end{lemma}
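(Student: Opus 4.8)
The plan is to verify convexity, compactness, and the semigroup-with-unit structure separately for $Q(n)$, then obtain the $Q_X(n)$ statements by intersecting with one additional closed affine constraint. For convexity, given $T_1,T_2\in Q(n)$ and $\lambda\in[0,1]$, the map $\lambda T_1+(1-\lambda)T_2$ is linear; it is completely positive because the Choi matrix is linear in the map (Lemma \ref{lemma_choi_matrix}), so $C(\lambda T_1+(1-\lambda)T_2)=\lambda C(T_1)+(1-\lambda)C(T_2)\geq 0$; and it is trace-preserving because $\tr$ of the combination is the corresponding convex combination of traces, each equal to $\tr(A)$.

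For compactness I would use that the space of linear maps $\mathbb C^{n\times n}\to\mathbb C^{n\times n}$ is finite-dimensional, so it suffices to establish closedness and boundedness. Boundedness is immediate from Lemma \ref{lemma_qc_op_bound}, which gives $\|T\|=1$ for every channel. For closedness I would take a convergent sequence $T_m\to T$ in $Q(n)$ and show the limit is again {\sc cptp}: the assignment $T\mapsto C(T)$ is continuous (being linear), hence $C(T)=\lim_m C(T_m)$ lies in the closed cone of positive semi-definite matrices and $T$ is completely positive by Lemma \ref{lemma_choi_matrix}; trace preservation passes to the limit because $T\mapsto\big(A\mapsto\tr(T(A))-\tr(A)\big)$ is continuous and vanishes along the entire sequence.

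For the semigroup structure, composition of linear maps is linear, composition of completely positive maps is completely positive (recorded in the text for positive maps, and visible from concatenating Kraus operators), and $\tr(T_1(T_2(A)))=\tr(T_2(A))=\tr(A)$ shows the composite preserves the trace. The identity $\operatorname{id}_n$ is trivially {\sc cptp} and is a two-sided unit for composition. For $Q_X(n)$ I would write $Q_X(n)=Q(n)\cap\{T:T(X)=X\}$: the second set is the preimage of the closed singleton $\{X\}$ under the continuous evaluation $T\mapsto T(X)$, hence closed, and it is affine, hence convex, so the intersection remains convex and compact; closure under composition follows from $(T_1\circ T_2)(X)=T_1(T_2(X))=T_1(X)=X$, and $\operatorname{id}_n$ fixes $X$ and so stays the unit.

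The statement is a routine verification rather than a deep result; the only step needing care is closedness, where complete positivity has to be expressed as a manifestly closed condition---precisely what the Choi-matrix criterion of Lemma \ref{lemma_choi_matrix} supplies.
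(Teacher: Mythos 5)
Your proposal is correct and follows essentially the same route as the paper: the paper also gets boundedness from Lemma \ref{lemma_qc_op_bound} and compactness via closed-plus-bounded in finite dimensions, but it outsources convexity, the semigroup property, and closedness to the literature, and handles $Q_X(n)$ with a remark that the same argument goes through. Your write-up simply fills in those cited details with the standard Choi-matrix and intersection arguments, all of which are sound.
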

\begin{proof}
Convexity and semigroup properties of $Q(n)$ are well known \cite[Ch.~4.3]{Heinosaari12}. Boundedness is due to Lemma \ref{lemma_qc_op_bound}. Closedness was shown, e.g., in \cite[Thm.~1]{vE_dirr_semigroups}. One readily verifies that this argument still works when replacing $Q(n)$ by $Q_X(n)$. \end{proof}

\begin{proposition}\label{prop_exist_channel}
Let $A,B\in\mathbb C^{n\times n}$ be hermitian. Then the following are equivalent.
\begin{itemize}
\item[(i)] $\tr(A)=\tr(B)$ and $\|A\|_1\leq\|B\|_1$. \smallskip
\item[(ii)] There exists $T\in Q(n)$ such that $T(B)=A$.  \smallskip
\item[(iii)] There exists $T:\mathbb C^{n\times n}\to \mathbb C^{n\times n}$ linear, positive and trace-preserving such that $T(B)=A$. 
\end{itemize}
Moreover, if (i) holds and $0$ is an eigenvalue of $B$ then there exists $\psi\in\mathbb C^n$ with $\langle\psi,\psi\rangle=1$ such that $T(|\psi\rangle\langle\psi|)$ can be chosen arbitrarily from $\mathbb D(\mathbb C^n)$. 
\end{proposition}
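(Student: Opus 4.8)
The plan is to establish the cyclic chain (ii)\,$\Rightarrow$\,(iii)\,$\Rightarrow$\,(i)\,$\Rightarrow$\,(ii). The first implication is immediate, since every quantum channel is by definition linear, positive and trace-preserving. For (iii)\,$\Rightarrow$\,(i) I would read off $\tr(A)=\tr(T(B))=\tr(B)$ from trace-preservation, and then invoke Lemma~\ref{lemma_qc_op_bound}: a linear, positive, trace-preserving $T$ has operator norm $1$ with respect to the trace norm, whence $\|A\|_1=\|T(B)\|_1\leq\|T\|\,\|B\|_1=\|B\|_1$.

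All the substance lies in (i)\,$\Rightarrow$\,(ii). First I would diagonalize via the spectral theorem, writing $A=\sum_i\alpha_i|a_i\rangle\langle a_i|$ and $B=\sum_j\beta_j|b_j\rangle\langle b_j|$ in orthonormal eigenbases, so that the hypotheses become $\sum_i\alpha_i=\sum_j\beta_j$ and $\sum_i|\alpha_i|\leq\sum_j|\beta_j|$ for the eigenvalue vectors $\alpha,\beta\in\mathbb R^n$. I would then search for a measure-and-prepare channel $T(X)=\sum_j\langle b_j,Xb_j\rangle\,\sigma_j$, which is automatically {\sc cptp} for any states $\sigma_j\in\mathbb D(\mathbb C^n)$ (its Kraus operators being rank-one, cf.\ Lemma~\ref{lemma_choi_matrix}) and satisfies $T(B)=\sum_j\beta_j\sigma_j$. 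Taking the prepared states diagonal in the eigenbasis of $A$, namely $\sigma_j=\sum_i S_{ij}|a_i\rangle\langle a_i|$ with a column-stochastic $S=(S_{ij})$, the target equation $T(B)=A$ collapses to the purely classical condition $S\beta=\alpha$.

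Hence everything reduces to the classical lemma---the step I expect to be the main obstacle---that for $\alpha,\beta\in\mathbb R^n$ with $\mathbf 1^{\!\top}\alpha=\mathbf 1^{\!\top}\beta$ and $\|\alpha\|_1\leq\|\beta\|_1$ there exists a column-stochastic $S$ with $S\beta=\alpha$. I would prove this constructively by separating signs. Splitting $\beta=\beta^+-\beta^-$ and $\alpha=\alpha^+-\alpha^-$ into entrywise non-negative parts with total masses $P,N$ and $P',N'$, the hypotheses become $P-N=P'-N'$ and $P+N\geq P'+N'$, which force $\delta:=P-P'=N-N'\geq 0$. Since a positive column of $\beta$ can deposit only non-negative amounts and a negative column only non-positive ones, I would first produce non-negative ``deposit profiles'' $u,v\in\mathbb R^n$ with $u-v=\alpha$, $\sum_i u_i=P$ and $\sum_i v_i=N$; these exist via $u=\alpha^++w$, $v=\alpha^-+w$ for any $w\geq 0$ with $\sum_i w_i=\delta$. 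The columns are then filled by the explicit prescriptions $S_{ij}=u_i/P$ on the positive sources and $S_{ij}=v_i/N$ on the negative ones (and any stochastic column where $\beta_j=0$), which one checks is column-stochastic and gives $S\beta=u-v=\alpha$; the degenerate cases $P=0$ or $N=0$ collapse to a sign-definite $\alpha$ and are handled directly.

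For the ``moreover'' claim I would exploit the slack created by a vanishing eigenvalue. If $0$ is an eigenvalue of $B$, choose a unit eigenvector $\psi=:b_{j_0}$ for it; then the term $\beta_{j_0}\sigma_{j_0}=0$ drops out of $T(B)$, so in the measure-and-prepare channel above $\sigma_{j_0}$ may be replaced by any state $\tau\in\mathbb D(\mathbb C^n)$ without disturbing $T(B)=A$. Orthonormality of the eigenbasis then yields $T(|\psi\rangle\langle\psi|)=\sigma_{j_0}=\tau$, which is arbitrary.
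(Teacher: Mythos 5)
Your proposal is correct, and its overall architecture coincides with the paper's: the implications (ii)\,$\Rightarrow$\,(iii)\,$\Rightarrow$\,(i) are handled identically (trace-preservation plus Lemma~\ref{lemma_qc_op_bound}), the implication (i)\,$\Rightarrow$\,(ii) is reduced via the spectral theorem to a purely classical statement about column-stochastic matrices, the quantum channel is then assembled exactly as you do it---your measure-and-prepare map $X\mapsto\sum_j\langle b_j,Xb_j\rangle\sigma_j$ with $\sigma_j$ diagonal in the eigenbasis of $A$ is precisely the paper's diagonal channel $\tilde T$ conjugated by the eigenvector unitaries $U,V$ of $A$ and $B$---and the ``moreover'' part is obtained from the same observation, namely that the column of $S$ sitting over a zero eigenvalue of $B$ never enters $T(B)$ and can therefore prepare an arbitrary state. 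Where you genuinely depart from the paper is in the proof of the classical core lemma (existence of column-stochastic $S$ with $S\beta=\alpha$ given $\unitvector^T\alpha=\unitvector^T\beta$ and $\|\alpha\|_1\leq\|\beta\|_1$): the paper proceeds in two stages, first collapsing $\beta$ to the two-entry vector $(\unitvector^T\beta_+,-\unitvector^T\beta_-,0,\ldots,0)^T$ by an explicit column-stochastic matrix, then noting that $\alpha$ is classically majorized by this collapsed vector and invoking the Hardy--Littlewood--P\'olya-type theorem \cite[Ch.~2, Thm.~B.6]{MarshallOlkin} to obtain a doubly stochastic factor, and finally composing the two. You instead give a one-shot, fully self-contained construction: the sign-mass bookkeeping $\delta=P-P'=N-N'\geq 0$, the deposit profiles $u=\alpha^++w$, $v=\alpha^-+w$, and the explicit columns $u/P$ resp.\ $v/N$. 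Your route buys independence from the classical majorization theorem (everything is elementary and explicit, including the degenerate cases $P=0$ or $N=0$, which indeed force $\delta=0$ and a sign-definite $\alpha$); the paper's route buys brevity and makes the conceptual link to classical majorization visible, which fits the broader theme of the article. Both constructions are valid and yield the same freedom needed for the ``moreover'' statement.
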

\begin{proof}[Proof idea]
First, one constructively proves an analogous statement for real vectors and column-stochastic matrices which then can be directly lifted to the matrix case. A complete proof can be found in Appendix \ref{app_a0}. 
\end{proof}

\section{Strict Positivity}\label{sec:strictpos}

While the term ``strict positivity'' in the context of Perron-Frobenius theory refers to maps which send positive semi-definite operators to positive definite ones \cite{Farenick96,Gaubert17,Rahaman20} we want it to mean the following:

\begin{definition}[\cite{Bhatia07}, Ch.~2.2]
A linear map $T:\mathbb C^{n\times n}\to\mathbb C^{k\times k}$ is called strictly positive ({\sc sp}) if $T(X)>0$ whenever $X>0$. Moreover $T$ is called completely strictly positive ({\sc csp}) if $T\otimes\id_m$ is strictly positive for all $m\in\mathbb N$.
\end{definition}

This compares to usual positivity ({\sc p}) and complete positivity ({\sc cp}) as follows:
\begin{figure}[!htb]
\centering
\includegraphics[width=0.55\textwidth]{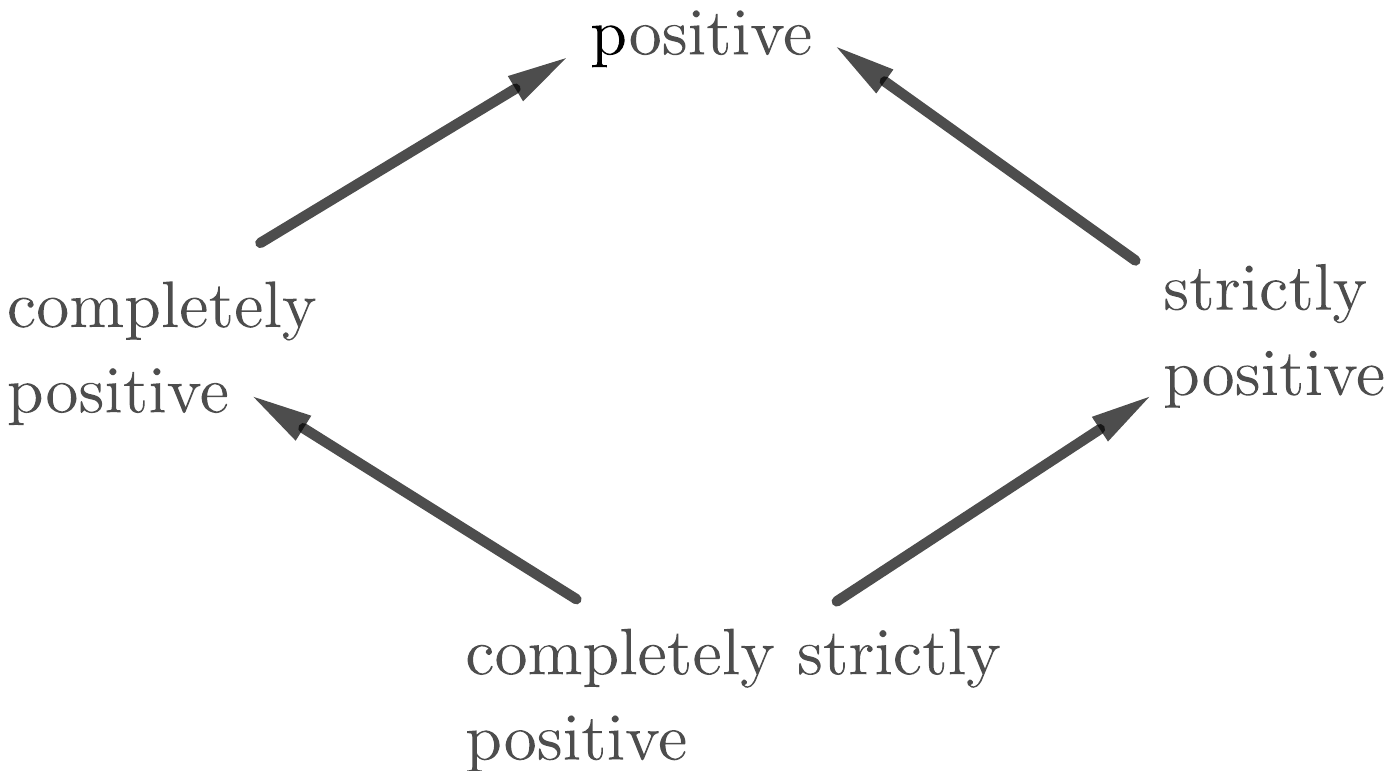}
\caption{Relation between {\sc p, cp, sp} and {\sc csp}. By definition {\sc csp} implies {\sc sp}. A simple continuity-type argument shows that {\sc sp} $\Rightarrow$ {\sc p} and {\sc csp} $\Rightarrow$ {\sc cp}. Moreover {\sc sp} and {\sc cp} are incomparable: the transposition map is obviously {\sc sp} but not {\sc cp} and, on the other hand, the trace projection $X\mapsto \tr(X)|\psi\rangle\langle\psi|$ for some pure state $\psi$ is a quantum channel \cite[Ex.~5.3]{Hayashi06} (hence {\sc cp}) but evidently not {\sc sp}, unless the dimension equals one.}
\label{fig1}
\end{figure}${}$

\subsection{Basic and Topological Properties}\label{sec:3_1}
Bhatia \cite{Bhatia07} observed that ``a positive linear map $\Phi$ is strictly positive if and only if $\Phi(\identity)>0$'' so strict positivity can be easily checked. This, however, turns out to be a mere corollary of the following stronger result: the image of positive linear maps admit a ``universal kernel'' completely characterized by any positive definite matrix.
\begin{proposition}\label{prop_positivity}
Let $T:\mathbb C^{n\times n}\to \mathbb C^{k\times k}$ linear and positive be given. For all \mbox{$X,Y,A\in\mathbb C^{n\times n}$}, $X,Y>0$ one has
\begin{equation}\label{eq:ker_T}
\operatorname{ker}(T(X))=\operatorname{ker}(T(Y))\subseteq\operatorname{ker}(T(A))\,.
\end{equation}
In particular the following are equivalent:
\begin{itemize}
\item[(i)] $T$ is strictly positive.\smallskip
\item[(ii)] $T(\identity)>0$\smallskip
\item[(iii)] There exists $X>0$ such that $T(X)>0$.\smallskip
\end{itemize}
\end{proposition}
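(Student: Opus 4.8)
The plan is to prove the kernel identity \eqref{eq:ker_T} first, since the equivalence of (i)--(iii) will follow as an easy corollary. The central idea is a \emph{domination} or \emph{sandwiching} argument exploiting positivity of $T$ together with the fact that any positive definite matrix dominates and is dominated by scalar multiples of any other positive matrix. Concretely, I would first establish the following building block: if $0 \le P \le Q$ in $\mathbb C^{n\times n}$, then $T(P) \le T(Q)$, which is immediate from linearity and positivity of $T$ applied to $Q - P \ge 0$. A standard consequence for positive semi-definite matrices is the kernel inclusion $\ker(T(Q)) \subseteq \ker(T(P))$: indeed, for any $v$ with $T(Q)v = 0$ we get $\langle v, T(Q) v\rangle = 0$, and since $0 \le T(P) \le T(Q)$ this forces $\langle v, T(P)v\rangle = 0$, whence $T(P)v = 0$ by positive semi-definiteness of $T(P)$.

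With this monotonicity in hand, I would prove the chain in \eqref{eq:ker_T} as follows. Given $X, Y > 0$, there exist constants $0 < c \le C$ with $cY \le X \le CY$ (take $c$ and $C$ from the extreme eigenvalues of $Y^{-1/2} X Y^{-1/2}$). Applying the monotonicity building block to $cY \le X$ and to $X \le CY$ yields $c\,T(Y) \le T(X) \le C\,T(Y)$, and the two resulting kernel inclusions combine to give $\ker(T(X)) = \ker(T(Y))$; this proves the first equality in \eqref{eq:ker_T}. For the inclusion into $\ker(T(A))$ with $A$ arbitrary, I would first reduce to hermitian $A$ by writing $A = \operatorname{Re}(A) + i\operatorname{Im}(A)$ with hermitian real and imaginary parts, and note it suffices to handle hermitian $A$ since $\ker(T(A)) \supseteq \ker(T(\operatorname{Re}(A))) \cap \ker(T(\operatorname{Im}(A)))$. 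For hermitian $A$, choose a scalar $\lambda > 0$ large enough that $-\lambda X \le A \le \lambda X$ (again possible by comparing eigenvalues against the positive definite $X$), so $A + \lambda X \ge 0$ and $\lambda X - A \ge 0$. Applying $T$ gives $0 \le T(A + \lambda X)$ and $0 \le T(\lambda X - A)$, and for any $v \in \ker(T(X))$ one computes $\langle v, T(A+\lambda X)v\rangle = \langle v, T(A)v\rangle$ and similarly for the other, forcing $\langle v, T(A)v\rangle = 0$ from both sides; pairing these two inequalities pins down $T(A)v = 0$, giving $\ker(T(X)) \subseteq \ker(T(A))$.

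Once \eqref{eq:ker_T} is established, the equivalences are routine. The implication (i)$\Rightarrow$(iii) is trivial by choosing $X = \identity$, and (iii)$\Rightarrow$(ii) follows because any $X>0$ satisfies $\ker(T(X)) = \ker(T(\identity))$ by \eqref{eq:ker_T}, so $T(X)>0$ (i.e.\ trivial kernel) forces $T(\identity)>0$. For (ii)$\Rightarrow$(i), suppose $T(\identity)>0$ and take any $Y>0$; then $\ker(T(Y)) = \ker(T(\identity)) = \{0\}$, and since $T(Y) \ge 0$ is positive semi-definite with trivial kernel it is in fact positive definite, so $T$ is strictly positive.

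The main obstacle I anticipate is the inclusion $\ker(T(X)) \subseteq \ker(T(A))$ for \emph{arbitrary} $A$, not the first equality between two positive definite inputs. The equality for $X,Y>0$ is a clean two-sided comparison, but for general (non-hermitian, non-positive) $A$ one cannot directly compare $A$ with $X$, which is why the reduction to hermitian parts and the two-sided bound $-\lambda X \le A \le \lambda X$ are needed. Care must be taken that the computation $\langle v, T(A)v\rangle = 0$ from a single inequality does not by itself yield $T(A)v = 0$ (as $T(A)$ need not be positive semi-definite); it is the \emph{pair} of inequalities, producing $v \in \ker(T(A+\lambda X))$ and $v \in \ker(T(\lambda X - A))$ both as genuine kernel memberships of positive semi-definite operators, whose difference then gives $T(A)v = 0$. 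Making this last deduction rigorous is the delicate point worth spelling out in full.
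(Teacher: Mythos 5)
Your proposal is correct and follows essentially the same route as the paper's proof: dominate by scalar multiples of a positive definite input, apply positivity of $T$, and use that $\langle v,Mv\rangle=0$ forces $Mv=0$ whenever $M\geq 0$. The only cosmetic difference lies in the step for arbitrary $A$ --- the paper writes $A$ as a linear combination of four positive semi-definite matrices and applies the already-proven inclusion to each, whereas you split $A$ into hermitian parts and sandwich each part via $-\lambda X\leq \operatorname{Re}(A),\operatorname{Im}(A)\leq \lambda X$; both arguments rest on the same mechanism and are equally valid.
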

\begin{proof}
Let $X\in\mathbb C^{n\times n}$ and a vector $\psi$ be given such that $X>0$ and $T(X)\psi=0$. Then for all $Y\in\mathbb C^{ n \times n }$ positive semi-definite one finds $\lambda\in\mathbb R$ such that\footnote{
E.g., choose $\lambda=\frac{y}{x}$ where $y$ is the largest eigenvalue of $Y$ and $x>0$ is the smallest eigenvalue of $X$.
}
$\lambda X-Y\geq 0$. But then $T(\lambda X-Y)\geq 0$ by positivity of $T$ so linearity shows
$$
0\leq \langle \psi,T(\lambda X-Y)\psi\rangle=-\langle \psi,T(Y)\psi\rangle\leq 0\,.
$$
Now $\|\sqrt{T(Y)}\psi\|^2=\langle \psi,T(Y)\psi\rangle=0$ is equivalent to $T(Y)\psi=0$ which shows $\ker(T(X))\subseteq\ker(T(Y))$. The case of a general $A\in\mathbb C^{n\times n}$ follows from the fact that every matrix can be written as a linear combination of four positive semi-definite matrices \cite[Coro.~4.2.4]{Kadison83} together with linearity of $T$. Finally if $Y>0$ then we can interchange the roles of $X,Y$ in the above argument to obtain $\ker(T(X))=\ker(T(Y))$.

For the second statement---while (i) $\Rightarrow$ (ii) $\Rightarrow$ (iii) is obvious---for (iii) $\Rightarrow$ (i) note that if $T(X)>0$ for some $X>0$, meaning $\ker(T(X))=\{0\}$, then the same holds for all positive definite matrices by \eqref{eq:ker_T}. 
\end{proof}

Proposition \ref{prop_positivity} shows that a fixed point of full rank guarantees a channel to be strictly positive, whereas the converse does not hold (Example \ref{example_1}).

\begin{remark}
\begin{itemize}
\item[(i)] While strict positivity tells us that one cannot leave the relative interior of all states (the invertible states), the boundary (the non-invertible states) can be either mapped onto the boundary or into the interior. The former is achieved, e.g., by every unitary channel while the latter can be done via a trace projection onto some positive definite state.\smallskip
\item[(ii)] In the generic case the inclusion in \eqref{eq:ker_T} is not an equality. Even worse there is no general statement to make about the rank regarding the image of strictly positive maps (see Example \ref{example_2}).
\end{itemize}
\end{remark}

So far we looked at {\sc sp} and {\sc cp} separately so let us study their interplay via the Kraus representation next. Unsurprisingly the "universal kernel" property from the previous proposition appears in this representation, as well, now in connection with the kernel of the Kraus operators.
\begin{lemma}
Let $T\in \mathbb C^{n\times n}\to \mathbb C^{k\times k}$ be linear and completely positive. Then the following are equivalent.
\begin{itemize}
\item[(i)] $T$ is strictly positive.\smallskip
\item[(ii)] For all sets of Kraus operators $\{K_i\}_{i\in I}$ of $T$ one has $\bigcap_{i\in I}\operatorname{ker}(K_i)=\{0\}$.\smallskip
\item[(iii)] There exist Kraus operators $\{K_i\}_{i\in I}$ of $T$ with $|I|\leq nk$ s.t.~$\bigcap_{i\in I}\operatorname{ker}(K_i)=\{0\}$.\smallskip
\end{itemize}
If one, and hence all, of these conditions hold then $T$ is completely strictly positive. 
\end{lemma}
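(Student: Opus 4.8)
The plan is to route everything through Proposition \ref{prop_positivity}: a positive map is strictly positive if and only if it maps $\identity$ to a positive definite matrix, and since a completely positive $T$ is in particular positive this characterization applies. The single computation driving the proof is the following. Fix any Kraus family $\{K_i\}_{i\in I}\subset\mathbb C^{n\times k}$ of $T$, which exists by Lemma \ref{lemma_choi_matrix}. Evaluating at the identity gives
\begin{equation*}
T(\identity)=\sum_{i\in I}K_i^\conj\identity K_i=\sum_{i\in I}K_i^\conj K_i,
\end{equation*}
so that for every $\psi\in\mathbb C^k$ one has $\langle\psi,T(\identity)\psi\rangle=\sum_{i\in I}\|K_i\psi\|^2$. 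This quantity vanishes exactly when $K_i\psi=0$ for all $i$, that is, when $\psi\in\bigcap_{i\in I}\operatorname{ker}(K_i)$. Hence $T(\identity)>0$ (no nonzero vector in the kernel) is equivalent to $\bigcap_{i\in I}\operatorname{ker}(K_i)=\{0\}$ for this family.

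Next I would assemble the three equivalences. The left-hand side $T(\identity)$ above is intrinsic to $T$ and does not depend on the chosen representation, so the computation shows that $\bigcap_{i}\operatorname{ker}(K_i)=\{0\}$ holds for one Kraus family precisely when it holds for every Kraus family, and in either case precisely when $T(\identity)>0$. Feeding this into Proposition \ref{prop_positivity} makes all of (i), (ii), (iii) equivalent to the single condition $T(\identity)>0$: item (i) is $T(\identity)>0$ by the Proposition, (ii) is the ``for all families'' reading, and (iii) is the ``there exists a family'' reading, where the bound $|I|\leq nk$ is available by Lemma \ref{lemma_choi_matrix}(iii).

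For the concluding claim that these conditions force complete strict positivity, the cleanest argument bypasses the Kraus picture. Assume (i); then $T(\identity_n)>0$ by Proposition \ref{prop_positivity}. For each $m\in\mathbb N$ the map $T\otimes\id_m$ is again completely positive, hence positive, and
\begin{equation*}
(T\otimes\id_m)(\identity_{nm})=(T\otimes\id_m)(\identity_n\otimes\identity_m)=T(\identity_n)\otimes\identity_m>0,
\end{equation*}
because the Kronecker product of two positive definite matrices is positive definite. Proposition \ref{prop_positivity} applied to the positive map $T\otimes\id_m$ then yields that it is strictly positive, and since $m$ was arbitrary, $T$ is completely strictly positive.

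I do not expect a genuine obstacle in this lemma: everything rests on the identity $\langle\psi,T(\identity)\psi\rangle=\sum_i\|K_i\psi\|^2$ together with the representation-independence of $T(\identity)$. The only point meriting a sentence of care is justifying why the kernel condition cannot depend on which Kraus operators one picks, and this is exactly what the intrinsic left-hand side of that identity delivers.
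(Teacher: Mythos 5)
Your proposal is correct and follows essentially the same route as the paper: both reduce everything to $T(\identity)>0$ via Proposition \ref{prop_positivity}, use the identity $\langle\psi,T(\identity)\psi\rangle=\sum_{i\in I}\|K_i\psi\|^2$ to get the Kraus-kernel equivalences, and establish complete strict positivity from $(T\otimes\id_m)(\identity_n\otimes\identity_m)=T(\identity_n)\otimes\identity_m$. The only cosmetic difference is in that last step, where the paper verifies positive definiteness via the Kronecker determinant formula $\det(T(\identity))^m\det(\identity_m)^k\neq 0$ (combined with positive semi-definiteness of the image), whereas you invoke directly that the Kronecker product of positive definite matrices is positive definite; both are standard and equally valid.
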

\begin{proof}
By Proposition \ref{prop_positivity} strict positivity of a positive linear map $T$ is equivalent to $T(\identity)>0$. Observe that, given any set of Kraus operators $\{K_i\}_{i\in I}$ of $T$, one has
$$
\langle \psi,T(\identity)\psi\rangle=\sum_{i\in I}\langle \psi,K_i^\conj K_i\psi\rangle=\sum_{i\in I}\|K_i\psi\|^2
$$
so $\langle\psi,T(\identity)\psi\rangle=0$ holds if and only if $\psi\in\operatorname{ker}(K_i)$ for all $i\in I$. Combining these two things readily implies the above equivalence.

For the additional statement we have to show that $T\otimes\id_m$ is strictly positive for all $m\in\mathbb N$. But again---because $T\otimes\operatorname{id}_m$ is positive by assumption---Proposition \ref{prop_positivity} states that strict positivity is equivalent to $\det((T\otimes\id_m)(\identity))\neq 0$. This holds due to
\begin{align*}
\det((T\otimes\id_m)(\identity_n\otimes\identity_m))=\det(T(\identity))^m\cdot\det(\identity_m)^k=\det(T(\identity))^m\neq 0
\end{align*}
using that $T$ is strictly positive as well as the determinant formula for the Kronecker product \cite[Sec.~4.2]{HJ2}.
\end{proof}

Thus while {\sc cp} and {\sc sp} are incomparable \textit{together} they are equivalent to {\sc csp}.

\begin{lemma}\label{lemma_app_1}
Taking {\sc p, cp} and {\sc sp} as subsets of the set of all linear maps $\mathcal L$ (with the induced subspace topology) the following statements hold. 
\begin{itemize}
\item[(i)] {\sc p} and {\sc cp} are closed, convex subsets of $\mathcal L$.\smallskip
\item[(ii)] {\sc sp} is convex and dense in {\sc p}.\smallskip
\item[(iii)] {\sc sp} is open in the subspace topology induced by {\sc p}.\smallskip
\end{itemize}
\end{lemma}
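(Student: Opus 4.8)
The plan is to reduce all three items to the structural characterisation of strict positivity from Proposition \ref{prop_positivity}, namely that a positive map $T$ is {\sc sp} precisely when $T(\identity)>0$. Since $\mathcal L$ is finite-dimensional all norms on it are equivalent, so the topology referred to is unambiguous and convergence of maps is the same as pointwise convergence.

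For (i), convexity of {\sc p} is immediate: a convex combination of positive maps sends each $A\ge 0$ to a convex combination of positive semi-definite matrices, which is again positive semi-definite. Closedness follows because if $T_m\to T$ then $T_m(A)\to T(A)$ for every fixed $A\ge 0$, and the positive semi-definite cone is closed, so $T(A)\ge 0$ and $T\in\text{{\sc p}}$. For {\sc cp} both properties come at once from Lemma \ref{lemma_choi_matrix}: the Choi map $T\mapsto C(T)$ is linear, hence continuous, and {\sc cp}$=\{T:C(T)\ge 0\}$ is the preimage of the closed convex cone of positive semi-definite matrices, thus closed and convex.

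For (ii), convexity of {\sc sp} uses that for $X>0$ and $\lambda\in(0,1)$ the matrix $\lambda T_1(X)+(1-\lambda)T_2(X)$ is a strictly-positively-weighted sum of two positive definite matrices, hence positive definite (the endpoints $\lambda\in\{0,1\}$ being trivial). For density I would fix the auxiliary map $T_0(A):=\tr(A)\identity_k$, which is {\sc sp}, and set $T_\varepsilon:=T+\varepsilon T_0$ for a given positive $T$ and $\varepsilon>0$. Then $T_\varepsilon$ is positive as a sum of positive maps, while $T_\varepsilon(\identity)=T(\identity)+\varepsilon n\,\identity_k>0$ because $T(\identity)\ge 0$; by Proposition \ref{prop_positivity} each $T_\varepsilon$ is {\sc sp}, and $T_\varepsilon\to T$ as $\varepsilon\downarrow 0$, so every positive map is a limit of strictly positive ones.

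For (iii), Proposition \ref{prop_positivity} identifies the strictly positive maps inside {\sc p} as exactly those positive $T$ with $T(\identity)>0$. The evaluation $\Phi\colon T\mapsto T(\identity)$ is linear, hence continuous, and on {\sc p} it takes values in the positive semi-definite (in particular Hermitian) matrices; since the positive definite matrices form an open subset of the Hermitian matrices, {\sc sp} is the preimage under $\Phi$ of this open set, and therefore open in the subspace topology of {\sc p}. Equivalently, as $\Phi$ lands in the positive semi-definite cone, the open condition $\det(\Phi(T))>0$ already characterises $\Phi(T)>0$. I expect the only genuinely non-routine step to be the density construction in (ii)—choosing a perturbation that is simultaneously positive, strictly positive, and arbitrarily small—whereas the remaining items reduce to continuity of linear maps combined with the open/closed nature of the positive and positive-definite cones.
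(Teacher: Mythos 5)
Your proof is correct, and its overall strategy coincides with the paper's: everything is funnelled through Proposition \ref{prop_positivity} (a positive map is {\sc sp} iff $T(\identity)>0$), closedness of {\sc p} is a pointwise-limit argument against the closed PSD cone, density comes from mixing $T$ with a fixed strictly positive map, and openness of {\sc sp} in {\sc p} comes from continuity of $T\mapsto T(\identity)$ together with the fact that, among PSD matrices, positive definiteness is the open condition $\det>0$ (the paper phrases this as closedness of the complement, via $\det(T(\identity))=\lim_m\det(T_m(\identity))=0$; your preimage formulation is the dual of the very same argument). Two of your choices genuinely differ in detail, and both are defensible improvements. First, for closedness of {\sc cp} the paper merely says the sequence argument is ``analogous'' (i.e.\ repeat it for $T\otimes\id_m$ for every $m$), whereas you invoke Lemma \ref{lemma_choi_matrix}: the Choi map $T\mapsto C(T)$ is linear, hence continuous, and {\sc cp} is the preimage of the closed convex PSD cone --- a cleaner one-step argument. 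Second, for density the paper perturbs with the identity map, using $(1-\tfrac1m)T+\tfrac1m\id$, which implicitly requires domain and codomain to coincide ($n=k$), while your perturbation $T_\varepsilon=T+\varepsilon\,\tr(\cdot)\,\identity_k$ makes sense for arbitrary $n,k$, is positive, and satisfies $T_\varepsilon(\identity)=T(\identity)+\varepsilon n\,\identity_k>0$, so Proposition \ref{prop_positivity} applies verbatim; this proves the statement in the general rectangular setting in which {\sc p}, {\sc sp} and {\sc csp} were defined. In short: your route is marginally more general and more structural, the paper's is marginally shorter; the mathematical content is the same.
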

\begin{proof}
The convexity statements are obvious so we only prove what remains. (i): Let $(T_m)_{m\in\mathbb N}$ be a sequence in {\sc p} which converges to $T\in\mathcal L$. Then for all $A\geq 0$, $\psi\in\mathbb C^n$
$$
\langle \psi,T(A)\psi\rangle=\lim_{m\to\infty} \underbrace{\langle \psi,T_m(A)\psi\rangle}_{\geq 0}\geq 0
$$
so $T(A)\geq 0$ hence $T$ is in {\sc p}. The proof for {\sc cp} is analogous. (ii): Density can be shown constructively: if $T$ is in {\sc p} then $((1-\frac1m)T+\frac1m\id)_{m\in\mathbb N}$ is a sequence in {\sc sp} which approximates $T$.
(iii): We will show, equivalently, that its complement is closed so let $(T_m)_{m\in\mathbb N}$ be a sequence in $\text{\sc p}\setminus\text{\sc sp}$ which converges to some $T\in\text{\sc p}$. Using Proposition \ref{prop_positivity} we get $\det(T(\identity))=\lim_{m\to\infty}\det(T_m(\identity))=0$ by continuity of the determinant so $T\in \text{\sc p}\setminus\text{\sc sp}$ as claimed.
\end{proof}
Be aware that {\sc sp} is not open when taken as a subset of $\mathcal L$, cf.~Example \ref{example_3new}. 
%

\subsection{Characterizations of Strict Positivity}\label{sec:nonstrpos}

The fact that every positive map ``produces'' the same kernel on \textit{all} full-rank states begs the question: what if this kernel is non-zero, that is, what is the footprint of positive maps which are not strictly positive? And how does this kernel manifest? 

It turns out that such maps---up to a unitary channel on the target system---map into a subalgebra of $\mathbb C^{k  \times k }$ the size of which is determined by their action on the identity:
\begin{theorem}\label{mainthm}
Let $T: \mathbb C^{ n \times n }\to \mathbb C^{ k \times k }$ linear and positive be given. Defining \mbox{$m:=\operatorname{dim}\operatorname{ker}(T(\identity))$} the following are equivalent.
\begin{itemize}
\item[(i)] $T$ is not strictly positive.\smallskip
\item[(ii)] There exist pairwise orthonormal vectors $\psi_1,\ldots,\psi_m\in\mathbb C^n$ such that $T(A)\psi_j=0$ as well as $\psi_j^\conj T(A)=0$ for all $A\in\mathbb C^{n\times n}$, $j=1,\ldots,m$ where $m\geq 1$.\smallskip
\item[(iii)] There exists unitary $U\in\mathbb C^{  k\times k }$ such that $\operatorname{im}(\operatorname{Ad}_{U^\conj}\circ T)\subseteq \mathbb C^{ (k-m) \times (k-m) }\oplus 0^{m\times m}$ where $m\geq 1$, that is, for all $A\in\mathbb C^{  n\times  n}$
\begin{equation}\label{eq:T_block_form}
U^\conj T(A)U=\begin{pmatrix} *&0\\0&0_m \end{pmatrix}.
\end{equation}
\item[(iv)] There exists an orthogonal projection $\pi\in\mathbb C^{k\times k}$ of rank $k-m$ where $m\geq 1$ such that \mbox{$\pi T(A)\pi=T(A)$} for all $A\in\mathbb C^{n\times n}$.
\end{itemize}
If in addition $T$ is trace-preserving then $m<k$. 
\end{theorem}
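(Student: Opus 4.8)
The plan is to prove the four-fold equivalence as a cycle (i) $\Rightarrow$ (ii) $\Rightarrow$ (iii) $\Rightarrow$ (iv) $\Rightarrow$ (i), with everything resting on Proposition \ref{prop_positivity}. That proposition already supplies the backbone: $T$ fails to be strictly positive precisely when $T(\identity)\not>0$, i.e.\ when $m=\dim\operatorname{ker}(T(\identity))\geq 1$, and it gives the universal-kernel inclusion $\operatorname{ker}(T(\identity))\subseteq\operatorname{ker}(T(A))$ for every $A\in\mathbb C^{n\times n}$. So condition (i) is just a restatement of $m\geq 1$, and the real work lies in promoting this one-sided kernel information to the two-sided, basis-independent block structure demanded by (iii)--(iv).

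For (i) $\Rightarrow$ (ii) I would fix an orthonormal basis $\psi_1,\dots,\psi_m$ of $\operatorname{ker}(T(\identity))$, which is non-empty since $m\geq 1$. The right-annihilation $T(A)\psi_j=0$ is immediate from the universal-kernel inclusion. For the left-annihilation $\psi_j^\conj T(A)=0$ I would exploit that positive maps are hermiticity-preserving, $T(A)^\conj=T(A^\conj)$; applying the universal-kernel inclusion to $A^\conj$ gives $T(A^\conj)\psi_j=0$, and taking adjoints turns this into $\psi_j^\conj T(A)=(T(A^\conj)\psi_j)^\conj=0$.

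Next, (ii) $\Rightarrow$ (iii): extend $\psi_1,\dots,\psi_m$ to an orthonormal basis of $\mathbb C^k$ and let $U$ be the unitary whose \emph{last} $m$ columns are the $\psi_j$. Reading off the entries $\langle u_i,T(A)u_j\rangle$ of $U^\conj T(A)U$, the left-annihilation kills the last $m$ rows while the right-annihilation kills the last $m$ columns, leaving exactly the block form \eqref{eq:T_block_form}. For (iii) $\Rightarrow$ (iv) I would set $\pi:=U(\identity_{k-m}\oplus 0^{m\times m})U^\conj$, an orthogonal projection of rank $k-m$, and verify $\pi T(A)\pi=T(A)$ by sandwiching the block form. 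Finally (iv) $\Rightarrow$ (i) is immediate: the relation $T(A)=\pi T(A)\pi$ forces $\operatorname{im}(T(\identity))\subseteq\operatorname{im}(\pi)$, so with $\operatorname{rank}(\pi)=k-m$ and $m\geq 1$ the matrix $T(\identity)$ has rank at most $k-m<k$ and cannot be positive definite, hence $T$ is not strictly positive.

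For the concluding remark, if $T$ is additionally trace-preserving then $\tr(T(\identity_n))=\tr(\identity_n)=n>0$, so $T(\identity)\neq 0$; a positive semi-definite matrix has full kernel only when it vanishes, whence $m=\dim\operatorname{ker}(T(\identity))<k$. I expect the main obstacle to be the passage in (i) $\Rightarrow$ (ii) from the one-sided inclusion furnished by Proposition \ref{prop_positivity} to genuine two-sided annihilation, since only the latter guarantees that the off-diagonal blocks of $U^\conj T(A)U$ vanish and not merely the lower-right block; the hermiticity-preservation of positive maps is the key that unlocks this, after which (iii) and (iv) reduce to routine change-of-basis bookkeeping.
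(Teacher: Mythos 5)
Your proposal is correct and follows essentially the same route as the paper's own proof: the same cycle (i) $\Rightarrow$ (ii) $\Rightarrow$ (iii) $\Rightarrow$ (iv) $\Rightarrow$ (i), with the universal-kernel inclusion from Proposition \ref{prop_positivity} giving the right-annihilation, hermiticity preservation ($T(A^\conj)=T(A)^\conj$) giving the left-annihilation, and the same constructions of $U$ (placing the $\psi_j$ in the last $m$ columns) and of $\pi=U(\identity_{k-m}\oplus 0_m)U^\conj$. Your spelled-out argument for (iv) $\Rightarrow$ (i) via $\operatorname{rank}(T(\identity))\leq k-m<k$ and the trace argument for $m<k$ are exactly the paper's (the latter verbatim), so there is nothing to flag.
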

\begin{proof}
(i) $\Rightarrow$ (ii): Define $\mathcal K:=\operatorname{ker}(T(\identity))$ and be aware that $\mathcal K=\bigcap_{A\in \mathbb C^{n  \times n }}{\ker}(T(A))$ by Proposition \ref{prop_positivity}. By assumption $T$ {is} not strictly positive so $T(\identity)$ is not invertible and thus $m=\operatorname{dim}\mathcal K\geq 1$. Now one finds an orthonormal basis $\{\psi_1,\ldots,\psi_m\}$ of $\mathcal K$. Thus $T(A)\psi_j=0$ as well as
$
\psi_j^\conj T(A)=(T(A^\conj )\psi_j)^\conj =0^\conj =0
$
because positivity of $T$ in particular means that $T$ preserves hermiticity.

(ii) $\Rightarrow$ (iii): Define $U$ via $e_{k+1-j}\mapsto \psi_j$ for all $j=1,\ldots,m$ and 
choose $U$ on $e_1,\ldots, e_{k-m}$ such that it becomes unitary. Then for all $j=1,\ldots, m$ and all $A$ one gets
$
U^\conj T(A)Ue_{k+1-j}=U^\conj T(A)\psi_j=0$ as well as $e_{k+1-j}^T U^\conj T(A)U=\psi_j^\conj T(A)U=0
$ which shows \eqref{eq:T_block_form}.

(iii) $\Rightarrow$ (iv): Take $\pi=U(\sum_{i=1}^{k-m}|e_i\rangle\langle e_i|)U^*=\sum_{i=1}^{k-m}|Ue_i\rangle\langle Ue_i|$. 
Finally (iv) $\Rightarrow$ (i) is obvious---this establishes the equivalence of (i) through (iv). If $T$ additionally is trace-preserving then $T(\identity)\neq 0$ so $m<k$.
\end{proof}

In other words lack of strict positivity means that the image of such maps show a ``loss of dimension''. Thus, as a direct application, if the co-domain are the $2\times 2$ matrices then the action of the map is limited to one entry (or it is zero, altogether):

\begin{corollary}\label{coro1}
Let a qudit-to-qubit channel---i.e.~$T: \mathbb C^{ n \times  n}\to\mathbb C^{ 2 \times 2 }$ {\sc cptp}---be given which is not strictly positive. Then there exists $\psi\in\mathbb C^2$, $\langle\psi,\psi\rangle=1$ such that \mbox{$T(A)=\tr(A)|\psi\rangle\langle\psi|$} for all $A$ so $T$ is the trace projection onto a pure state.
\end{corollary}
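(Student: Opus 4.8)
The plan is to feed the hypotheses directly into Theorem~\ref{mainthm} and then exploit that the qubit codomain ($k=2$) leaves almost no room for the image. First I would record that since $T$ is positive, trace-preserving and---by assumption---not strictly positive, Theorem~\ref{mainthm} applies with $k=2$ and tells us that $m:=\dim\ker(T(\identity))$ satisfies $1\le m<k=2$. Hence $m=1$, which is the crucial dimension count: the ``loss of dimension'' is maximal relative to the two-dimensional target.

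Next I would invoke the equivalence (i)$\Leftrightarrow$(iii) of Theorem~\ref{mainthm} to produce a unitary $U\in\mathbb C^{2\times 2}$ with
$$
U^\conj T(A)U=\begin{pmatrix}*&0\\0&0\end{pmatrix}
$$
for every $A$, where now the surviving top-left block is $1\times 1$ because $k-m=1$. Thus there is a linear functional $f:\mathbb C^{n\times n}\to\mathbb C$ with $U^\conj T(A)U=f(A)\,|e_1\rangle\langle e_1|$ for all $A\in\mathbb C^{n\times n}$.

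To pin down $f$ I would use trace preservation together with the unitary invariance of the trace: $f(A)=\tr(U^\conj T(A)U)=\tr(T(A))=\tr(A)$. Undoing the conjugation then gives $T(A)=\tr(A)\,U|e_1\rangle\langle e_1|U^\conj=\tr(A)\,|Ue_1\rangle\langle Ue_1|$, so setting $\psi:=Ue_1$ (a unit vector, since $U$ is unitary) yields $T(A)=\tr(A)|\psi\rangle\langle\psi|$, as claimed.

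The argument is essentially a specialization of the main theorem, so I do not anticipate a serious obstacle; the only points requiring care are the dimension count forcing $m=1$ (so that the top-left block degenerates to a single scalar) and the identification of that scalar with $\tr(A)$, for which the trace-preserving hypothesis is exactly what is needed---complete positivity plays no role in this particular corollary.
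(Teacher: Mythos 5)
Your proposal is correct and follows essentially the same route as the paper: both invoke Theorem~\ref{mainthm}~(iii) to obtain the unitary $U$ and the block form, use trace preservation (via the $m<k$ clause) to force the surviving block to be $1\times 1$, and identify the remaining scalar with $\tr(A)$ by unitary invariance of the trace. Your version merely spells out the dimension count $1\le m<2$ and the linear functional $f$ more explicitly than the paper does, and your closing observation that complete positivity is never used matches the paper's own remark immediately after the corollary.
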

\begin{proof}
By Theorem \ref{mainthm} there exists $U\in \mathbb C^{  2\times 2 }$ unitary such that 
$$
U^\conj T(A)U=\begin{pmatrix} *&0\\0&0 \end{pmatrix}
$$
for all $A\in\mathbb C^{ n \times  n}$. Because $T$ is trace-preserving $*$ has to be of size $1$ and thus is equal to $\tr(A)$. Hence $T(A)=\tr(A)|Ue_1\rangle\langle Ue_1|$ for all $A$. Choosing $\psi:=Ue_1$ concludes the proof.
\end{proof}
Of course Corollary \ref{coro1} holds not only for channels but all {\sc ptp} maps as the proof did not exploit complete positivity. Moreover the requirement of the final system being a qubit system is essential as one can, unsurprisingly, construct {\sc cptp} maps which are not strictly positive but are not a trace projection (Example \ref{example_5} (i)). 

\begin{remark}\label{rem_ch_heisenberg}
So far we analyzed channels in the Schr{\"o}dinger picture but how does the above phenomenon manifest in the Heisenberg picture where the channels are completely positive and satisfy $T(\identity)=\identity$, hence they are {\sc sp} by definition? Let $T$ be positive but not strictly positive so, using the identity $\tr(T(A)B)=\tr(AT^*(B))$ which relates a linear map and its dual, we get
\begin{align*}
\tr\big(AT^*(B)\big)=\tr\big(T(A)B\big)=\tr\big(\pi T(A)\pi B\big)=\tr\big(T(A) \pi B\pi\big)=\tr\big(AT^*(\pi B\pi)\big)
\end{align*}
with $\pi$ being the corresponding projection from Theorem \ref{mainthm} (iv). This shows that the (pre-)dual channel $T$ of a Heisenberg channel $T^*$ is not strictly positive iff there exists an orthogonal projection such that $T^*(\pi B\pi)=T^*(B)$ for all $B$---in other words $T^*$ in some basis is fully determined by a $(k-m)^2$-dimensional subspace of the input $B$. To substantiate this we refer to Example \ref{example_5} (ii). 
\end{remark}

The final result of this section is motivated by Lemma \ref{lemma_app_1}: the identity is {\sc sp} and the latter forms an open set (relative to {\sc p}) so one finds $\varepsilon>0$ such that every positive linear map $\varepsilon$-close to the identity is strictly positive as well. Now somewhat surprisingly 
using Theorem \ref{mainthm}
this $\varepsilon$ can be lower bounded by one and in the case of trace-preserving maps specified explicitly:

\begin{proposition}\label{lemma_opnorm_dist}
Let $T: (\mathbb C^{ n \times n },\|\cdot\|_1)\to(\mathbb C^{ n \times n },\|\cdot\|_1)$ be linear and positive but not strictly positive. Then
$\|T-\id\|
\geq 1$.
If $T$ additionally is trace-preserving one has $\|T-\id\|=2$ so the distance between $T$ and the identity channel is maximal.
\end{proposition}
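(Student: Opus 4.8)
The plan is to prove the lower bound by evaluating the operator norm on a single, cleverly chosen test matrix, and then to pin down the exact value in the trace-preserving case by pairing this lower bound with the operator-norm estimate from Lemma \ref{lemma_qc_op_bound}.

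First I would feed the hypothesis into Theorem \ref{mainthm}. Since $T$ is positive but not strictly positive, $m=\operatorname{dim}\operatorname{ker}(T(\identity))\geq 1$, so part (ii) of that theorem yields a unit vector $\psi\in\mathbb C^n$ with $T(A)\psi=0$ and $\psi^\conj T(A)=0$ for every $A\in\mathbb C^{n\times n}$. The natural test matrix is then the pure state $P:=|\psi\rangle\langle\psi|$, which satisfies $\|P\|_1=1$. Writing $B:=T(P)$, positivity of $T$ gives $B\geq 0$, and the choice of $\psi$ gives $B\psi=0$; equivalently, the support of $B$ lies inside $\psi^\perp$, which is orthogonal to the one-dimensional support of $P$.

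The key computation is that $B$ and $-P$ have mutually orthogonal row and column supports, so their trace norms add:
\[
\|(T-\id)(P)\|_1=\|B-P\|_1=\|B\|_1+\|P\|_1=\tr(B)+1\geq 1,
\]
where $\tr(B)\geq 0$ since $B\geq 0$. Concretely, in an orthonormal basis whose first vector is $\psi$ the Hermitian matrix $B-P$ is block diagonal of the form $\diag(-1,B')$ with $B'\geq 0$, so its eigenvalues are $-1$ together with the non-negative eigenvalues of $B'$, and the sum of their absolute values is $1+\tr(B')$. This already gives $\|T-\id\|\geq\|(T-\id)(P)\|_1\geq 1$. For the trace-preserving refinement, note that $\tr(B)=\tr(T(P))=\tr(P)=1$, which upgrades the lower bound to $\|T-\id\|\geq 2$. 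The matching upper bound is immediate from Lemma \ref{lemma_qc_op_bound}: both $T$ and $\id$ are linear, positive and trace-preserving, hence each has operator norm one, and the triangle inequality gives $\|T-\id\|\leq\|T\|+\|\id\|=2$. Combining the two bounds yields $\|T-\id\|=2$.

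The argument is short once the right test element is identified; the one point deserving care is the additivity of the trace norm in the displayed equation. This relies on $B$ and $P$ having orthogonal supports, which in turn uses both the right-kernel condition $B\psi=0$ and the fact that $B$ is Hermitian (so that $\psi^\conj B=0$ as well). I would therefore emphasize the block-diagonal reduction—rather than a generic trace-norm estimate—as the step that makes the constant come out exactly $1$ (and then exactly $2$) instead of something weaker.
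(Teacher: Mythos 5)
Your proof is correct and follows essentially the same route as the paper: both arguments test $T-\id$ on a pure state built from a vector in the universal kernel supplied by Theorem \ref{mainthm} (your $|\psi\rangle\langle\psi|$ is exactly the paper's $U|e_n\rangle\langle e_n|U^\conj$), exploit the block-diagonal structure to get additivity of the trace norm, conclude $\|(T-\id)(P)\|_1=\tr(T(P))+1\geq 1$, and in the trace-preserving case pair the resulting lower bound $2$ with the triangle-inequality upper bound from Lemma \ref{lemma_qc_op_bound}. The only cosmetic difference is that you invoke part (ii) of Theorem \ref{mainthm} (the kernel vectors directly) rather than part (iii) (the unitary block form), which slightly streamlines the bookkeeping.
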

\begin{proof}
By Theorem \ref{mainthm} one finds $U\in\mathbb C^{  n\times n }$ unitary such that \mbox{$\operatorname{im}(\operatorname{Ad}_{U^\conj}\circ T)\subseteq \mathbb C^{ (n-m) \times (n-m) }\oplus 0^{m\times m}$} where $m=\operatorname{dim}\operatorname{ker}(T(\identity))\geq 1$. Because $\|\,|Ue_n\rangle\langle Ue_n|\,\|_1=1$ we by unitary equivalence of the trace norm compute
\begin{align}
\|T-\id\|&\geq \big\|T\big( U|e_n\rangle\langle e_n|U^\conj \big)-U|e_n\rangle\langle e_n|U^\conj\big\|_1\notag\\
&=\big\|(\operatorname{Ad}_{U^\conj}\circ T)\big( U|e_n\rangle\langle e_n|U^\conj \big)-|e_n\rangle\langle e_n|\big\|_1\notag\\
&=\|*_{n-m}\,\oplus \,0_{m-1}\oplus (-1)\|_1= \|*_{n-m}\|_1+|-1|\notag\\
&=\big\|T\big(U|e_n\rangle\langle e_n|U^\conj\big)\big\|_1+1=\tr\big(T\big(U|e_n\rangle\langle e_n|U^\conj\big)\big)+1\geq 1\,.\label{eq:laststep}
\end{align}
In the last step we used that $\|A\|_1=\tr(A)$ for all $A\geq 0$ as well as positivity of $T$. If $T$ additionally is trace-preserving then \eqref{eq:laststep} is obviously equal to $2$. To show that this is also an upper bound recall that $\|T\|=\|\id\|=1$ because every {\sc ptp} map is trace norm-contractive (Lemma \ref{lemma_qc_op_bound}) hence by the triangle inequality
$
\|T-\id\|\leq\|T\|+\|\id\|=2
$
which concludes the proof.
\end{proof}
Proposition \ref{lemma_opnorm_dist} gives a necessary criterion for lack of strict positivity which, however, is not sufficient (cf.~Example \ref{example_5c}). 

\begin{remark}[Application to quantum dynamics]\label{rem_markov}
So far we learned that lack of strict positivity comes along with a loss of dimension which, when approaching from a more physical point of view, motivates the following question: given a Markovian \textit{dynamical} system\footnote{
Of course there are a number of classes of quantum-dynamical processes which are not Markovian but as these do not interact as nicely with strict positivity we will sweep them under the rug here. For those more general notions of divisibility we refer the reader to \cite{Wolf08a}.
}, that is, a family $(T_t)_{t\geq 0}$ of {\sc cptp} maps which is continuous in $t$ and satisfies $T_{t+s}=T_tT_s$ for all $s,t\geq 0$ as well as $T_0=\id$, can one determine the exact time when this dimension loss occurs, if at all?

Such processes, also called \textit{quantum-dynamical semigroups}, are necessarily of the form $T_t=e^{tL}$ with the time-independent generator $L:\mathbb C^{n\times n}\to \mathbb C^{n\times n}$ being of specific structure (``{\sc gksl}-form'') comprised of a Hamiltonian and a dissipative part \cite{GKS76,Lindblad76}.
Indeed the semigroup structure guarantees that quantum-dynamical semigroups are strictly positive at all times. 
As for a short proof: given any $t>0$ using continuity of the semigroup in $t$ one finds $m\in\mathbb N$ such that $\|T_{t/m}-T_0\|=\|T_{t/m}-\operatorname{id}\|<2$. This by Proposition \ref{lemma_opnorm_dist} implies that $T_{t/m}$ is a strictly positive channel so---because {\sc sp} forms a semigroup---$T_t=(T_{t/m})^m$ is strictly positive as well.
%
\end{remark}
The result from Remark \ref{rem_markov} also shows that (Markovian) cooling processes---or any relaxation process of semigroup structure the steady state of which is not invertible---have to take infinitely long. This is not too surprising as the dissipation 
happens exponentially in time.
\section{$D$-Matrix Majorization}\label{sec:matrix_d_maj}
Classical majorization $\prec$, that is, $x\prec y$ for $x,y\in\mathbb R^n$, originally is defined via ordering $x,y$ decreasingly and then comparing the partial sums
$
\sum\nolimits_{j=1}^kx_j^\cdot\leq\sum\nolimits_{j=1}^ky_j^\cdot
$
for all $k=1,\ldots,n-1$ as well as $\sum_{j=1}^nx_j=\sum_{j=1}^ny_j$.
Arguably the strongest characterization of majorization reads as follows: setting $\unitvector:=(1,\ldots,1)^T\in\mathbb R^n$ one has $x\prec y$ if and only if there exists a column-stochastic matrix (see footnote \ref{footnote:column_stoch}) $A\in\mathbb R^{n\times n}$ with $A\unitvector=\unitvector$ such that $Ay=x$. Such matrices $A$ are called doubly stochastic and they allow for generalizing majorization to complex vectors as well as
to arbitrary weight vectors $d\in\mathbb R_{++}^n$
where, here and henceforth, $\mathbb R_{++}^n$ denotes the collection of all vectors from $\mathbb R^n$ with strictly positive entries. Indeed the definition of $\prec$ via partial sums cannot extend beyond $\unitvector$: as soon as two entries in $d\in\mathbb R_{++}^n$ differ one loses permutation invariance and reordering the vectors $x,y$ makes a conceptual difference.
For more on 
vector majorization refer to \cite[Ch.~1 \& 2]{MarshallOlkin}. 

Now as stated already in the introduction majorization relative to a strictly positive vector 
$d\in\mathbb R_{++}^n$ 
, denoted by $x\prec_d y$, is defined via the existence of a column-stochastic matrix 
$A\in\mathbb R^{n\times n}$ with $Ad=d$ such that $x=Ay$. Inspired by our previous explanations such matrix $A\in\mathbb R^{n\times n}$
is called $d$-stochastic and setting $d=\unitvector$ recovers $\prec$. A variety of characterizations of $\prec_d$ and $d$-stochastic matrices can be found
in \cite{Joe90,vomEnde19polytope}
: 
\begin{lemma}\label{lemma_char_d_vec}
Let $d\in\mathbb R_{++}^n$ and $x,y\in\mathbb R^n$ be given. The following are equivalent.
\begin{itemize}
\item[(i)] $x\prec_dy$\smallskip
\item[(ii)] $\sum_{j=1}^n d_j \psi(\frac{x_j}{d_j})\leq \sum_{j=1}^n d_j \psi(\frac{y_j}{d_j})$ for all continuous convex functions \mbox{$\psi:D(\psi)\subseteq\mathbb R\to\mathbb R$} such that $\{\frac{x_j}{d_j}\,|\,j=1,\ldots,n\},\{\frac{y_j}{d_j}\,|\,j=1,\ldots,n\}\subseteq D(\psi)$.\smallskip
\item[(iii)] $\sum_{j=1}^n (x_j-td_j)_+\leq\sum_{j=1}^n (y_j-td_j)_+$ for all $t\in\mathbb R$ where $(\cdot)_+:=\max\{\cdot,0\}$.\smallskip
\item[(iv)] $\sum_{j=1}^n (x_j-td_j)_+\leq\sum_{j=1}^n (y_j-td_j)_+$ for all $t\in\{ \frac{x_i}{d_i},\frac{y_i}{d_i}\,|\,i=1,\ldots,n \}$.\smallskip
\item[(v)] $\|x-td\|_1\leq\|y-td\|_1$ 
for all $t\in\mathbb R$ with $\|\cdot\|_1$ being the usual vector $1$-norm.\smallskip
\item[(vi)] $\unitvector^Tx=\unitvector^Ty$ and $\|x-\frac{y_i}{d_i}d\|_1\leq\|y-\frac{y_i}{d_i}d\|_1$ for all $i=1,\ldots,n $.
\end{itemize}
\end{lemma}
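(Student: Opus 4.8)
The plan is to prove the six conditions equivalent as a strongly connected web of implications, with $\text{(i)}\Leftrightarrow\text{(ii)}$ as the conceptual heart and items (iii)--(vi) as successive real-variable reformulations of one scalar inequality. The mass constraint $\unitvector^Tx=\unitvector^Ty$ (which is forced by (i) since column-stochasticity gives $\unitvector^TA=\unitvector^T$, is visible in (ii) via affine $\psi$, and is listed outright in (vi)) travels with the relation, and I would carry it along explicitly since it is exactly what makes the hinge-sum and trace-norm formulations coincide.

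First I would prove $\text{(i)}\Rightarrow\text{(ii)}$ by Jensen's inequality. If $x=Ay$ with $A$ being $d$-stochastic, then $Ad=d$ yields $d_i=\sum_jA_{ij}d_j$, so each $\tfrac{x_i}{d_i}=\sum_j\tfrac{A_{ij}d_j}{d_i}\cdot\tfrac{y_j}{d_j}$ is a convex combination of the numbers $\tfrac{y_j}{d_j}$ because the weights $\tfrac{A_{ij}d_j}{d_i}$ are nonnegative and sum to one. Applying convexity of $\psi$, multiplying by $d_i$, summing over $i$, and using column-stochasticity $\sum_iA_{ij}=1$ collapses the right-hand side to $\sum_jd_j\psi(\tfrac{y_j}{d_j})$, which is (ii); taking $\psi(s)=\pm s$ records $\unitvector^Tx=\unitvector^Ty$. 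Then $\text{(ii)}\Rightarrow\text{(iii)}$ is the single choice $\psi(s)=(s-t)_+$, since $d_j(\tfrac{x_j}{d_j}-t)_+=(x_j-td_j)_+$, while $\text{(iii)}\Rightarrow\text{(iv)}$ and $\text{(v)}\Rightarrow\text{(vi)}$ are trivial restrictions of the admissible $t$.

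Next I would close the finite reductions and pass to the trace norm. Set $h(t):=\sum_j(y_j-td_j)_+-\sum_j(x_j-td_j)_+$; this is piecewise linear with breakpoints contained in $\{\tfrac{x_i}{d_i},\tfrac{y_i}{d_i}\}$, is constant on the two unbounded rays (identically $0$ past all breakpoints, identically $\unitvector^Ty-\unitvector^Tx$ below them), and is affine between consecutive breakpoints, so nonnegativity at the breakpoints forces $h\ge 0$ everywhere, which is $\text{(iv)}\Rightarrow\text{(iii)}$. For $\text{(vi)}\Rightarrow\text{(v)}$ I would exploit that $\sum_j(y_j-td_j)_+$ is \emph{convex} and affine between consecutive $y$-breakpoints, so $h$ is \emph{concave} on each such interval and attains its minimum at the endpoints $t=\tfrac{y_i}{d_i}$; together with $\unitvector^Tx=\unitvector^Ty$ (which kills the left-hand limit) this shows that testing only $t\in\{\tfrac{y_i}{d_i}\}$ suffices. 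The bridge $\text{(iii)}\Leftrightarrow\text{(v)}$ is then the pointwise identity $|a|=2a_+-a$, giving $\|x-td\|_1=2\sum_j(x_j-td_j)_+-(\unitvector^Tx-t\,\unitvector^Td)$; subtracting the analogous identity for $y$ and invoking $\unitvector^Tx=\unitvector^Ty$ cancels the affine remainder and turns (v) into (iii) verbatim.

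The substantial step, and the one I expect to be the \textbf{main obstacle}, is closing the loop with $\text{(v)}\Rightarrow\text{(i)}$: manufacturing an actual $d$-stochastic matrix $A$ with $Ay=x$ out of the family of scalar inequalities. This is the $d$-weighted analogue of the Hardy--Littlewood--P\'olya/Rado theorem, and the inequalities by themselves hand over no matrix. I would phrase it as a linear-feasibility problem in the unknown $A$ -- the constraints $A\ge 0$ entrywise, $\sum_iA_{ij}=1$, $Ad=d$, and $Ay=x$ are all affine -- and invoke a theorem of alternatives (Farkas/Gale). Infeasibility would then produce dual multipliers which, once assembled into a single threshold $t$, contradict the corresponding inequality $\|x-td\|_1\le\|y-td\|_1$ from (v); hence the system is feasible and (i) holds. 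The delicate points are verifying that the dual certificate really collapses to a single functional of the form $z\mapsto\|z-td\|_1$ rather than a more general one, and placing the mass constraint so that $\unitvector^Tx=\unitvector^Ty$ is available exactly where it is needed. A more hands-on alternative would build $A$ explicitly by composing elementary $d$-stochastic ``$d$-transforms'' (weighted Robin-Hood moves fixing $d$) in an induction that strictly decreases the number of indices at which $\tfrac{x_i}{d_i}$ and $\tfrac{y_i}{d_i}$ disagree; there the obstacle shifts to checking that each transform preserves both $Ad=d$ and all the inequalities in (v).
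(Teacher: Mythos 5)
A preliminary remark: the paper itself does not prove this lemma---it is quoted from the references cited immediately before it (Joe 1990 and the $d$-majorization polytope paper)---so there is no in-paper argument to compare routes with; your proposal has to be judged on its own terms. Most of your individual reductions are sound: Jensen's inequality with weights $A_{ij}d_j/d_i$ for (i)$\Rightarrow$(ii), the choice $\psi=(\cdot-t)_+$ for (ii)$\Rightarrow$(iii), the breakpoint argument for (iv)$\Rightarrow$(iii), and the concavity-between-$y$-breakpoints argument for (vi)$\Rightarrow$(v). The genuine gap sits exactly where you bridge (iii) to (v) by ``invoking $\unitvector^Tx=\unitvector^Ty$'': that equality is not available there, and in fact cannot be derived, because (iii) and (iv) as stated do not imply it. Take $x=0$ and $y=d$ (already $n=1$, $d=y=1$, $x=0$ suffices): then $\sum_j(x_j-td_j)_+=(-t)_+\,\unitvector^Td\leq(1-t)_+\,\unitvector^Td=\sum_j(y_j-td_j)_+$ for every $t\in\mathbb R$, so (iii) and (iv) hold, yet (i), (ii), (v) and (vi) all fail, since any column-stochastic $A$ forces $\unitvector^TAy=\unitvector^Ty\neq\unitvector^Tx$. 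The hinge conditions alone characterize weak sub-$d$-majorization; letting $t\to-\infty$ in (iii) yields only $\unitvector^Tx\leq\unitvector^Ty$, never the reverse. So your plan to ``carry the mass constraint along'' works for (i), (ii), (v), (vi)---each of which does imply it (for (v) via the $t\to\pm\infty$ asymptotics, so (v)$\Rightarrow$(vi) is not quite a ``trivial restriction'' either, though it is easy)---but not for (iii)/(iv), and your cycle cannot be closed. This is really a defect of the statement as transcribed: (iii) and (iv) must include $\unitvector^Tx=\unitvector^Ty$, consistently with (vi) and with the cited sources; with that emendation this part of your argument goes through.

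The second gap is the one you flag yourself: (v)$\Rightarrow$(i) remains a strategy, and your own worry about it is justified. The Farkas certificate for infeasibility of $\{A\geq 0,\ \unitvector^TA=\unitvector^T,\ Ad=d,\ Ay=x\}$ does \emph{not} collapse to a single functional $z\mapsto\|z-td\|_1$. Eliminating the multipliers of the column-sum constraints, infeasibility yields $\lambda,\nu\in\mathbb R^n$ with $\sum_j d_j\chi(y_j/d_j)>\sum_j d_j\chi(x_j/d_j)$, where $\chi(s)=\min_i(\nu_i s+\lambda_i)$ is a \emph{general} concave piecewise-linear function; the natural contradiction is therefore with (ii) (applied to the convex function $\psi=-\chi$), not with any single inequality from (v). To finish from the hinge conditions you need the additional, and in your sketch missing, idea of decomposing a piecewise-linear convex $\psi$ as an affine function plus a nonnegative combination of finitely many hinges $(\cdot-t_k)_+$: the affine part contributes equally to both sides precisely because $\unitvector^Tx=\unitvector^Ty$, and each hinge is controlled by (iii)/(iv). (Your alternative Robin-Hood induction can also be completed, along the lines of the cited literature.) Either route works, but as written the central implication of the lemma is not actually proven.
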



The $1$-norm characterization given in (vi) allows to rewrite the $d$-majorization polytope $M_d(y):=\{x\in\mathbb R^n\,|\,x\prec_d y\}$ for any $y\in\mathbb R^n$ as the set of solutions to a nicely structured vector inequality $Mx\leq b$, $M\in\mathbb R^{2^n\times n}$. This 
description of $d$-majorization enables a proof of the existence of an extremal point $z\in M_d(y)$ such that $M_d(y)\subseteq M_e(z)$, that is, there exists some extreme point $z\prec_d y$ which classically majorizes all $x\in M_d(y)$. 
As a recent application $\prec_d\,$ is suitable to analyse and upper bound reachable sets in certain toy models inspired by coupling quantum control systems to a bath of finite temperature \cite{MTNS2020_1}. 

\subsection{Motivation and Definition}\label{subsec:1}
Moving to the matrix case, classical majorization on the level of hermitian matrices uses their ``eigenvalue vector'' $\vec\lambda(\cdot)$ arranged in any order with multiplicities counted.
More precisely for $A,B\in \mathbb C^{n\times n}$ hermitian, $A$ is said to be majorized by $B$ if \mbox{$\vec\lambda(A)\prec\vec\lambda(B)$}, cf. \cite{Ando89}. The most na{\"i}ve approach to define \mbox{$D$-majorization} on matrices would be to replace $\prec$ by $\prec_d$ and leave the rest as it is. However just as in the vector case such a definition depends on the eigenvalues' arrangement in $\vec\lambda$ which is infeasible due to the lack of permutation invariance of $d$---unless, of course, $d=\unitvector$. 
The most natural way out of this dilemma is to characterize classical majorization on matrices via quantum channels which have the identity matrix 
as a fixed point \cite[Thm.~7.1]{Ando89}:
\begin{lemma}\label{thm_ando_7_1}
Let $A,B\in\mathbb C^{n\times n}$ be hermitian. The following are equivalent.
\begin{itemize}
\item[(i)] $A\prec B$, that is, $\vec\lambda(A)\prec\vec\lambda(B)$.\smallskip
\item[(ii)] There exists $ T\in Q(n)$ such that $T(B)=A$, $ T(\identity_n)=\identity_n$.\smallskip
\item[(iii)] There exists $ T:\mathbb C^{n\times n}\to \mathbb C^{n\times n}$ linear and {\sc ptp} such that $T(B)=A$, $ T(\identity_n)=\identity_n$.
\end{itemize}
\end{lemma}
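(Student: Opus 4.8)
The plan is to establish the cycle of implications (ii) $\Rightarrow$ (iii) $\Rightarrow$ (i) $\Rightarrow$ (ii). The first of these is immediate: every quantum channel is in particular positive and trace-preserving, so an element of $Q(n)$ which fixes $\identity_n$ and sends $B$ to $A$ is automatically a {\sc ptp} map with the same two properties. The substance therefore lies in the remaining two implications, and the unifying idea is to translate back and forth between the map $T$ and a genuine doubly stochastic matrix acting on the eigenvalue vectors.

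For (iii) $\Rightarrow$ (i) I would diagonalize both matrices, writing $A=\sum_i\alpha_i|u_i\rangle\langle u_i|$ and $B=\sum_j\beta_j|v_j\rangle\langle v_j|$ with $\{u_i\}$ and $\{v_j\}$ orthonormal bases and $\vec\lambda(A)=(\alpha_i)_i$, $\vec\lambda(B)=(\beta_j)_j$. The key object is the matrix $S\in\mathbb R^{n\times n}$ with entries $S_{ij}:=\langle u_i,T(|v_j\rangle\langle v_j|)u_i\rangle$. Positivity of $T$ gives $S_{ij}\geq 0$; trace-preservation gives the column sums $\sum_i S_{ij}=\tr(T(|v_j\rangle\langle v_j|))=\tr(|v_j\rangle\langle v_j|)=1$; and unitality $T(\identity_n)=\identity_n$ together with $\sum_j|v_j\rangle\langle v_j|=\identity_n$ gives the row sums $\sum_j S_{ij}=\langle u_i,T(\identity_n)u_i\rangle=1$. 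Hence $S$ is doubly stochastic. Finally $\sum_j S_{ij}\beta_j=\langle u_i,T(B)u_i\rangle=\langle u_i,Au_i\rangle=\alpha_i$, i.e.\ $\vec\lambda(A)=S\,\vec\lambda(B)$, so by the doubly stochastic characterization of majorization recalled above (equivalently Hardy--Littlewood--P{\'o}lya) we get $\vec\lambda(A)\prec\vec\lambda(B)$.

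For (i) $\Rightarrow$ (ii) I would run this construction in reverse, realizing the doubly stochastic matrix by an explicit channel. From $\vec\lambda(A)\prec\vec\lambda(B)$ there is a doubly stochastic $S$ with $\vec\lambda(A)=S\,\vec\lambda(B)$, and by Birkhoff's theorem $S=\sum_k p_kP_k$ is a convex combination of permutation matrices $P_k$. Diagonalizing $B=V\diag(\vec\lambda(B))V^\conj$ and $A=U\diag(\vec\lambda(A))U^\conj$, I would compose four manifestly unital {\sc cptp} maps: the unitary conjugation $X\mapsto V^\conj XV$ (sending $B$ to $\diag(\vec\lambda(B))$), the dephasing channel $Y\mapsto\sum_i|e_i\rangle\langle e_i|Y|e_i\rangle\langle e_i|$ (which fixes diagonal matrices), the mixed-permutation channel $Z\mapsto\sum_k p_k\Pi_kZ\Pi_k^\conj$ with $\Pi_k$ the permutation unitary realizing $P_k$ (which acts as $\diag(v)\mapsto\diag(Sv)$ on diagonals and hence sends $\diag(\vec\lambda(B))$ to $\diag(\vec\lambda(A))$), and finally $W\mapsto UWU^\conj$. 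Each factor is completely positive, trace-preserving and unital, so the composite $T$ lies in $Q(n)$, satisfies $T(\identity_n)=\identity_n$, and by construction $T(B)=A$.

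I expect the main obstacle to be the construction in (i) $\Rightarrow$ (ii): one must produce a map that is simultaneously completely positive, trace-preserving \emph{and} unital, whereas the doubly stochastic data only constrain the action on eigenvalues. The reduction to the diagonal case is what makes this tractable, since it lets Birkhoff's theorem supply the permutations and confines all the ``mixing'' to a convex combination of permutation-unitary channels, each evidently {\sc cptp} and unital. By contrast the implication (iii) $\Rightarrow$ (i) is short once one spots that the bilinear expression $\langle u_i,T(|v_j\rangle\langle v_j|)u_i\rangle$ is exactly a doubly stochastic matrix; notably it uses only positivity (not complete positivity) of $T$, which is precisely why the weaker hypothesis in (iii) already suffices and the {\sc cp} assumption in (ii) buys nothing for this reachability question.
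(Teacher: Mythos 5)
Your proof is correct. The paper itself gives no proof of this lemma---it is quoted directly from Ando's work (the cited Thm.~7.1)---and your argument is exactly the standard one behind that citation: extract the doubly stochastic matrix $S_{ij}=\langle u_i,T(|v_j\rangle\langle v_j|)u_i\rangle$ from any positive, trace-preserving, unital map (using positivity for the entries, trace-preservation for the column sums, unitality for the row sums), and conversely realize a doubly stochastic matrix via Birkhoff's theorem as a convex mixture of permutation-unitary channels sandwiched between the diagonalizing unitaries of $A$ and $B$. One minor remark: the dephasing factor in your composite is redundant, since $V^\conj BV$ is already diagonal and the mixed-permutation channel alone carries it to $\diag(\vec\lambda(A))$; it is harmless, but can be dropped.
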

Therefore it seems utmost reasonable to generalize majorization on square matrices as follows.
\begin{definition}\label{defi_matrix_D_maj}
Let $D\in\mathbb C^{n\times n}$ positive definite and $A,B\in\mathbb C^{n\times n}$. Now $A$ is 
said to be \textit{$D$-majorized} by $B$, denoted by $A\prec_D B$, if there exists $T\in Q(n)$ such that $T(B)=A$ 
and $T(D)=D$.
\end{definition}


Now Lemma \ref{lemma_convex_subsemigroup} implies that $\prec_D$ admits a convex structure.
\begin{lemma}\label{lemma_conv_d_maj_matrices}
Let $A,B,C,D\in\mathbb C^{n\times n}$ with $D>0$ be given. If $A\prec_D C$ and $B\prec_D C$, then $\lambda A+(1-\lambda)B\prec_D C$ for all $\lambda\in [0,1]$.
\end{lemma}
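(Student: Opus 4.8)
The plan is to exploit the convex structure of the channel set supplied by Lemma~\ref{lemma_convex_subsemigroup} together with linearity of the maps involved. Unpacking Definition~\ref{defi_matrix_D_maj}, the hypotheses $A\prec_D C$ and $B\prec_D C$ furnish channels $T_1,T_2\in Q(n)$ with $T_1(C)=A$, $T_2(C)=B$ and $T_1(D)=T_2(D)=D$. The obvious candidate witnessing $\lambda A+(1-\lambda)B\prec_D C$ is the convex combination $T:=\lambda T_1+(1-\lambda)T_2$, and the whole proof amounts to checking that $T$ does the job.

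First I would confirm that $T$ is again a quantum channel: since $Q(n)$ is convex by Lemma~\ref{lemma_convex_subsemigroup} and $\lambda\in[0,1]$, membership $T\in Q(n)$ is immediate. Next I would verify the two defining properties of $\prec_D$ by linearity, namely
\begin{align*}
T(C)&=\lambda T_1(C)+(1-\lambda)T_2(C)=\lambda A+(1-\lambda)B,\\
T(D)&=\lambda T_1(D)+(1-\lambda)T_2(D)=\lambda D+(1-\lambda)D=D,
\end{align*}
so that $T$ fixes $D$ and sends $C$ to the desired convex combination. Invoking Definition~\ref{defi_matrix_D_maj} once more then yields $\lambda A+(1-\lambda)B\prec_D C$.

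Since every step is a direct verification, I do not expect a genuine obstacle here; the entire content is the observation that both constraints defining $\prec_D$---membership in $Q(n)$ and the fixed-point equation $T(D)=D$---are preserved under convex combinations, the former by convexity of $Q(n)$ and the latter by linearity. One could phrase this slightly more structurally by noting that a channel fixes $D$ if and only if it fixes the state $D/\tr(D)$ (again by linearity), so the relevant set of witnesses is precisely $Q_{D/\tr(D)}(n)$, whose convexity is part of Lemma~\ref{lemma_convex_subsemigroup}; however, convexity of $Q(n)$ on its own already suffices for the argument above.
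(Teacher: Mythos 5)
Your proof is correct and matches the paper's approach: the paper gives no separate argument but simply observes that the lemma follows from the convexity of the channel set in Lemma~\ref{lemma_convex_subsemigroup}, which is exactly what you spell out by forming $\lambda T_1+(1-\lambda)T_2$ and checking linearity. Your closing remark that fixing $D$ is equivalent to fixing the state $D/\tr(D)$, so the witnesses form the convex set $Q_{D/\tr(D)}(n)$, is precisely the intended reading.
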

\begin{remark}[Semigroup majorization]\label{rem_semigroup}
Comparing to \cite[Def.~4.6]{Parker96} $\prec_D$ coincides with the semigroup majorization induced by $Q_D(n)$, a concept also treated in \cite[Ch.~14.C]{MarshallOlkin}. Moreover as $Q_D(n)$ is convex and compact (Lemma \ref{lemma_convex_subsemigroup}) one can consider the set $Q^E_D(n)$ of extreme points of $Q_D(n)$ which can be abstractly characterized using, e.g., \cite[Thm.~5]{Choi75} or \cite[Thm.~4]{Verstraete}. Now for any $A,B\in\mathbb C^{n\times n}$, $A\prec_D B$ holds if and only if $A$ lies in the convex hull of the set $\lbrace T(B)\,|\,T\in Q^E_D(n)\rbrace$, cf. \cite[Ch.~14, Obs.C2.(iii)]{MarshallOlkin}.
\end{remark}
Be aware that there is a global unitary degree of freedom here: given square matrices $A,B,D$ with $D>0$ and any unitary transformation $U$ one has $A\prec_D B$ if and only if $UAU^*\prec_{UDU^*} UBU^*$, with the proof being a straightforward computation. This we use to require w.l.o.g.~that $D=\diag(d)$ for some strictly positive vector $d\in\mathbb R_{++}^n$. Now the relation between $\prec_D$ and $\prec_d$ 
reads as follows.
\begin{corollary}\label{coro_d_maj_diag}
Let $d\in\mathbb R_{++}^n$ and matrices $A,B\in\mathbb C^{n\times n}$ be given. Then the following statements hold.
\begin{itemize}
\item[(i)] If $A$ is diagonal and $(\langle e_j,Ae_j\rangle)_{i=1}^n\prec_d (\langle e_j,Be_j\rangle)_{i=1}^n$, then $A\prec_D B$.\smallskip
\item[(ii)] If $B$ is diagonal and $A\prec_D B$, then $(\langle e_j,Ae_j\rangle)_{i=1}^n\prec_d (\langle e_j,Be_j\rangle)_{i=1}^n$.
\end{itemize}
Additionally, if $A,B$ both are diagonal then $A\prec_D B$ holds if and only if the diagonal of $B$ $d$-majorizes that of $A$, that is, $(\langle e_j,Ae_j\rangle)_{j=1}^n\prec_d (\langle e_j,Be_j\rangle)_{j=1}^n$.
\end{corollary}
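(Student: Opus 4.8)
The plan is to exploit the correspondence between $d$-stochastic matrices on $\mathbb R^n$ and $D$-preserving channels on $\mathbb C^{n\times n}$, the dictionary between the two being the diagonal. Throughout I assume $D=\diag(d)$ (justified by the global unitary freedom noted just before the corollary) and abbreviate the diagonals as $a=(\langle e_j,Ae_j\rangle)_{j=1}^n$ and $b=(\langle e_j,Be_j\rangle)_{j=1}^n$. Both implications then amount to passing between the channel characterization of $\prec_D$ (Definition \ref{defi_matrix_D_maj}) and the $d$-stochastic-matrix characterization of $\prec_d$.

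For (i) I would start from $a\prec_d b$, which by definition supplies a column-stochastic $S\in\mathbb R^{n\times n}$ with $Sd=d$ and $Sb=a$, and upgrade it to a channel. The natural candidate is the ``dephase-then-classically-mix'' map
\[
T(X):=\sum_{i,j=1}^n S_{ij}\,\langle e_j,Xe_j\rangle\,|e_i\rangle\langle e_i|\,,
\]
which carries the Kraus operators $K_{ij}=\sqrt{S_{ij}}\,|e_j\rangle\langle e_i|$ (well-defined since $S_{ij}\geq 0$) and hence lies in $Q(n)$ by Lemma \ref{lemma_choi_matrix}; trace preservation follows at once from $\sum_i S_{ij}=1$. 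Reading off the diagonal, $T(B)=\sum_i (Sb)_i|e_i\rangle\langle e_i|=\diag(a)=A$ (using that $A$ is diagonal) and $T(D)=\sum_i (Sd)_i|e_i\rangle\langle e_i|=\diag(d)=D$, so $A\prec_D B$.

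For (ii) I would run the correspondence backwards: given $T\in Q(n)$ with $T(B)=A$ and $T(D)=D$, define $S_{ij}:=\langle e_i,T(|e_j\rangle\langle e_j|)e_i\rangle$. Positivity of $T$ gives $S_{ij}\geq 0$, trace preservation gives $\sum_i S_{ij}=\tr(T(|e_j\rangle\langle e_j|))=1$, and $T(D)=D$ gives $(Sd)_i=\langle e_i,T(D)e_i\rangle=d_i$ since $\sum_j d_j|e_j\rangle\langle e_j|=D$; thus $S$ is $d$-stochastic. As $B=\diag(b)$ we have $\sum_j b_j|e_j\rangle\langle e_j|=B$, so $(Sb)_i=\langle e_i,T(B)e_i\rangle=\langle e_i,Ae_i\rangle=a_i$, i.e.\ $Sb=a$ and hence $a\prec_d b$. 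The final ``additionally'' claim is then immediate: the forward implication is (ii) and the reverse is (i), both applicable since $A$ and $B$ are simultaneously diagonal.

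I do not expect a genuine obstacle, but two points deserve care. First, one must respect the paper's Kraus convention $T(\cdot)=\sum_i K_i^\conj(\cdot)K_i$ when writing down the $K_{ij}$, and confirm trace preservation in the corresponding form (equivalently, just check $\tr(T(X))=\tr(X)$ directly). Second, and more conceptually, the asymmetry between the hypotheses---(i) requires $A$ diagonal while (ii) requires $B$ diagonal---should be flagged: the extraction $S_{ij}=\langle e_i,T(|e_j\rangle\langle e_j|)e_i\rangle$ only sees the diagonal of the \emph{input}, so $B$ must be diagonal for $Sb=a$ to encode $T(B)$, whereas the synthesized $T$ only produces a diagonal \emph{output}, so $A$ must be diagonal for $T(B)=A$ to hold exactly. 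This is precisely why neither implication can dispense with its respective diagonality assumption.
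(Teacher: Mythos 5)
Your proposal is correct and takes essentially the same route as the paper: for (i) you construct the identical ``dephase-then-classically-mix'' channel (the paper's map \eqref{eq:T}), and for (ii) you extract the identical matrix $S_{ij}=\langle e_i,T(|e_j\rangle\langle e_j|)e_i\rangle$, with the additionally claim obtained by combining the two just as in the paper. The only cosmetic difference is that you certify complete positivity via explicit Kraus operators $K_{ij}=\sqrt{S_{ij}}\,|e_j\rangle\langle e_i|$ whereas the paper checks that the Choi matrix is diagonal with non-negative entries---both are instances of Lemma \ref{lemma_choi_matrix}.
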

\begin{proof}
We only prove (i) and (ii) because the additional statement directly follows from those. For convenience let $a_{j}:=\langle e_j,Ae_j\rangle$, $b_{j}:=\langle e_j,Be_j\rangle$ for all $j=1,\ldots,n$ as well as $a:=(a_{j})_{j=1}^n$, $b:=(b_{j})_{j=1}^n\in\mathbb C^n$.

(i): By assumption there exists $d$-stochastic $M\in\mathbb R^{n\times n}$ with $a=Mb$. Define $T$ via
\begin{align}
T:\mathbb C^{n\times n}&\to \mathbb C^{n\times n}\notag\\
|e_i\rangle\langle e_j|&\mapsto\begin{cases}
0&\text{ if }i\neq j\\
\sum\nolimits_{k=1}^n M_{ki} |e_k\rangle\langle e_k|&\text{ if }i= j
\end{cases}\label{eq:T}
\end{align}
and its linear extension onto all of $\mathbb C^{n\times n}$. Thus for any $X\in \mathbb C^{n\times n}$
\begin{align*}
T(X)=\sum_{i,j=1}^n X_{ij}T(|e_i\rangle\langle e_j|)=\sum_{i,k=1}^n X_{ii}M_{ki} |e_k\rangle\langle e_k|=\sum_{k=1}^n \Big(\sum_{i=1}^n X_{ii}M_{ki}  \Big)  |e_k\rangle\langle e_k|
\end{align*}
where $X_{ij}=\langle e_i,Xe_j\rangle$. Thus $M$ being column-stochastic implies that $T$ is trace-preserving.
Moreover the Choi matrix of $T$ is diagonal with non-negative entries by \eqref{eq:T} so $C(T)\geq 0$ which shows $T$ is completely positive by Lemma \ref{lemma_choi_matrix}.
All that is left to check now is $T(B)=A$ and $T(D)=D$. Indeed
\begin{align*}
T(B)=\sum\nolimits_{j=1}^n b_j T( |e_j\rangle\langle e_j|)&=\sum\nolimits_{i=1}^n\Big(\sum\nolimits_{j=1}^nM_{ij}b_j\Big) |e_i\rangle\langle e_i|\\
&=\sum\nolimits_{i=1}^n(Mb)_i |e_i\rangle\langle e_i|=\sum\nolimits_{i=1}^na_i |e_i\rangle\langle e_i|=A
\end{align*}
and
\begin{align*}
T(D)=\sum_{j=1}^n d_j T( |e_j\rangle\langle e_j|)=\sum_{i=1}^n\Big(\sum_{j=1}^nM_{ij}d_j\Big) |e_i\rangle\langle e_i|=\sum_{i=1}^n\underbrace{(Md)_i}_{=d_i} |e_i\rangle\langle e_i|=D\,.
\end{align*}

(ii): Given $T\in Q_D(n)$ with $A=T(B)$ define
$
M=\big(\langle e_i,T( |e_j\rangle\langle e_j|)e_i\rangle\big)_{i,j=1}^n\in\mathbb R^{n\times n}\,.
$
As above one verifies that $M$ is $d$-stochastic 
and
\begin{equation}
Mb=\sum_{i=1}^n (Mb)_i \,e_i=\sum_{i,j=1}^n \langle e_i,T( |e_j\rangle\langle e_j|)e_i\rangle b_j \,e_i = \sum_{i=1}^n \langle e_i,T(B)e_i\rangle\,e_i =a\,.\tag*{\qedhere}
\end{equation}
\end{proof}

\subsection{Characterizing $\prec_D$ for Hermitian Matrices}\label{subsec:1b}
As explained in the previous chapter, for the rest of this article we w.l.o.g.~require the positive definite matrix $D\in\mathbb R^{n\times n}$  to be diagonal, i.e.~$D=\diag(d)$ for some $d\in\mathbb R_{++}^n$.
\begin{proposition}\label{thm_char}
Let $d\in\mathbb R_{++}^2$ and $A,B\in \mathbb C^{2\times 2}$ hermitian be given. Then the following statements are equivalent.
\begin{itemize}
\item[(i)] $A\prec_D B$, that is, there exists $T\in Q_D(2)$ such that $T(B)=A$.\smallskip
\item[(ii)] There exists $T:\mathbb C^{2\times 2}\to \mathbb C^{2\times 2}$ linear and {\sc ptp} such that $T(B)=A$.\smallskip
\item[(iii)] $\|A-tD\|_1\leq \|B-tD\|_1$ for all $t\in\mathbb R$.\smallskip
\item[(iv)] $\tr(A)=\tr(B)$ and $\|A-b_iD\|_1\leq \|B-b_iD\|_1$ for $i=1,2$ as well as for the generalized fidelity
$$
\big\|\sqrt{A-b_1D}\sqrt{b_2D-A}\big\|_1\geq \big\|\sqrt{B-b_1D}\sqrt{b_2D-B}\big\|_1
$$
where $\sigma(D^{-1/2}BD^{-1/2})=\{b_1,b_2\}$ ($b_1\leq b_2$) with, here and henceforth, $\sigma(\cdot)$ being the spectrum
.
\end{itemize}
\end{proposition}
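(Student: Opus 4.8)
The plan is to establish the cycle (i) $\Rightarrow$ (ii) $\Rightarrow$ (iii) $\Rightarrow$ (i), and then (iii) $\Leftrightarrow$ (iv) separately, with the understanding that the map in (ii)---just as the {\sc ptp} map in Lemma~\ref{thm_ando_7_1} is required to be unital---is meant to fix $D$. The first two links are immediate and in fact dimension-free. For (i) $\Rightarrow$ (ii) one only notes that a channel is {\sc ptp}. For (ii) $\Rightarrow$ (iii), if $T$ is {\sc ptp} with $T(D)=D$ and $T(B)=A$, then for every $t\in\mathbb R$ linearity gives $A-tD=T(B-tD)$, and since {\sc ptp} maps are trace-norm contractions with $\|T\|=1$ (Lemma~\ref{lemma_qc_op_bound}) we obtain $\|A-tD\|_1=\|T(B-tD)\|_1\le\|B-tD\|_1$.

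The substantial direction is (iii) $\Rightarrow$ (i), i.e.\ building an honest channel in $Q_D(2)$ from the trace-norm data. First I would extract the two structural consequences of (iii). Letting $t\to\pm\infty$, where $X-tD$ is definite and $\|X-tD\|_1=|\tr(X)-t\tr(D)|$, forces $\tr(A)=\tr(B)$. Evaluating (iii) at $t=b_1$ and $t=b_2$, where $\sigma(D^{-1/2}XD^{-1/2})=\{x_1,x_2\}$ denotes the generalized spectrum relative to $D$, then forces the spectral nesting $b_1\le a_1\le a_2\le b_2$: indeed $B-b_1D\ge 0$ gives $\|B-b_1D\|_1=\tr(B-b_1D)=\tr(A)-b_1\tr(D)\le\|A-b_1D\|_1$, so (iii) squeezes this to an equality, which can only hold if $A-b_1D\ge 0$, and symmetrically at $b_2$. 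To produce the channel I would then use that the reachable set $R_D(B):=\{T(B)\mid T\in Q_D(2)\}$ is convex and compact, being the image of the convex compact semigroup $Q_D(2)$ (Lemma~\ref{lemma_convex_subsemigroup}, after normalising $D$ to a state) under the linear evaluation $T\mapsto T(B)$, whereas the solution set of (iii) is convex and closed since each $A\mapsto\|A-tD\|_1$ is convex. As (ii) $\Rightarrow$ (iii) already yields one inclusion, it remains to prove the reverse, for which I would either match the support functions $H\mapsto\max_A\tr(HA)$ of the two bodies over hermitian $H$, or check that the extreme points of the solution set of (iii) are reachable and invoke the convexity of $\prec_D$ from Lemma~\ref{lemma_conv_d_maj_matrices}; using the $D$-preserving unitary gauge $A\prec_D B\Leftrightarrow UAU^\conj\prec_{UDU^\conj}UBU^\conj$ one may normalise $A$ and reduce the diagonal part to the vector statement of Corollary~\ref{coro_d_maj_diag} and Lemma~\ref{lemma_char_d_vec}.

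For (iii) $\Leftrightarrow$ (iv) I would analyse the scalar function $t\mapsto\|X-tD\|_1$ piecewise. For a $2\times 2$ hermitian $M$ one has $\|M\|_1=|\tr(M)|$ when $M$ is (semi)definite and $\|M\|_1=\sqrt{(\tr M)^2-4\det(M)}$ when $M$ is indefinite; with $\det(X-tD)=\det(D)(t-x_1)(t-x_2)$ this shows $\|X-tD\|_1$ is linear in $t$ outside $[x_1,x_2]$ and equals $\sqrt{(\tr(X)-t\tr(D))^2+4\det(D)(t-x_1)(x_2-t)}$ on it. The uncountable family in (iii) then collapses region by region: the definite tails are governed by $\tr(A)=\tr(B)$; the two endpoint inequalities at $t=b_1,b_2$ secure the nesting (and make the square roots in (iv) well defined), which in turn makes the inequality automatic on the strips $[b_1,a_1]$ and $[a_2,b_2]$ where only $B-tD$ is indefinite; and on the common indefinite interval $[a_1,a_2]\subseteq[b_1,b_2]$ the comparison of the two square-root branches is encoded---via the $2\times 2$ fidelity identity $\|\sqrt P\sqrt Q\|_1=\sqrt{\tr(PQ)+2\sqrt{\det(P)\det(Q)}}$ applied to $P=X-b_1D$, $Q=b_2D-X$, together with $\det(B-b_1D)=\det(b_2D-B)=0$---by exactly the stated fidelity inequality.

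I expect the genuine obstacle to be (iii) $\Rightarrow$ (i): unlike the vector case there is no ready-made $d$-stochastic matrix, so one must construct a {\sc cp} (not merely {\sc p}) map that fixes $D$ and simultaneously reorients the eigenframe of $B$ towards that of $A$, and it is precisely the nesting information contained in (iii) that should certify such a map exists. Verifying complete positivity of the candidate map---where positive-but-not-completely-positive maps such as transposition would otherwise intrude---and treating the degenerate cases $d_1=d_2$, $a_1=a_2$, or $b_1=b_2$ uniformly are the points most likely to require care.
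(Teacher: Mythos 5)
Your handling of (i) $\Rightarrow$ (ii) $\Rightarrow$ (iii), your extraction of $\tr(A)=\tr(B)$ and of the spectral nesting $b_1\le a_1\le a_2\le b_2$ from (iii), and your reading of (ii) as implicitly requiring $T(D)=D$ all agree with the paper. But the central implication (iii) $\Rightarrow$ (i) is not actually proved in your proposal: you reduce it to showing that every solution of the trace-norm inequalities lies in the reachable set $\{T(B)\,|\,T\in Q_D(2)\}$, and then only list candidate strategies (``match the support functions'', ``check that the extreme points \ldots are reachable''), before conceding that this is ``the genuine obstacle''. That obstacle is precisely the content of the Alberti--Uhlmann theorem, which is what the paper invokes: using the trace equality one picks $t_0$ with $A-t_0D,B-t_0D>0$, normalizes to positive definite states $\tilde A,\tilde B\in\mathbb D(\mathbb C^2)$ satisfying $\|\tilde A-t'D\|_1\le\|\tilde B-t'D\|_1$ for all $t'$, and then \cite{AlbertiUhlmann80} directly yields a {\sc cptp} map $T$ with $T(\tilde B)=\tilde A$ and $T(D)=D$, whence $T(B)=A$ by linearity. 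Without citing or re-proving this theorem your argument has a genuine gap. Worse, the one concrete reduction you do propose---using a unitary gauge to pass to the vector statement of Lemma \ref{lemma_char_d_vec} via Corollary \ref{coro_d_maj_diag}---cannot work in general: a single unitary conjugation cannot simultaneously diagonalize $A$, $B$ and $D$, and Corollary \ref{coro_d_maj_diag} needs the matrices involved to be diagonal; the whole difficulty of the qubit case is exactly the non-commuting situation that this reduction discards.

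Your treatment of (iii) $\Leftrightarrow$ (iv) is, by contrast, a genuinely different route from the paper's: the paper proves (i) $\Leftrightarrow$ (iv) by again normalizing to states and quoting Heinosaari--Wolf [Thm.~6], while your piecewise analysis of $t\mapsto\|X-tD\|_1$ (linear on the definite rays, equal to $\sqrt{(\tr X-t\tr D)^2+4\det(D)(t-x_1)(x_2-t)}$ on the indefinite interval) would make this equivalence self-contained and elementary. Your region-by-region collapse is correct up to the last step---on $(b_1,a_1]$ and $[a_2,b_2)$ the inequality is indeed automatic once nesting holds---but the decisive claim that on $[a_1,a_2]$ the remaining family of inequalities, i.e.\ $\bigl((a_1+a_2)-(b_1+b_2)\bigr)t\le a_1a_2-b_1b_2$, is equivalent to the stated fidelity inequality is asserted rather than verified: one still has to check that $\tr\bigl((A-b_1D)(b_2D-A)\bigr)+2\sqrt{\det(A-b_1D)\det(b_2D-A)}\ge\tr\bigl((B-b_1D)(b_2D-B)\bigr)$ encodes exactly that linear condition. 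In sum, the easy implications are fine and your (iii) $\Leftrightarrow$ (iv) plan is a promising alternative, but both substantive equivalences are left as programs rather than proofs, and the missing ingredient for the central one is the Alberti--Uhlmann theorem.
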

\begin{proof}
``(i) $\Rightarrow$ (ii)'': Obvious. ``(ii) $\Rightarrow$ (iii)'': By assumption $T(B-tD)=A-tD$ for all $t\in\mathbb R$ so the claim follows from Lemma \ref{lemma_qc_op_bound}.

``(iii) $\Rightarrow$ (i)'': Define 
\begin{align*}
a_1&:=\min\sigma(D^{-1/2}AD^{-1/2})\qquad b_1:=\min\sigma(D^{-1/2}BD^{-1/2})\\
a_2&:=\max\sigma(D^{-1/2}AD^{-1/2})\qquad b_2:=\max\sigma(D^{-1/2}BD^{-1/2})\,.
\end{align*}
Then\footnote{
The key here is the following well-known result: let $X\in\mathbb C^{n\times n}$ be hermitian with smallest eigenvalue $x_m$ and largest eigenvalue $x_M$. Then
$
-X+t\identity_n\geq 0$ if and only if $t\geq x_M$ and $X-t\identity_n\geq 0$ if and only if $t\leq x_m\,.
$
This is evident due to $x_m\|y\|^2\leq \langle y,Xy\rangle\leq x_M\|y\|^2$ for all $y\in\mathbb C^n$ (cf.~\cite[Theorem 4.2.2]{HJ1}).\label{footnote_herm_pos}
}
using that $Y\mapsto D^{1/2}YD^{1/2}$ is positive with positive inverse
\begin{align*}
\begin{split}
\hphantom{-}A-tD\geq 0, \hphantom{-}B-tD\geq 0&\qquad\text{ for all }t\leq s:=\min\{a_1,b_1\}\\
-A+tD\geq 0,-B+tD\geq 0&\qquad\text{ for all }t\geq r:=\max\{a_2,b_2\}\,.
\end{split}
\end{align*}
Because the trace norm of a hermitian matrix is equal to its trace if and only if it is positive this implies
$$
\tr(A)-s\tr(D)=\|A-sD\|_1\leq\|B-sD\|_1=\tr(B)-s\tr(D)\ \Rightarrow\ \tr(A)\leq\tr(B)
$$
and $\|A-rD\|_1\leq\|B-rD\|_1$ shows $-\tr(A)\leq-\tr(B)$ so combined $\tr(A)=\tr(B)$. Thus for arbitrary $t_0<s $ we may define
\begin{equation}\label{eq:tilde_A_B}
\tilde A:=\frac{A-t_0D}{\tr(A)-t_0\tr(D)}\qquad \tilde B:=\frac{B-t_0D}{\tr(A)-t_0\tr(D)}
\end{equation}
so by our previous considerations $\tilde A,\tilde B>0$, $\tilde A,\tilde B\in\mathbb D(\mathbb C^2)$ and $\|\tilde A-t'D\|_1\leq \|\tilde B-t'D\|_1$ for all $t'\in\mathbb R$ by direct computation. Now the Alberti-Uhlmann theorem \cite{AlbertiUhlmann80} guarantees the existence of a {\sc cptp} map $T$ such that $T(\tilde B)=\tilde A$ and $T(D)=D$. Because $T$ is linear and $\tr(A)=\tr(B)$ one even has $T(B)=A$ which shows $A\prec_DB$.\smallskip

``(i) $\Leftrightarrow$ (iv)'': Again we want to reduce this problem from hermitian matrices to states to make use of \cite[Theorem 6]{HeinosaariWolf12}. To see this---due to $\tr(A)=\tr(B)$---as before one finds $t_0<s$ such that $A-t_0D,B-t_0D>0$ so define $\tilde A,\tilde B\in\mathbb D(\mathbb C^2)$ as in \eqref{eq:tilde_A_B}. Using that for all $X\in\mathbb C^{n\times n}$ hermitian and all $c_1,c_2\in\mathbb R$ one has $\sigma(c_1X+c_2\identity_n)=c_1\sigma(X)+c_2$
$$
\sigma(D^{-1/2}\tilde BD^{-1/2})=\frac{\sigma(D^{-1/2}BD^{-1/2})-t_0}{\tr(A)-t_0\tr(D)}\,.
$$
With this it is easy to see that $\|A-tD\|_1\leq\|B-tD\|_1$ for all $t\in\sigma(D^{-1/2}BD^{-1/2})$ is equivalent to $\|\tilde A-t'D\|_1\leq\|\tilde B-t'D\|_1$ for all $t'\in\sigma(D^{-1/2}\tilde BD^{-1/2})=\{\tilde b_1,\tilde b_2\}$, $0<\tilde b_1\leq\tilde b_2$. By the same argument as in ``(iii) $\Rightarrow$ (i)''
\begin{align*}
\begin{split}
\tilde B-t'D\geq 0\quad&\text{ if and only if }\quad t'\leq \min\sigma(D^{-1/2}\tilde BD^{-1/2})=\tilde b_1\\
\tilde B-t'D\leq 0\quad&\text{ if and only if }\quad t'\geq \max\sigma(D^{-1/2}\tilde BD^{-1/2})=\tilde b_2
\end{split}
\end{align*}
which shows $\inf(\tilde B/D):=\sup\{t'\in\mathbb R\,|\,\tilde B-tD\geq 0\}=\tilde b_1$ and
\begin{align*}
\inf (D/\tilde B)&=\sup\{t'\in\mathbb R\,|\,D-t'\tilde B\geq 0\}\overset{D>0}=\sup\{t'>0\,|\,D-t'\tilde B\geq 0\}\\
&=\sup\{t'>0\,|\,B-\tfrac{1}{t'}D\leq 0\}=(\tilde b_2)^{-1}\,.
\end{align*}
With this the following statements are equivalent:
\begin{itemize}
\item $A\prec_D B$\smallskip
\item $\tilde A\prec_D \tilde B$ (linearity)\smallskip
\item $\tilde A-\tilde b_1D\geq 0$, $\tilde A-\tilde b_2D\leq 0$ as well as the trace norm inequality $\|\sqrt{\vphantom{\frac11}\tilde A-\tilde b_1D}\sqrt{\vphantom{\frac11}\tilde b_2D-\tilde A}\|_1\geq \|\sqrt{\vphantom{\frac11}\tilde B-\tilde b_1D}\sqrt{\vphantom{\frac11}\tilde b_2D-\tilde B}\|_1$ (due to \cite[Theorem 6]{HeinosaariWolf12} \& pulling out positive constants).\smallskip
\item $\|\tilde A-t'D\|_1\leq\|\tilde B-t'D\|_1$ for all $t'\in\sigma(D^{-1/2}\tilde BD^{-1/2})$  as well as $\|\sqrt{\vphantom{\frac11}\tilde A-\tilde b_1D}\sqrt{\vphantom{\frac11} \tilde A-\tilde b_2D}\|_1\geq \|\sqrt{\vphantom{\frac11}\tilde B-\tilde b_1D}\sqrt{\vphantom{\frac11}\tilde B-\tilde b_2D}\|_1$
\end{itemize}
For the latter note that
$
\|\tilde B-\tilde b_1D\|_1=\tr(\tilde B-\tilde b_1D)=\tr(\tilde A-\tilde b_1D)
$
and similarly $\|\tilde B-\tilde b_2D\|_1=-\tr(\tilde A-\tilde b_2D)$---therefore the trace norm conditions are equivalent to the positivity conditions $\tilde A-\tilde b_1D,-\tilde A+\tilde b_2D\geq 0$ because a hermitian matrix $X\in\mathbb C^{n\times n}$ is positive semi-definite if and only if $\|X\|_1=\tr(X)$ if and only if $\|X\|_1\leq\tr(X)$. Now by construction the last point from the above list is in turn equivalent to (iv) as $\tilde A-\tilde b_iD,\tilde B-\tilde b_iD$ equal $A-b_iD$, $B-b_iD$ for $i=1,2$ up to global positive constant.
\end{proof}

It may be possible to prove Proposition \ref{thm_char} (iv) $\Rightarrow$ (i) by applying Proposition \ref{prop_exist_channel} to $A-b_1D, B-b_1D$ and $A-b_2D,B-b_2D$, respectively, to get two trace-preserving maps $T_1,T_2$ mapping $B$ to $A$ and having $D$ as fixed point---because $B-tD$ is rank-deficient if and only if $t\in\sigma(D^{-1/2}BD^{-1/2})$---and the fidelity condition might ensure that one of these two is completely positive. However even if this works then one would, most likely, end up with an argument rather close to \cite{AlbertiUhlmann80} so we save ourselves the bother. 

\begin{remark}
\begin{itemize}
\item[(i)] The characterizations from Proposition \ref{thm_char} do not generalize to dimensions larger than $2$. To see this Heinosaari et al.~\cite{HeinosaariWolf12} gave a counterexample to the Alberti-Uhlmann theorem in higher dimensions which pertains to our case. Consider the hermitian matrices
\begin{equation}\label{eq:counterex_heinosaari}
A=\begin{pmatrix} 2&1&0\\1&2&-i\\0&i&2 \end{pmatrix}\quad B=\begin{pmatrix} 2&1&0\\1&2&i\\0&-i&2 \end{pmatrix}\quad D=\begin{pmatrix} 2&1&0\\1&2&1\\0&1&2 \end{pmatrix}\,.
\end{equation}
Indeed $\sigma(D)=\{2,2+\sqrt{2},2-\sqrt{2}\}$ so $D>0$. Obviously $B^T=A$ and $D^T=D$ so because the transposition map is well-known to be linear, positive and trace-preserving one has
$
\|A-tD\|_1=\|(B-tD)^T\|_1\leq \|B-tD\|_1
$
for all $t\in\mathbb R$ by Lemma \ref{lemma_qc_op_bound}. But there exists no {\sc cptp} map, i.e.~no $T\in Q(n)$ such that $T(B)=A$ and $T(D)=D$ as shown in \cite[Proposition 6]{HeinosaariWolf12}.\smallskip
\item[(ii)] Usually a characterization of any generalized form of majorization via convex functions is much sought-after. A reasonable extension of Proposition \ref{thm_char}
\textit{would} be that $A\prec_D B$ if and only if
\begin{equation}\label{eq:matrix_convex_cond}
\tr\big(D\psi(D^{-1/2}AD^{-1/2})\big)\leq \tr\big(D\psi(D^{-1/2}BD^{-1/2})\big)
\end{equation}
for all $\psi:\mathbb R\to\mathbb R$ matrix convex\footnote{
A matrix convex function (cf.~\cite{Kraus36,Bendat55,Ando79,Bhatia}) is a map $\psi:\mathbb R\to\mathbb R$ which acts on hermitian matrices via the spectral theorem and then satisfies
$$
\psi(\lambda A+(1-\lambda)B)\leq \lambda \psi(A)+(1-\lambda) \psi(B)\quad\text{ for all }\lambda\in [0,1], A,B\in\mathbb C^{n\times n}\text{ hermitian}
$$
and all $n\in\mathbb N$ where $\leq$ is the partial ordering on the hermitian matrices induced via positive semi-definiteness.
}
---this is justified by the fact that if $A,B,D$ are all diagonal then \eqref{eq:matrix_convex_cond} reduces to the convex function-condition from the vector case (Lemma \ref{lemma_char_d_vec} (ii)). One can even show that \eqref{eq:matrix_convex_cond} is necessary for some $T\in Q_D(n)$ to satisfy $T(B)=A$ \cite[Thm.~2.1]{Li96}. 
However, condition \eqref{eq:matrix_convex_cond} is also disproven by the matrices in \eqref{eq:counterex_heinosaari} in the same way as above for the following reason: because $D=D^T$ one has
$$
\sigma(D^{-1/2}AD^{-1/2})=\sigma((D^{-1/2}AD^{-1/2})^T)=\sigma(D^{-1/2}A^TD^{-1/2})
$$
so no matrix convex $\psi$---as those act via functional calculus---can distinguish $D^{-1/2}AD^{-1/2}$ from $D^{-1/2}A^TD^{-1/2}$ which readily implies equality in \eqref{eq:matrix_convex_cond}.
\end{itemize}
\end{remark}
While there exist general conditions for the existence of a quantum channel which maps a finite input set of states to an output set of same cardinality \cite{Huang12} these are rather technical and not really applicable in practice. For now characterizing $\prec_D$ beyond two dimensions (via some easy-to-verify inequalities) remains an open problem.

\subsection{The Role of Strict Positivity in Order, Geometrical, and Other Properties of $\prec_D$}\label{subsec:2}
There are two results we will present here for which our analysis of strict positivity in Chapter \ref{sec:strictpos} was essential. The first of these is almost immediate:
\begin{corollary}\label{thm_full_rank}
Let $d\in\mathbb R_{++}^n$ as well as $\rho,\omega\in\mathbb D(\mathbb C^{n})$ be given. If $\rho$ is of full rank and $\omega\prec_D\rho$ then $\omega$ is of full rank, as well.
\end{corollary}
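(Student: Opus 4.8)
The plan is to exploit the defining property of $\prec_D$ together with the strict positivity result from Chapter \ref{sec:strictpos}. By Definition \ref{defi_matrix_D_maj}, the relation $\omega\prec_D\rho$ means there exists $T\in Q_D(n)$ with $T(\rho)=\omega$ and $T(D)=D$. The single crucial observation is that $T$ is a quantum channel possessing the full-rank fixed point $D>0$. By Proposition \ref{prop_positivity}, the existence of \emph{one} positive definite matrix in the image guaranteed to be positive definite (namely $T(D)=D>0$) already forces $T$ to be strictly positive. Concretely, condition (iii) of Proposition \ref{prop_positivity} is satisfied by $X=D$, so $T$ is {\sc sp}.

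Once $T$ is known to be strictly positive, the conclusion is immediate: since $\rho$ is of full rank we have $\rho>0$, and strict positivity of $T$ means precisely that $T(\rho)>0$. But $T(\rho)=\omega$, so $\omega>0$, i.e.~$\omega$ is of full rank as claimed.

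The steps I would carry out, in order, are: first unpack the definition of $\omega\prec_D\rho$ to extract the channel $T$ with $T(D)=D$; second invoke Proposition \ref{prop_positivity}(iii)$\Rightarrow$(i) with the witness $X=D$ to deduce $T$ is {\sc sp}; and third apply strict positivity to the full-rank (hence positive definite) input $\rho$ to obtain $\omega=T(\rho)>0$.

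There is essentially no obstacle here, which is why the corollary is flagged as ``almost immediate''; all the real work was already done in establishing Proposition \ref{prop_positivity}. The only point requiring the slightest care is the identification of ``full rank'' with ``positive definite'' for states---but any $\rho\in\mathbb D(\mathbb C^n)$ is positive semi-definite, so full rank is equivalent to having trivial kernel, which is equivalent to $\rho>0$, and likewise for $\omega$. Thus the entire argument is a direct application of the universal-kernel characterization of strict positivity to the fixed-point channel realizing the $D$-majorization.
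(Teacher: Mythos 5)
Your proof is correct and is exactly the paper's argument: extract $T\in Q_D(n)$ from the definition of $\prec_D$, use $T(D)=D>0$ as the witness in Proposition \ref{prop_positivity}(iii)$\Rightarrow$(i) to conclude $T$ is strictly positive, and then apply this to $\rho>0$ to get $\omega=T(\rho)>0$. The paper states this more tersely but with no difference in substance.
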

\begin{proof}
By assumption there exists $T\in Q_D(n)$ such that $T(\rho)=\omega\in\mathbb D(\mathbb C^{n})$. Now \mbox{$T(D)=D>0$} by Proposition \ref{prop_positivity} implies that $\rho>0$ is mapped to something positive definite (hence of full rank) again. 
\end{proof}
For the second connection we have to dive into order properties of $D$-majorization first. Some simple observations: just like in the vector case $\prec_D$ is a preorder but it is not a partial order. To see the latter---even if the eigenvalues of $D$ differ pairwise---consider the counterexample for $\prec_d$ given in \cite[Remark 2 (iv)]{vomEnde19polytope} which transfers onto $\prec_D$ via Corollary \ref{coro_d_maj_diag}. Next let us investigate minimal and maximal elements of $\prec_D$ for which we first need the following lemma.
\begin{lemma}\label{thm_channel_pure_max_state}
Let $d\in\mathbb R_{++}^n$, $\rho\in\mathbb D(\mathbb C^{n})$ and $j\in\lbrace 1,\ldots,n\rbrace$ be given. Then $\rho\prec_D |e_j\rangle\langle e_j|$ if and only if $D-d_j\rho\geq 0$.
\end{lemma}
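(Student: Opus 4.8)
The plan is to prove the two directions separately: the forward implication is immediate from positivity, while the converse requires constructing an explicit channel.

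For the forward direction, suppose $\rho\prec_D|e_j\rangle\langle e_j|$, so there is $T\in Q_D(n)$ with $T(|e_j\rangle\langle e_j|)=\rho$ and $T(D)=D$. Since $D=\diag(d)=\sum_{k=1}^n d_k|e_k\rangle\langle e_k|$, applying $T$ and using linearity gives
\[
D=T(D)=\sum_{k=1}^n d_k\,T(|e_k\rangle\langle e_k|).
\]
Each term is positive semi-definite because $T$ is positive and $d_k>0$; dropping the terms with $k\neq j$ can only decrease the matrix with respect to the positive semi-definite ordering, so $D\geq d_j T(|e_j\rangle\langle e_j|)=d_j\rho$, i.e.\ $D-d_j\rho\geq 0$.

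For the converse, assume $M:=D-d_j\rho\geq 0$. I would build a measure-and-prepare channel that reads off the computational-basis populations and prepares a state depending on the outcome,
\[
T(X):=\sum_{k=1}^n\langle e_k,Xe_k\rangle\,\sigma_k,
\]
with the choice $\sigma_j:=\rho$ and $\sigma_k:=M/\tr(M)$ for all $k\neq j$. (If $n=1$ the claim is trivial, so assume $n\geq 2$; then $\tr(M)=\tr(D)-d_j\tr(\rho)=\sum_{k\neq j}d_k>0$, so each $\sigma_k$ is a genuine density matrix.) One then checks that $T$ is a channel with the required properties: its Choi matrix $C(T)=(T(|e_p\rangle\langle e_q|))_{p,q}=(\delta_{pq}\sigma_p)_{p,q}$ is block-diagonal with the states $\sigma_p\geq 0$ on the diagonal, hence positive semi-definite, so $T$ is completely positive by Lemma \ref{lemma_choi_matrix}; trace preservation follows from $\tr(T(X))=\sum_k\langle e_k,Xe_k\rangle=\tr(X)$. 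Finally $T(|e_j\rangle\langle e_j|)=\sigma_j=\rho$ and
\[
T(D)=\sum_{k=1}^n d_k\sigma_k=d_j\rho+\Big(\sum_{k\neq j}d_k\Big)\frac{M}{\tr(M)}=d_j\rho+M=D,
\]
so $T\in Q_D(n)$ witnesses $\rho\prec_D|e_j\rangle\langle e_j|$.

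The only genuine content---and thus the step to get right---is the bookkeeping that turns the hypothesis into a valid construction: the positivity $M\geq 0$ is exactly what makes the $\sigma_k$ states, and the trace identity $\tr(M)=\sum_{k\neq j}d_k$ is exactly what makes the weighted combination $\sum_k d_k\sigma_k$ close up to $D$ rather than to some other matrix. Everything else is routine verification, and no appeal to the Alberti-Uhlmann theorem or to Proposition \ref{prop_exist_channel} is needed, because the target $|e_j\rangle\langle e_j|$ is a pure computational-basis state, which lets the classical-quantum ansatz do all the work.
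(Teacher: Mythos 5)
Your proof is correct and essentially identical to the paper's: the forward direction is the same positivity argument (your decomposition $D=\sum_k d_k T(|e_k\rangle\langle e_k|)$ is just the paper's $T(D-d_j|e_j\rangle\langle e_j|)\geq 0$ written out), and your measure-and-prepare channel with $\sigma_k=M/\tr(M)$ for $k\neq j$ is exactly the paper's construction, since $\tr(M)=\unitvector^Td-d_j$ makes your $\sigma_k$ coincide with the paper's $\omega$, verified the same way via the block-diagonal Choi matrix.
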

\begin{proof}
``$\Rightarrow$'' : By definition there exists $T\in Q_D(n)$ such that $T( |e_j\rangle\langle e_j|)=\rho$. Note that $D-d_j |e_j\rangle\langle e_j|\geq 0$ as the l.h.s.~is a diagonal matrix with non-negative entries, so linearity and positivity of $T$ imply
\begin{align*}
0\leq T\big(D-d_j |e_j\rangle\langle e_j|\big)=T(D)-d_j T\big( |e_j\rangle\langle e_j|\big)=D-d_j\rho\,.
\end{align*}
\noindent ``$\Leftarrow$'' : The case $n=1$ is trivial so assume $n> 1$. As $D-d_j\rho\geq 0$ by assumption, 
$
\omega:=\frac{D-d_j\rho}{\unitvector^Td-d_j}\in\mathbb D(\mathbb C^{n})
$ and $d_j\rho+(\unitvector^Td-d_j)\omega=D$. With this, define a linear map $T:\mathbb C^{n\times n}\to\mathbb C^{n\times n}$ via $T(| e_i\rangle\langle e_k|)=0$ whenever $i\neq k$ and
\begin{align*}
T( |e_i\rangle\langle e_i|)=\begin{cases} \omega &i\neq j\\ \rho &i=j \end{cases}
\end{align*}
as well as its linear extension to all of $\mathbb C^{n\times n}$. Now $T$ is trace-preserving and
\begin{align*}
T(D)=d_jT( |e_j\rangle\langle e_j|)+\sum\nolimits_{i=1,i\neq j}^n d_i T( |e_i\rangle\langle e_i|)=d_j\rho+(\unitvector^Td-d_j)\omega=D\,,
\end{align*}
as well as $T( |e_j\rangle\langle e_j|)=\rho$. For complete positivity, consider the Choi matrix
\begin{align*}
C(T)=\begin{pmatrix} T( |e_1\rangle\langle e_1|)&&0\\&\ddots&\\0&&T( |e_n\rangle\langle e_n|) \end{pmatrix}=\underbrace{\omega\oplus\ldots\oplus \omega}_{j-1\text{ times}}\oplus \rho \oplus\underbrace{\omega\oplus\ldots\oplus \omega}_{n-j\text{ times}}
\end{align*}
which is a block-diagonal matrix built from states so $C(T)\geq 0$ and thus $T$ is completely positive by Lemma \ref{lemma_choi_matrix}. Hence we constructed $T\in Q_D(n)$ with $T( |e_j\rangle\langle e_j|)=\rho$ which concludes the proof.
\end{proof}
\begin{remark}
For $d=\unitvector$ one has $D-d_j\rho=\identity_n-\rho\geq 0$ for all $\rho\in\mathbb D(\mathbb C^{n})$ so we recover the well-known result that every pure state is maximal in $\mathbb D(\mathbb C^{n})$ w.r.t.~$\prec\,$. 
This implies that if the rank of some $\rho\in\mathbb D(\mathbb C^{n})$ is larger than one then there exists no $\psi\in\mathbb C^n$ such that $ |\psi\rangle\langle\psi|\prec\rho$. For general $D$-majorization this fails---consider again the example from \cite[Remark 2 (iv)]{vomEnde19polytope} together with Corollary \ref{coro_d_maj_diag}.
However there still is the weaker result that full rank is preserved under $\prec_D$ (Corollary \ref{thm_full_rank}).
\end{remark}

With these tools at hand we, like in the vector case, show there exists a minimal and maximal state with respect to $\prec_D$ and we can even characterize uniqueness: 
\begin{theorem}\label{d_matrix_minmax}
Let $d\in\mathbb R_{++}^n$ be given and let
\begin{align*}
\mathfrak h_d&:=\lbrace X\in\mathbb C^{n\times n}\,|\,\tr(X)=\unitvector^Td\rbrace\\
\mathfrak h_d^+&:=\lbrace X\in\mathbb C^{n\times n}\,|\,X\geq 0\text{ and }\tr(X)=\unitvector^Td\rbrace
\end{align*}
be the trace hyperplane induced by $d$ within the complex and the positive semi-definite matrices, respectively. The following statements hold.
\begin{itemize}
\item[(i)] $D$ is the unique minimal element in $\mathfrak h_d$ with respect to $\prec_D$.\smallskip
\item[(ii)] $(\unitvector^Td)|e_k\rangle\langle e_k|$ is maximal in $\mathfrak h_d^+$ with respect to $\prec_D$ where $k$ is chosen such that $d_k$ is minimal in $d$. It is the unique maximal element in $\mathfrak h_d^+$ with respect to $\prec_D$ if and only if $d_k$ is the unique minimal element of $d$.
\end{itemize}
\end{theorem}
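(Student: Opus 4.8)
The plan is to treat (i) by showing that $D$ is not merely minimal but a \emph{least} element whose down-set is trivial. First I would exhibit, for every $X\in\mathfrak h_d$, the trace-replacement map $\mathcal T(Y):=\frac{\tr(Y)}{\unitvector^Td}D$; this is {\sc cptp} (its Choi matrix is block-diagonal with each block $\frac{1}{\unitvector^Td}D\geq0$, cf.\ Lemma~\ref{lemma_choi_matrix}) and fixes $D$, so $\mathcal T\in Q_D(n)$, and since $\tr(X)=\unitvector^Td$ we get $\mathcal T(X)=D$, i.e.\ $D\prec_D X$. Hence $D$ is a minimum. Conversely, if $X\prec_D D$ then some $T\in Q_D(n)$ obeys $X=T(D)=D$, so the only element below $D$ is $D$ itself. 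These two facts together force $D$ to be the unique minimal element of $\mathfrak h_d$.

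For the maximality claim in (ii) I would prove the stronger statement that $P:=(\unitvector^Td)|e_k\rangle\langle e_k|$ is a \emph{greatest} element of $\mathfrak h_d^+$. Since $\prec_D$ is invariant under rescaling both arguments by a common positive constant, Lemma~\ref{thm_channel_pure_max_state} applied to the state $\omega/(\unitvector^Td)$ shows that, for $\omega\in\mathfrak h_d^+$, one has $\omega\prec_D P$ if and only if $(\unitvector^Td)D-d_k\omega\geq0$. This inequality always holds: $\omega\geq0$ with $\tr(\omega)=\unitvector^Td$ gives $\omega\leq(\unitvector^Td)\identity_n$, while $d_k=\min_i d_i$ gives $D\geq d_k\identity_n$, so $(\unitvector^Td)D\geq(\unitvector^Td)d_k\identity_n\geq d_k\omega$. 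Thus every $\omega\in\mathfrak h_d^+$ satisfies $\omega\prec_D P$, so $P$ is a maximum and in particular maximal.

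For the uniqueness equivalence the key device is the functional $\Lambda(X):=\lambda_{\max}(D^{-1/2}XD^{-1/2})=\min\{t\in\mathbb R\,|\,X\leq tD\}$ on hermitian matrices. It is $\prec_D$-monotone: if $A=T(B)$ with $T\in Q_D(n)$, then applying the positive, $D$-fixing map $T$ to $\Lambda(B)D-B\geq0$ yields $\Lambda(B)D-A\geq0$, whence $\Lambda(A)\leq\Lambda(B)$. Setting $Y:=D^{-1/2}XD^{-1/2}\geq0$ for $X\in\mathfrak h_d^+$ gives $\tr(DY)=\tr(X)=\unitvector^Td$ and, from $D\geq d_k\identity_n$, $\tr(Y)\leq\tr(DY)/d_k=\unitvector^Td/d_k$, so $\Lambda(X)=\lambda_{\max}(Y)\leq\tr(Y)\leq\unitvector^Td/d_k=\Lambda(P)$. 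Equality throughout forces $Y$ to be rank one with range inside $\operatorname{ker}(D-d_k\identity_n)$; if $d_k$ is the unique minimal entry of $d$ this kernel is $\mathbb C e_k$, so $Y$ is a multiple of $|e_k\rangle\langle e_k|$ and the trace constraint pins down $X=P$. Now let $Q\in\mathfrak h_d^+$ be any maximal element: since $P$ is a maximum, $Q\prec_D P$, and maximality of $Q$ (applied to $P$) forces $P\prec_D Q$, so monotonicity gives $\Lambda(Q)=\Lambda(P)=\unitvector^Td/d_k$ and the equality analysis yields $Q=P$. Conversely, if $d_k$ is not the unique minimum, then choosing $l\neq k$ with $d_l=d_k$ the argument of the previous paragraph makes $(\unitvector^Td)|e_l\rangle\langle e_l|$ a second, distinct maximum, so uniqueness fails.

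The routine parts are the {\sc cptp}-verification in (i) and the positive-semidefiniteness estimates; the step demanding the most care is the uniqueness direction, where the main work lies in justifying the equality case $\Lambda(X)=\unitvector^Td/d_k\Rightarrow X=P$ and in correctly invoking maximality of $Q$ to upgrade $Q\prec_D P$ to the two-sided relation $Q\prec_D P$ and $P\prec_D Q$.
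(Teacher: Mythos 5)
Your proof is correct. Parts (i) and the maximality claim in (ii) follow essentially the paper's own route: the trace projection onto $D$ makes $D$ a least element (the paper's map $X\mapsto D\tr(X)$, yours with the normalization $\tr(\cdot)/\unitvector^Td$ made explicit), and maximality of $(\unitvector^Td)|e_k\rangle\langle e_k|$ is in both cases reduced to Lemma \ref{thm_channel_pure_max_state} plus the positive semi-definiteness estimate. Where you genuinely diverge is the uniqueness direction of (ii). The paper diagonalizes a putative second maximal element $\omega=\sum_i w_i|g_i\rangle\langle g_i|$, uses extremality of pure states (Lemma \ref{lemma_pure_extreme}) to force $T(|g_i\rangle\langle g_i|)=|e_k\rangle\langle e_k|$ for each $i$, and then invokes \emph{strict positivity} of every $T\in Q_D(n)$ (Proposition \ref{prop_positivity}, via $T(D)=D>0$) to obtain the contradiction $0<\langle e_k,T(D-d_k|g_i\rangle\langle g_i|)e_k\rangle=0$ unless $|g_i\rangle\langle g_i|=|e_k\rangle\langle e_k|$; indeed the paper stresses that strict positivity is the key ingredient there. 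You instead build the $\prec_D$-monotone $\Lambda(X)=\lambda_{\max}(D^{-1/2}XD^{-1/2})=\min\{t\in\mathbb R\,|\,X\leq tD\}$, whose monotonicity needs only positivity, linearity and $T(D)=D$, show $\Lambda\leq\unitvector^Td/d_k$ on $\mathfrak h_d^+$ with equality at $P$, and settle uniqueness via the equality case of $\lambda_{\max}(Y)\leq\tr(Y)\leq\unitvector^Td/d_k$, which pins $Y$ down as rank one with range in $\ker(D-d_k\identity_n)$. Your route avoids both the extreme-point lemma and strict positivity altogether, so it is more elementary and self-contained, and the monotone $\Lambda$ (essentially the quantity $\max\sigma(D^{-1/2}(\cdot)D^{-1/2})$, i.e.\ $\inf(D/\cdot)^{-1}$, already used in the proof of Proposition \ref{thm_char}) is of independent interest as a computable obstruction to $\prec_D$; what it does not exhibit is the structural mechanism this section of the paper is designed to showcase, namely that strict positivity of every channel in $Q_D(n)$ is what rigidifies the top of the order.
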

\begin{proof}
(i): To see $D\prec_D A$ for arbitrary $A\in\mathfrak h_d$, consider $T:\mathbb C^{n\times n}\to \mathbb C^{n\times n}$, \mbox{$X\mapsto D \tr(X)$} which is in $Q(n)$ \cite[Ex.~5.3]{Hayashi06} and satisfies $T(D)=D=T(A)$. Uniqueness is evident as $D$ has to be a fixed point of $T$.

(ii): W.l.o.g.~$\unitvector^Td=1$ so $\mathfrak h_d^+=\mathbb D(\mathbb C^{n})$
. Let arbitrary $\rho\in\mathbb D(\mathbb C^{n})$ be given. Because $d_k$ is the minimal eigenvalue of $D$ one finds $D-d_k\rho\geq 0$ due to
\begin{equation}\label{eq:D_pos_matrix}
\begin{split}
\langle x,(D-d_k\rho)x\rangle=\langle x,Dx\rangle-d_k\langle x,\rho x\rangle&\geq d_k\|x\|^2-d_k\|x\|^2\|\rho\|\\
&\geq d_k\|x\|^2(1-\|\rho\|_1)=0
\end{split}
\end{equation}
which holds for all $x\in\mathbb C^n$. Here we used $\|\rho\|_1=\tr(\rho)=1$ as $\rho\geq 0$. Now by Lemma \ref{thm_channel_pure_max_state} this implies $\rho\prec_D  |e_k\rangle\langle e_k|$.

To prove uniqueness first assume that $d_k$ is the unique minimal element of $d$ and that $\omega\in\mathbb D(\mathbb C^{n})$ is also maximal w.r.t. $\prec_D$. Thus $ |e_k\rangle\langle e_k|\prec_D \omega$, that is, there exists $T\in Q_D(n)$ such that $T(\omega)= |e_k\rangle\langle e_k|$. We can diagonalize \mbox{$\omega=\sum_{i=1}^r w_i|g_i\rangle\langle g_i|$} with $w_1,\ldots, w_r> 0$, $\sum_{i=1}^r w_i=1$ and some orthonormal system $(g_i)_{i=1}^r$ in $\mathbb C^n$ where $r\in\lbrace 1,\ldots,n\rbrace$. Then
$$
 |e_k\rangle\langle e_k|=T(\omega)=\sum\nolimits_{i=1}^r w_i T(|g_i\rangle\langle g_i|)
$$
meaning we expressed a pure state as a convex combination of density matrices. But by Lemma \ref{lemma_pure_extreme} this forces $T(|g_i\rangle\langle g_i|)= |e_k\rangle\langle e_k|$ for all $i=1,\ldots,r$.
Now $D-d_k|g_i\rangle\langle g_i|\geq 0$ for all $i$ by \eqref{eq:D_pos_matrix}---actually this matrix is positive definite if and only if $g_i$ and $e_k$ are linearly independent if and only if\,\footnote{
While these equivalences are straightforward to check the main ingredients are the estimate 
$$
\langle x,Dx\rangle=\sum\nolimits_{i=1}^n d_i|\langle e_i,x\rangle|^2\geq d_k\sum\nolimits_{i=1}^n|\langle e_i,x\rangle|^2=d_k\|x\|^2
$$
for all $x\in\mathbb C^n$---with equality if and only if $x=\lambda e_k$ for some $\lambda\in\mathbb C$ because $d_k$ is the unique minimal entry of $d$---as well as the renowned fact that equality in the Cauchy-Schwarz inequality holds if and only if one vector is a multiple of the other.
}
$|g_i\rangle\langle g_i|\neq |e_k\rangle\langle e_k|$. However, $D-d_k|g_i\rangle\langle g_i|>0$ would imply $T(D-d_k|g_i\rangle\langle g_i|)>0$ by Proposition \ref{prop_positivity}---due to $T(D)=D>0$---so
\begin{align*}
0<\langle e_k,T(D-d_k|g_i\rangle\langle g_i|)e_k\rangle&= \langle e_k,T(D)e_k\rangle-d_k\langle e_k,T(|g_i\rangle\langle g_i|)e_k\rangle\\
&= \langle e_k,De_k\rangle-d_k|\langle e_k,e_k\rangle|^2=0
\end{align*}
for all $i=1,\ldots,r$, an obvious contradiction. Hence $|g_i\rangle\langle g_i|= |e_k\rangle\langle e_k|=\omega$.

Finally, assume there exist $k,k'\in\lbrace1,\ldots,n\rbrace$ with $k\neq k'$ such that $d_k=d_{k'}$ is minimal in $d$. Then $ |e_k\rangle\langle e_k|$ and $|e_{k'}\rangle\langle e_{k'}|$ are both maximal with respect to $\prec_D$ by the same argument as above, hence no uniqueness. This concludes the proof.
\end{proof}
Note that strict positivity of every $T\in Q_D(n)$ was the key in proving uniqueness of the maximal element of $\prec_D$, assuming the corresponding eigenvalue of $D$ is simple.
\begin{remark}
From a physical point of view this is precisely what one expects: from the state with the largest energy one can generate every other state (in an equilibrium-preserving manner) and there is no other state with this property.
\end{remark}

As described at the start of Chapter \ref{sec:matrix_d_maj}, just like in the vector case, considering the set of all matrices which are $D$-majorized by some $X\in\mathbb C^{n\times n}$ is of interest for analyzing reachable sets of certain quantum control problems. Therefore define
\begin{equation}\label{eq:def_MD}
\begin{split}
M_D:\mathcal P(\mathbb C^{n\times n})&\to\mathcal P(\mathbb C^{n\times n})\\
S&\mapsto \bigcup\nolimits_{Y\in S}\lbrace X\in\mathbb C^{n\times n}\,|\,X\prec_D Y\rbrace
\end{split}
\end{equation}
where $\mathcal P$ denotes the power set. For convenience $M_D(X):=M_D(\{X\})$ for any \mbox{$X\in\mathbb C^{n\times n}$}
. Then Lemma \ref{lemma_qc_op_bound} as well as Remark \ref{rem_semigroup} lead to the following.
\begin{theorem}\label{lemma_R_d_closed}
Let $d\in\mathbb R_{++}^n$, $B\in\mathbb C^{n\times n}$ and a subset $P\subseteq \mathbb C^{n\times n}$ be given. The following statements hold.
\begin{itemize}
\item[(i)] $M_D(A)$ is convex for all $A\in\mathbb C^{n\times n}$.\smallskip
\item[(ii)] $M_D$ as an operator on $\mathcal P(\mathbb C^{n\times n})$ 
is a closure operator\,\footnote{An operator $J$ on the power set $\mathcal P(S)$ of a set $S$ is called \textit{closure operator} or \textit{hull operator} if it is extensive ($X\subseteq J(X)$), increasing ($X\subseteq Y\,\Rightarrow\,J(X)\subseteq J(Y)$) and idempotent ($J(J(X))=J(X)$) for all $X,Y\in\mathcal P(S)$, cf., e.g., \cite[p.~42]{Cohn81}.}.\smallskip
\item[(iii)] If $P$ is compact, then $ M_D(P)$ is compact.\smallskip
\item[(iv)] If $A$ is an extreme point of $M_D(B)$ then there exists an extreme point $T$ of $Q_D(n)$ such that $T(B)=A$.
\end{itemize}
\end{theorem}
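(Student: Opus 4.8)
The plan is to dispatch (i)--(iii) quickly from the structural results already in hand and then concentrate the real work on the extreme-point statement (iv). Recall that $M_D(A)=\{X\in\mathbb C^{n\times n}\mid X\prec_D A\}=\{T(A)\mid T\in Q_D(n)\}$. For (i) I would simply invoke Lemma \ref{lemma_conv_d_maj_matrices}: convexity of $M_D(A)$ is literally the assertion that $X\prec_D A$ and $Y\prec_D A$ imply $\lambda X+(1-\lambda)Y\prec_D A$; alternatively one realizes $M_D(A)$ as the image of the convex set $Q_D(n)$ (Lemma \ref{lemma_convex_subsemigroup}) under the linear evaluation map $T\mapsto T(A)$. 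For (ii) I would check the three closure-operator axioms against the definition \eqref{eq:def_MD}: extensivity holds because $\id\in Q_D(n)$ gives $Y\prec_D Y$; monotonicity is immediate since $M_D$ is a union indexed by its argument; and idempotency reduces to transitivity of $\prec_D$, which is exactly the statement that $Q_D(n)$ is closed under composition (Lemma \ref{lemma_convex_subsemigroup}), since $X\prec_D Y\prec_D Z$ via $T_2,T_1\in Q_D(n)$ yields $T_2\circ T_1\in Q_D(n)$ witnessing $X\prec_D Z$. For (iii) I would write $M_D(P)$ as the image of the compact set $Q_D(n)\times P$ under the jointly continuous evaluation map $(T,Y)\mapsto T(Y)$, using compactness of $Q_D(n)$ from Lemma \ref{lemma_convex_subsemigroup} and the fact that continuous images of compacta are compact.

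The heart of the theorem is (iv). Given an extreme point $A$ of $M_D(B)$, I would first pass to the fiber
\[
F:=\{T\in Q_D(n)\mid T(B)=A\}.
\]
This set is nonempty because $A\prec_D B$, and it is convex and compact as the intersection of the convex compact set $Q_D(n)$ with the closed affine subspace $\{T\mid T(B)=A\}$ (the latter being the preimage of $\{A\}$ under the continuous linear map $T\mapsto T(B)$). By the Krein--Milman theorem $F$ possesses an extreme point $T$, and the claim is that this $T$ is already extreme in $Q_D(n)$.

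To establish the claim I would suppose $T=\lambda T_1+(1-\lambda)T_2$ with $T_1,T_2\in Q_D(n)$ and $\lambda\in(0,1)$, and apply both sides to $B$. Since $T_1(B),T_2(B)\in M_D(B)$, the identity $A=\lambda T_1(B)+(1-\lambda)T_2(B)$ exhibits the extreme point $A$ as a genuine convex combination inside $M_D(B)$; extremality of $A$ then forces $T_1(B)=T_2(B)=A$, so that $T_1,T_2\in F$. Extremality of $T$ in $F$ now yields $T_1=T_2=T$, proving $T$ extreme in $Q_D(n)$.

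I expect the only subtlety to lie in the bookkeeping of (iv): confirming that the fiber $F$ is nonempty, convex, and compact, and chaining the two applications of extremality in the correct order (first in $M_D(B)$ to land inside $F$, then in $F$ to collapse the decomposition). Everything else is mechanical, resting on the convexity, compactness, and semigroup properties of $Q_D(n)$ recorded in Lemma \ref{lemma_convex_subsemigroup} together with the transitivity of $\prec_D$ it furnishes.
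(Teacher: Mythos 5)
Your proposal is correct; parts (i) and (ii) coincide with the paper's own argument, but your proofs of (iii) and (iv) take genuinely different routes. For (iii) the paper runs a sequential compactness argument: writing $A_n=T_n(X_n)$, it extracts convergent subsequences $X_{n_l}\to X\in P$ and $T_{n_l}\to T\in Q(n)$ via Lemma \ref{lemma_convex_subsemigroup}, then estimates $\|T(X)-T_{n_l}(X_{n_l})\|_1\leq\|T-T_{n_l}\|\,\|X\|_1+\|X-X_{n_l}\|_1\to 0$ to conclude $A=T(X)\in M_D(P)$; your one-line observation that $M_D(P)$ is the image of the compact set $Q_D(n)\times P$ under the (jointly continuous, since bilinear and finite-dimensional) evaluation map $(T,Y)\mapsto T(Y)$ packages exactly this estimate and is shorter. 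For (iv) the paper leans on Remark \ref{rem_semigroup}, which identifies $M_D(B)=\operatorname{conv}\lbrace T(B)\,|\,T\in Q^E_D(n)\rbrace$, and then cites Minkowski's theorem to place the extreme points of $M_D(B)$ inside the generating set $\lbrace T(B)\,|\,T\in Q^E_D(n)\rbrace$. Your fiber argument---take an extreme point $T$ of the nonempty convex compact fiber $F=\lbrace T\in Q_D(n)\,|\,T(B)=A\rbrace$, then apply extremality twice, first of $A$ in $M_D(B)$ to force $T_1(B)=T_2(B)=A$ (so $T_1,T_2\in F$) and then of $T$ in $F$ to force $T_1=T_2$---is self-contained: it needs only the convexity and compactness of $Q_D(n)$ plus the existence of extreme points of nonempty compact convex sets, and it avoids both the semigroup-majorization characterization of Remark \ref{rem_semigroup} and the Milman-type ``partial converse'' implicit in the paper's appeal to Minkowski's theorem. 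The trade-off: the paper's proof is a one-liner given the machinery it cites, while yours is longer but exhibits concretely where the witnessing extreme channel comes from, namely as an extreme point of the fiber over $A$.
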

\begin{proof}
(i): Simple consequence of Lemma \ref{lemma_conv_d_maj_matrices}. (ii): Obviously, $M_D$ is extensive and increasing. For idempotence ($M_D\circ M_D=M_D$), ``$\,\subseteq\,$'' follows from $Q_D(n)$ forming a semigroup and ``$\,\supseteq\,$'' is due to $\operatorname{id}_n\in Q_D(n)$. (iii): As all norms in finite dimensions are equivalent, and thus all induced topologies, we can w.l.o.g.~equip $\mathbb C^{n\times n}$ with the trace norm. As $P$ by assumption is bounded Lemma \ref{lemma_qc_op_bound} implies that $M_D(P)$ is bounded. For closedness consider a sequence $( A_n)_{n\in\mathbb N}$ in $ M_D(P)$ which converges to some \mbox{$ A\in\mathbb C^{n\times n}$}. Thus there exists a sequence $(X_n)_{n\in\mathbb N}$ in $P$ and a sequence of quantum channels $(T_n)_{n\in\mathbb N}$ such that $T_n(D)=D$ and $T_n(X_n)= A_n$. As $P$ is assumed to be compact there exists a subsequence $(X_{n_j})_{j\in\mathbb N}$ of $(X_n)_{n\in\mathbb N}$ which converges to some $X\in P$. On the other hand, by Lemma \ref{lemma_convex_subsemigroup} there also exists a subsequence $(T_{n_l})_{l\in\mathbb N}$ of $(T_{n_j})_{j\in\mathbb N}$ which converges to some $T\in Q(n)$. Combining these two yields subsequences $(X_{n_l})_{l\in\mathbb N},(T_{n_l})_{l\in\mathbb N}$ with coinciding index set which satisfy
\begin{align*}
\|T(X)-T_{n_l}(X_{n_l})\|_1&\leq \|T(X)-T_{n_l}(X)\|_1+\|T_{n_l}(X)-T_{n_l}(X_{n_l})\|_1\\
&\leq \|T-T_{n_l}\|\|X\|_1+\|X-X_{n_l}\|_1\to 0\quad\text{ as }l\to\infty\,.
\end{align*}
Therefore $T(D)=\lim_{l\to\infty} T_{n_l}(D)=D$ and
$$
 A=\lim_{l\to\infty} A_{n_l}=\lim_{l\to\infty}T_{n_l}(X_{n_l})=T(X)\,,
$$
so $ A\in M_D(P)$ as $X\in P$. (iv): Following Remark \ref{rem_semigroup} \mbox{$M_D(B)= \operatorname{conv}\lbrace T(B)\,|\, T\in Q^E_D(n) \rbrace$} so the statement in question follows from Minkowski's theorem \cite[Thm.~5.10]{Brondsted83}, that is, the extreme points of $M_D(B)$ have to be contained within $\lbrace T(B)\,|\, T\in Q^E_D(n) \rbrace$.
\end{proof}
To discuss continuity of the map $M_D$ we first need a (relative) topology on the power set $\mathbb C^{n\times n}$---for this we shall consider the \emph{Hausdorff metric} $\Delta$ on the set of all non-empty
compact subsets $\mathcal P_c(X)\subset\mathcal P(X)$ of a metric space $(X,d)$
\footnote{Let $A,B\in \mathcal P_c(X)$ and $d$ being the metric on $X$. The Hausdorff metric is defined via
\begin{align*}
\Delta(A,B) := \max\Big\lbrace \max_{z \in A}d(z,B),\max_{z \in B}d(z,A) \Big\rbrace\,,
\end{align*}
where as usual $d(z,B)=\min_{w\in B}d(z,w)$ (and $d(z,A)$ analogously), refer to, e.g., \cite{Nadler78}. Then for $A,B\in P_c(X)$ and $\gamma\geq 0$ one has $\Delta(A,B) \leq \gamma$ if and only if for all $a \in A$, there exists $b \in B$ 
with $d(a,b) \leq \gamma$ and vice versa, cf. \cite[Lemma 2.3]{DvE18}.
\label{footnote_hausdorff}}.
\begin{proposition}
Let $d\in\mathbb R_{++}^n$. Then the map $M_D:\mathcal P_c(\mathbb C^{n\times n})\to \mathcal P_c(\mathbb C^{n\times n})$ is well-defined and non-expansive, that is,
$$
\Delta( M_D(P_1),M_D(P_2))\leq \Delta(P_1,P_2)
$$
for all $P_1,P_2\in\mathcal P_c(\mathbb C^{n\times n})$ when equipping $\mathbb C^{n\times n}$ with the trace norm. In particular $M_D$ is continuous.
\end{proposition}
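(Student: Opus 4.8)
The plan is to establish well-definedness and non-expansiveness separately, with continuity following for free from the latter. First I would settle well-definedness: non-emptiness of $M_D(P)$ for non-empty $P$ is immediate from extensivity ($P\subseteq M_D(P)$, Theorem \ref{lemma_R_d_closed}(ii)), and compactness of $M_D(P)$ for compact $P$ is exactly Theorem \ref{lemma_R_d_closed}(iii). Together these show that $M_D$ genuinely maps $\mathcal P_c(\mathbb C^{n\times n})$ into itself.

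For the metric estimate I would invoke the reformulation of the Hausdorff distance recalled in footnote \ref{footnote_hausdorff}: setting $\gamma:=\Delta(P_1,P_2)$, by symmetry it suffices to produce, for each $A\in M_D(P_1)$, some $A'\in M_D(P_2)$ with $\|A-A'\|_1\leq\gamma$. The crucial idea is to reuse the \emph{same} channel. Pick $Y_1\in P_1$ and $T\in Q_D(n)$ with $T(Y_1)=A$, and use $\Delta(P_1,P_2)=\gamma$ to find $Y_2\in P_2$ with $\|Y_1-Y_2\|_1\leq\gamma$. Then $A':=T(Y_2)$ satisfies $A'\prec_D Y_2$ (since $T\in Q_D(n)$ and $Y_2\in P_2$), hence $A'\in M_D(P_2)$, and
\[
\|A-A'\|_1=\|T(Y_1-Y_2)\|_1\leq\|T\|\,\|Y_1-Y_2\|_1\leq\gamma
\]
by linearity together with the operator-norm bound $\|T\|=1$ from Lemma \ref{lemma_qc_op_bound}. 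Swapping the roles of $P_1$ and $P_2$ yields the reverse inequality, so $\Delta(M_D(P_1),M_D(P_2))\leq\gamma$. Continuity is then automatic, since a non-expansive map is $1$-Lipschitz.

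The main obstacle is essentially bookkeeping rather than a genuine difficulty: one must notice that the channel $T$ witnessing $A\prec_D Y_1$ can be transported to a nearby point $Y_2$ of the other set while remaining in $Q_D(n)$, so that $T(Y_2)$ lands back inside $M_D(P_2)$. Once this is in place the whole estimate reduces to linearity and the contractivity $\|T\|=1$.
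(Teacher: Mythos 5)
Your proof is correct and takes essentially the same approach as the paper: the paper also proves non-expansiveness by reusing the \emph{same} channel $T$ (bounding the minimum over pairs $(S,B_2)\in Q_D(n)\times P_2$ by the choice $S=T$) and then applying $\|T\|=1$ from Lemma \ref{lemma_qc_op_bound}, with well-definedness cited from Theorem \ref{lemma_R_d_closed}. The only cosmetic difference is that you phrase the estimate via the pointwise characterization of the Hausdorff metric from footnote \ref{footnote_hausdorff}, whereas the paper writes the same argument as a max--min chain of inequalities.
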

\begin{proof}
Well-definedness is due to Theorem \ref{lemma_R_d_closed} (iii). Now $M_D$ is non-expansive as
\begin{align*}
\max_{A_1\in M_D(P_1)}\min_{A_2\in M_D(P_2)}&\|A_1-A_2\|_1=\max_{\substack{T\in Q_D(n)\\B_1\in P_1}}\min_{\substack{S\in Q_D(n)\\B_2\in P_1}}\|T(B_1)-S(B_2)\|_1\\
&\leq \max_{\substack{T\in Q_D(n)\\B_1\in P_1}}\min_{B_2\in P_1}\|T(B_1)-T(B_2)\|_1\\
&\leq \max_{\substack{T\in Q_D(n)\\B_1\in P_1}}\min_{B_2\in P_1}\|T\|\|B_1-B_2\|_1=\max_{B_1\in P_1}\min_{B_2\in P_1}\|B_1-B_2\|_1
\end{align*}
where in the second-to-last step we used Lemma \ref{lemma_qc_op_bound}.
\end{proof}

Finally, one finds the somewhat peculiar property that applying $M_D$ as well as $M_\identity$ (that is, classical matrix majorization) alternately to some initial states then one, in the closure, ends up with all states:
\begin{proposition}\label{lemma_3}
Let $d\in\mathbb R_{++}^n$ such that $d$ and $\unitvector$ are linearly independent, i.e.~$d\neq c\unitvector$ for all $c\in\mathbb R_{++}$. Then for arbitrary $\rho\in\mathbb D(\mathbb C^n)$ 
$$
\lim_{m\to\infty}(M_{\identity}\circ M_D)^m(\rho)= \mathbb D(\mathbb C^n)
$$
with respect to the Hausdorff metric (cf.~footnote \ref{footnote_hausdorff}).
\end{proposition}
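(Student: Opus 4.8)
The plan is to show that the increasing sequence $\big((M_\identity\circ M_D)^m(\rho)\big)_m$ converges in the Hausdorff metric to a limit set which is then forced to be all of $\mathbb D(\mathbb C^n)$. Write $N:=M_\identity\circ M_D$. Since both $M_D$ and $M_\identity$ (the latter being $M_D$ for $d=\unitvector$) are closure operators by Theorem \ref{lemma_R_d_closed}~(ii), their composition $N$ is extensive and monotone, so $N^m(\rho)\subseteq N^{m+1}(\rho)$ for all $m$, and each $N^m(\rho)$ is compact by Theorem \ref{lemma_R_d_closed}~(iii). A nested increasing sequence of non-empty compact sets converges in the Hausdorff metric to the closure of its union, so it suffices to study $S:=\overline{\bigcup_m N^m(\rho)}$ and prove $S=\mathbb D(\mathbb C^n)$.

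First I would establish that $S$ is invariant under both operators, $M_D(S)=S$ and $M_\identity(S)=S$. The inclusions ``$\supseteq$'' are mere extensivity; for ``$\subseteq$'' I use the non-expansiveness, hence Hausdorff-continuity, of $M_D$ and $M_\identity$ from the preceding proposition together with the telescoping inclusions $M_D(N^m(\rho))\subseteq N^{m+1}(\rho)\subseteq S$ and $M_\identity(N^m(\rho))\subseteq S$ (idempotence). Concretely, given $Y\in S$ and $X\prec_D Y$, I approximate $Y$ by elements $Y_j\in\bigcup_m N^m(\rho)$ and transport $X$ along the continuous set-map $M_D$ to obtain $X_j\in M_D(Y_j)\subseteq S$ with $X_j\to X$; closedness of $S$ yields $X\in S$, and the argument for $M_\identity$ is identical.

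The decisive reduction is that it suffices to produce a single pure state in $S$: if $|\psi\rangle\langle\psi|\in S$ then, because every state is majorized by a pure state (Lemma \ref{thm_ando_7_1} with $D=\identity$), one has $M_\identity(|\psi\rangle\langle\psi|)=\mathbb D(\mathbb C^n)$, so $\mathbb D(\mathbb C^n)=M_\identity(|\psi\rangle\langle\psi|)\subseteq M_\identity(S)=S$. To obtain such a state I run an eigenvalue-pumping argument. Set $\Lambda:=\max_{\sigma\in S}\lambda_{\max}(\sigma)$, attained since $S$ is compact and $\sigma\mapsto\lambda_{\max}(\sigma)$ is continuous, and suppose for contradiction $\Lambda<1$, so the maximizer $\sigma^*$ has rank at least two. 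Using $M_\identity$-invariance I may replace $\sigma^*$ by any isospectral state; choosing $k$ with $d_k=\min_i d_i$ and $l$ with $d_l=\max_i d_i$ (these differ precisely because $d\neq c\unitvector$), I rotate $\sigma^*$ to a diagonal state $\sigma_{\mathrm{diag}}=\diag(p)\in S$ whose diagonal $p$ carries the eigenvalue $\Lambda$ in slot $k$ and a strictly positive eigenvalue in slot $l$.

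Now comes the step I expect to be the crux, and the only place where $d\neq c\unitvector$ is used. On the coordinates $(k,l)$ I apply the $d$-stochastic matrix acting as $\big(\begin{smallmatrix}0 & d_k/d_l\\ 1 & 1-d_k/d_l\end{smallmatrix}\big)$ and as the identity on all other coordinates; a direct check shows it is column-stochastic and fixes $d$, and it sends $p$ to $q$ with $q_l=\Lambda+(1-d_k/d_l)\,p_l>\Lambda$ while leaving every other entry $\le\Lambda$. Thus $q\prec_d p$ with $\|q\|_\infty>\Lambda$, so Corollary \ref{coro_d_maj_diag}~(i) gives $\diag(q)\prec_D\sigma_{\mathrm{diag}}$, i.e.\ $\diag(q)\in M_D(S)=S$ with $\lambda_{\max}(\diag(q))=\|q\|_\infty>\Lambda$, contradicting maximality of $\Lambda$. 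Hence $\Lambda=1$, $S$ contains a pure state, and by the reduction $S=\mathbb D(\mathbb C^n)$. The heart of the matter is the observation that one $D$-majorization step can strictly raise the largest eigenvalue once an $M_\identity$-rotation has deliberately placed the spectrum ``against'' the ordering of $d$; linear independence of $d$ and $\unitvector$ is exactly what forces $d_k<d_l$ and thereby makes this pumping possible.
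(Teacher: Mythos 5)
Your proof is correct, but it takes a genuinely different route from the paper's at the decisive step. Both arguments share the same skeleton: the increasing sequence $\big((M_\identity\circ M_D)^m(\rho)\big)_{m\in\mathbb N}$ Hausdorff-converges to the compact set $S=\overline{\bigcup_m (M_\identity\circ M_D)^m(\rho)}$, and once a single pure state is known to lie in $S$ one concludes $S=\mathbb D(\mathbb C^n)$ from maximality of pure states under classical majorization. The difference is how that pure state is produced. The paper starts from the minimal element $D\in M_D(\rho)$ (Theorem \ref{d_matrix_minmax}\,(i)) and \emph{explicitly constructs}, from one $d$-stochastic block matrix built on two adjacent unequal entries $d_j>d_{j+1}$ interlaced with cyclic shifts, iterates $x^{(\alpha)}=\big(1-(d_{j+1}/d_j)^{\alpha}\big)e_1+(d_{j+1}/d_j)^{\alpha}x$ converging geometrically to $e_1$; the limit-point characterization of Hausdorff convergence then places $|e_1\rangle\langle e_1|$ in the limit set, and a chain of inclusions using continuity of $M_\identity\circ M_D$ finishes the proof without ever needing invariance of $S$. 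You instead prove that $S$ is invariant under both $M_D$ and $M_\identity$ (via non-expansiveness, exactly as you sketch---your telescoping inclusions and the transport of $X$ along $M_D$ are sound) and then run a variational argument: maximize $\lambda_{\max}$ over the compact set $S$ and show the maximum must equal $1$, since otherwise one unitary rotation (placing $\Lambda$ at the minimal entry of $d$ and positive mass at the maximal entry, both available in $S$ by $M_\identity$-invariance) followed by a single $d$-stochastic pumping step strictly raises the top eigenvalue; your $2\times 2$ block is indeed column-stochastic, fixes $(d_k,d_l)$ because $d_k<d_l$, and yields $q_l=\Lambda+(1-d_k/d_l)p_l>\Lambda$, which Corollary \ref{coro_d_maj_diag}\,(i) converts into $\diag(q)\prec_D\diag(p)$, contradicting maximality. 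What the paper's construction buys is an explicit convergence rate (geometric with ratio $d_{j+1}/d_j$) and independence from any invariance property of the limit set; what yours buys is a softer, essentially computation-free mechanism---compactness plus a one-step strict improvement---that isolates cleanly why linear independence of $d$ and $\unitvector$ is indispensable. Both proofs ultimately lean on the same two pillars: the non-expansiveness proposition for Hausdorff continuity, and Theorem \ref{d_matrix_minmax} for the extremal role of $D$ and of pure states.
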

\begin{proof}
The case $n=1$ is obvious so consider $n>1$. Also w.l.o.g.~we may assume that $\unitvector^Td=1$---else we can rescale the problem accordingly. First be aware that applying the Hausdorff metric is allowed due to the following facts:
\begin{itemize}
\item $\mathbb D(\mathbb C^n)$ is compact (Lemma \ref{lemma_pure_extreme})\smallskip
\item $M_D$ for any $D>0$ maps non-empty compact sets to non-empty compact sets (Theorem \ref{lemma_R_d_closed}) 
$(M_{\identity}\circ M_D)^m(\rho)\subseteq \mathbb D(\mathbb C^n)$ itself is compact for all $m\in\mathbb N_0$.
\end{itemize}
Because $M$ is a closure operator---so in particular it is extensive---for any compact set $P\subseteq \mathbb D(\mathbb C^n)$ the sequence 
$\big((M_{\identity}\circ M_D)^m(P)\big)_{m\in\mathbb N}\subseteq \mathbb D(\mathbb C^n)$ 
is increasing with respect to $\subseteq$. Therefore \cite{Baronti86} implies that the sequence converges with respect to the Hausdorff metric $\Delta$ with compact limit set $\overline{\bigcup_{m=1}^\infty (M_{\identity}\circ M_D)^m(P)}\subseteq\mathbb D(\mathbb C^n)$.

The idea will be the following: first we show by explicit construction that starting from $D=\diag(d)\in\mathbb D(\mathbb C^n)$ we can approximately reach $ |e_1\rangle\langle e_1|$. We then may use extensiveness as well as continuity of $M$ w.r.t.~$\Delta$ 
to get\footnote{
Note that in the second-to-last row of the following 
computation we will make use of the following basic result (cf., e.g., \cite[Lemma 2.5.(a)]{DvE18}): Let $(A_n)_{n\in\mathbb N}$ and 
$(B_n)_{n\in\mathbb N}$ be bounded sequences of non-empty compact subsets
of any metric space which Hausdorff-converge to $A$ and $B$, respectively. Now if $A_n\subseteq B_n$ for all $n\in\mathbb N$, then $A \subseteq B$.
}
\begin{align*}
\mathbb D(\mathbb C^n)\supseteq \lim_{m\to\infty}(M_{\identity}\circ M_D)^m(\rho)&=\lim_{m\to\infty}\big(M_{\identity}\circ M_D\circ(M_{\identity}\circ M_D)^{m-2}\circ M_{\identity}\circ M_D\big)(\rho)\\
&= (M_{\identity}\circ M_D)\big(\lim_{\tilde m\to\infty} (M_{\identity}\circ M_D)^{\tilde m}( M_{\identity}\circ M_D)(\rho)\big)\\
&\supseteq (M_{\identity}\circ M_D)\big(\lim_{\tilde m\to\infty} (M_{\identity}\circ M_D)^{\tilde m}(D)\big)\\
&\supseteq (M_{\identity}\circ M_D)( |e_1\rangle\langle e_1|)\supseteq M_{\identity} ( |e_1\rangle\langle e_1|)=\mathbb D(\mathbb C^n)
\end{align*}
because $D$ is minimal in $\mathbb D(\mathbb C^n)$ w.r.t.~$\prec_D$ and 
every pure state is maximal in $\mathbb D(\mathbb C^n)$ w.r.t.~$\prec_{\identity}$ (that is, $\prec$) by Lemma \ref{d_matrix_minmax}. This would conclude the proof.

Carrying out this idea, by assumption we find $j\in\lbrace 1,\ldots,n-1\rbrace$ such that $d_j\neq d_{j+1}$ (w.l.o.g.~$d_j> d_{j+1}$, the other case is shown analogously). Define
\begin{align*}
T:=\begin{pmatrix} \identity_{j-1}&0&0&0\\0&1-\frac{d_{j+1}}{d_j}&1&0\\0&\frac{d_{j+1}}{d_j}&0&0\\0&0&0&\identity_{n-j-1} \end{pmatrix}\in s_d(n)
\end{align*}
and let $\sigma_r=\sum_{i=1}^n |e_{i+1}\rangle\langle e_i|\in s_{\unitvector}(n)$ (with $e_{n+1}:=e_1$) be the cyclic right shift. Starting from any $x\in\mathbb R_{++}^n$ with $\unitvector^Tx=1$ one computes
\begin{equation}\label{eq:x_1_rec}
x^{(1)}:=\sigma_r^{n-j+1}T\prod_{k=1}^{n-2}(\sigma_rT)\sigma_r^j\identity x=\begin{pmatrix} 1-\frac{d_{j+1}}{d_j}(1-x_1)\\\frac{d_{j+1}}{d_j}x_2\\\cdots\\\frac{d_{j+1}}{d_j}x_n \end{pmatrix}=\Big(1-\frac{d_{j+1}}{d_j}\Big)e_1+\frac{d_{j+1}}{d_j}x
\end{equation}
where $x^{(1)}\in\mathbb R_{++}^n$ and $\diag{x^{(1)}}\in (M_{\identity}\circ M_D)^n(\diag x)$. Applying this step successively $\alpha\in\mathbb N$ times results in
$$
x^{(\alpha+1)}:=\Big(\sigma_r^{n-j+1}T\prod_{k=1}^{n-2}(\sigma_rT)\sigma_r^j\identity\Big) x^{(\alpha)}=\Big(1-\Big(\frac{d_{j+1}}{d_j}\Big)^{\alpha+1}\Big)e_1+\Big(\frac{d_{j+1}}{d_j}\Big)^{\alpha+1}x
$$
as is evident by induction invoking \eqref{eq:x_1_rec}. Due to $\diag{x^{(\alpha)}}\in (M_{\identity}\circ M_D)^{n\alpha}(\diag x)$ for all $\alpha$ we found a sequence in $\big((M_{\identity}\circ M_D)^{n\alpha}(\diag x)\big)_{\alpha\in\mathbb N}$ which converges to $ |e_1\rangle\langle e_1|$:
\begin{align*}
\big\| |e_1\rangle\langle e_1|-\diag{x^{(\alpha)}}\big\|_1&=\Big\| |e_1\rangle\langle e_1|-\Big(1-\Big(\frac{d_{j+1}}{d_j}\Big)^{\alpha}\Big) |e_1\rangle\langle e_1|-\Big(\frac{d_{j+1}}{d_j}\Big)^{\alpha}\diag x\Big\|_1\\
&= \Big(\frac{d_{j+1}}{d_j}\Big)^{\alpha}\| |e_1\rangle\langle e_1|-\diag x\|_1\overset{\alpha\to\infty}\longrightarrow 0
\end{align*}
Thus by the limit point characterization of Hausdorff convergence\footnote{
For bounded sequences $(A_n)_{n\in\mathbb N}$ of non-empty
compact subsets (of some metric space) which Hausdorff-converges to $A$, one has $x\in A$ if and only if there exists a sequence $(a_n)_{n\in\mathbb N}$ with 
$a_n \in A_n$ which converges to $x$, cf.~\cite[Lemma 2.4]{DvE18}.
}
\begin{equation}
|e_1\rangle\langle e_1|\in \lim_{\alpha\to\infty} (M_{\identity}\circ M_D)^{n\alpha}(D)=\lim_{\tilde m\to\infty} (M_{\identity}\circ M_D)^{\tilde m}(D)\,.\tag*{\qedhere}
\end{equation}
\end{proof}
\noindent This result is non-trivial in the following sense: if the initial $\rho$ is of full rank then by Corollary \ref{thm_full_rank} $(M_1\circ M_D)^m(\rho)$ for arbitrary $m$ can never equal all of $\mathbb D(\mathbb C^n)$ but is only a proper subset.

\section{Conclusion and Outlook}
While strict positivity ({\sc sp}) forms a rather large subclass of usual positivity ({\sc p}) it turns out to be an inherent feature of Markovian quantum processes---the main reason for this being that {\sc sp} forms a convex semigroup which is open with respect to {\sc p}. Now attempting to take this to infinite-dimensional Hilbert spaces creates some problems---for example the proof of Proposition \ref{prop_positivity} breaks down as not every positive semi-definite operator has a smallest non-zero eigenvalue. Although one might fix this by readjusting the definition positive definiteness to operators which are bounded from below (that is, invertible) this would be problematic from a physical point of view because channels in the Schr{\"o}dinger picture act on trace class operators---the eigenvalues of which can never be bounded away from zero. This might also show when taking Proposition \ref{mainthm} (iv), or Remark \ref{rem_ch_heisenberg} depending on the physical picture, as a definition for infinite-dimensional systems as then one could choose whether the orthogonal projections $\pi$ have to be finite-dimensional or can be arbitrary. 

Either way strict positivity proves to be a useful tool in analyzing $D$-majorization on matrices and its order properties. The biggest difference between the vector and the general matrix case arguably lies within the convex compact set of all matrices $D$-majorized by some initial matrix from Eq.~\eqref{eq:def_MD}: in the vector case this set has finitely many extreme points meaning it forms a convex polytope. This grants access to strong tools from the underlying mathematical theory and allows for an analytical extreme point analysis. In the matrix case, however, the number of extreme points is infinite\footnote{
Similarly while the set of $d$-stochastic matrices forms a convex polytope, the set $Q_D(n)$ has infinitely many extreme points. The argument, just as below, relies on the concatenation with suitable unitary channels from left and right---after all the bijective quantum channels the inverse of which is a channel again are precisely the unitary ones \cite[Prop.~1]{DvE18}.
}---a straightforward argument shows that if $X$ is extremal in $ M_D(A)$ then so is $U^\conj XU$ for all unitaries $U$ which satisfy $[U,D]=0$. A way of bridging this gap could be to define the unitary equivalence relation
$$
\sim\;:=\{(X,UXU^*)\,|\,X\in M_D(A), U\in\mathbb C^{n\times n}\text{ unitary with }[U,D]=0\}
$$
for arbitrary $d\in\mathbb R_{++}^n$ and $A\in\mathbb C^{n\times n}$, and then look at the equivalence class of an extreme point $X$ of $M_D(A)$ under $\sim$. Whether one ends up with finitely many extreme points after factoring out the unitary equivalents for now remains open.

%
Other than simplifying extreme point analysis answering this question might also enable a proof for continuity of the map $(D,P)\mapsto M_D(P)$ defined for $D>0$ and $P\in\mathcal P_c(\mathbb C^{n\times n})$---the latter possibly restricted to just the hermitian matrices. In the vector case this map is indeed continuous \cite[Thm.~4.8]{vomEnde19polytope} but the proof relies heavily on the polytope structure and the corresponding half-space representation.



\bigskip\smallskip
\noindent\textbf{Acknowledgments.} I would like to thank Gunther Dirr, 
Thomas Schulte-Herbr\"uggen, Michael M.~Wolf, Michael Keyl as well as the anonymous referee for valuable and constructive comments. Also this manuscript greatly benefited from my time in Toru{\'n}, in particular from my illuminating discussions there with Sagnik Chakraborty, Ujan Chakraborty and Dariusz Chru{\'s}ci{\'n}ski. This work was supported 
by the Bavarian excellence network {\sc enb}
via the \mbox{International} PhD Programme of Excellence
{\em Exploring Quantum Matter} ({\sc exqm}).

\bibliographystyle{tfnlm}

\bibliography{../../../../control21vJan20}

\section{Appendix}\label{appendix}
\renewcommand{\thesubsection}{\Alph{subsection}}

\subsection{Proof of Proposition \ref{prop_exist_channel}}\label{app_a0}

\begin{lemma}\label{lemma_vector_transform}
Let $n\in\mathbb N\setminus\{1\}$, $x\in\mathbb R^n$. Then there exists a column-stochastic matrix $A\in\mathbb R^{n\times n}$ such that
$
Ax=( \unitvector^Tx_+,-\unitvector^Tx_-,0,\ldots,0)^T
$
where $x=x_+-x_-$ is the unique decomposition of $x$ into positive and negative part, i.e.~$x_+,x_-\in\mathbb R^n$, $\langle x_+,x_-\rangle=0$. 
\end{lemma}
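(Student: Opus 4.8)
The plan is to build the matrix $A$ directly, column by column, exploiting the fact that the target vector merely aggregates all non-negative entries of $x$ into its first coordinate and all negative entries into its second. Concretely, since $\unitvector^Tx_+=\sum_{j:\,x_j\geq 0}x_j$ and $-\unitvector^Tx_-=\sum_{j:\,x_j<0}x_j$, it suffices to route the ``mass'' carried by index $j$ entirely to coordinate $1$ whenever $x_j\geq 0$ and entirely to coordinate $2$ whenever $x_j<0$.

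First I would define $A\in\mathbb R^{n\times n}$ via its action on the standard basis: set $Ae_j:=e_1$ for every index $j$ with $x_j\geq 0$ and $Ae_j:=e_2$ for every index $j$ with $x_j<0$. This prescription makes sense precisely because $n\geq 2$, so that the two distinct receiving coordinates $e_1,e_2$ both exist. Each column of $A$ is then a standard basis vector, in particular non-negative with entries summing to one, whence $A$ is column-stochastic in the sense of the definition given earlier.

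Next I would verify the claimed identity by the one-line computation $Ax=\sum_{j=1}^n x_j\,Ae_j$, splitting the sum according to the sign of $x_j$:
\begin{align*}
(Ax)_1=\sum_{j:\,x_j\geq 0}x_j=\unitvector^Tx_+,\qquad (Ax)_2=\sum_{j:\,x_j<0}x_j=-\unitvector^Tx_-,
\end{align*}
while $(Ax)_k=0$ for all $k\geq 3$ since no column of $A$ has support outside $\{1,2\}$. This is exactly the asserted form of $Ax$, and the proof is complete.

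There is no genuine obstacle here: the construction is explicit and its verification immediate. The only points deserving a word of care are that $A$ is permitted to depend on $x$ (which it does, through the signs of the entries $x_j$, and the lemma only asserts existence), and that the hypothesis $n\neq 1$ is exactly what guarantees the two receiving coordinates are distinct---were $n=1$, the prescribed target vector could not even be written down.
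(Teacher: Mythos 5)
Your proof is correct and is essentially identical to the paper's own argument: the paper likewise sets $A_{1j}=1$ when $x_j\geq 0$ and $A_{2j}=1$ when $x_j<0$ (all other entries zero), observes that each column is a standard basis vector so $A$ is column-stochastic, and verifies the identity by splitting the sum over the sign of $x_j$. Your added remark on why $n\neq 1$ is needed is a nice clarification but does not change the substance.
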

\begin{proof}
For arbitrary $j=1,\ldots,n$ choose $A_{1j}=1, A_{2j}=0$ if $x_j\geq 0$ and \mbox{$A_{1j}=0$}, $A_{2j}=1$ if $x_j< 0$ as well as $A_{ij}=0$ for all $i>2$. Then every column contains precisely one $1$-entry and the rest is $0$, hence $A_1$ is column-stochastic and
\begin{equation}
(A_1x)_1=\sum\nolimits_{x_j\geq 0} x_j=\sum\nolimits_{j=1}^n \max\{x_j,0\}=\unitvector^Tx_+ \qquad (A_1x)_2=-\unitvector^Tx_-\,.\tag*{\qedhere}
\end{equation}
\end{proof}

\begin{lemma}\label{lemma_stoch_transf}
For $x,y\in\mathbb R^n$ the following statements are equivalent.
\begin{itemize}
\item[(i)] $\unitvector^Tx=\unitvector^Ty$ and $\|x\|_1\leq\|y\|_1$
.\smallskip
\item[(ii)] There exists a column-stochastic matrix $M\in\mathbb R^{n\times n}$ such that $My=x$.
\end{itemize}
\end{lemma}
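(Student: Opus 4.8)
The plan is to prove the two implications separately, with (ii) $\Rightarrow$ (i) being routine and (i) $\Rightarrow$ (ii) carrying the content. For (ii) $\Rightarrow$ (i), if $M$ is column-stochastic then each column sums to one, i.e.\ $\unitvector^TM=\unitvector^T$, so immediately $\unitvector^Tx=\unitvector^TMy=\unitvector^Ty$. For the norm inequality I would exploit nonnegativity of the entries together with the triangle inequality,
\[
\|My\|_1=\sum_{i=1}^n\Big|\sum_{j=1}^nM_{ij}y_j\Big|\leq \sum_{i,j=1}^nM_{ij}|y_j|=\sum_{j=1}^n(\unitvector^TM)_j|y_j|=\|y\|_1,
\]
which exhibits column-stochastic matrices as $\ell_1$-contractions and yields $\|x\|_1\leq\|y\|_1$.

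For (i) $\Rightarrow$ (ii) the case $n=1$ is trivial, since there $\unitvector^Tx=x$ and $\unitvector^Ty=y$ force $x=y$, handled by $M=1$. For $n\geq 2$ I would write $x=x_+-x_-$, $y=y_+-y_-$ and abbreviate the positive and negative masses $\alpha:=\unitvector^Ty_+$, $\beta:=\unitvector^Ty_-$, $\gamma:=\unitvector^Tx_+$, $\delta:=\unitvector^Tx_-$. Then (i) reads $\alpha-\beta=\gamma-\delta$ and $\alpha+\beta\geq\gamma+\delta$; adding the equality to the inequality gives $\gamma\leq\alpha$, subtracting it gives $\delta\leq\beta$, and rearranging the equality produces the crucial identity $s:=\alpha-\gamma=\beta-\delta\geq 0$. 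The idea is to realise the transfer as a product $M=BA$ of two column-stochastic matrices (products of such matrices being column-stochastic, as $\unitvector^T(BA)=\unitvector^TA=\unitvector^T$). First, Lemma \ref{lemma_vector_transform} supplies a column-stochastic $A$ with $Ay=\alpha e_1-\beta e_2$, collapsing $y$ into a two-coordinate normal form.

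It then remains to build a column-stochastic $B$ with $B(\alpha e_1-\beta e_2)=x$; since only the first two input coordinates are nonzero, only the first two columns $b^{(1)},b^{(2)}$ of $B$ enter, via $B(\alpha e_1-\beta e_2)=\alpha b^{(1)}-\beta b^{(2)}$. I would set
\[
b^{(1)}:=\tfrac{1}{\alpha}(x_++c),\qquad b^{(2)}:=\tfrac{1}{\beta}(x_-+c)
\]
for a fixed $c\geq 0$ with $\unitvector^Tc=s$ (e.g.\ $c=s\,e_1$), and fill the remaining columns with arbitrary probability vectors. Then $\alpha b^{(1)}-\beta b^{(2)}=(x_++c)-(x_-+c)=x$, both $b^{(1)},b^{(2)}$ are nonnegative, and the consistency $\alpha-\gamma=\beta-\delta=s$ is exactly what forces $\unitvector^Tb^{(1)}=\tfrac{1}{\alpha}(\gamma+s)=1$ and $\unitvector^Tb^{(2)}=\tfrac{1}{\beta}(\delta+s)=1$, so $B$ is column-stochastic and $M=BA$ is the sought matrix.

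The only genuine obstacle is the degenerate situation $\alpha=0$ or $\beta=0$, where the formula for $B$ divides by zero. If $\alpha=0$ then $y_+=0$, and $\gamma\leq\alpha=0$ forces $\gamma=0$ and hence $s=0$; the term $\alpha b^{(1)}$ vanishes, so $b^{(1)}$ may be taken to be any probability vector while $b^{(2)}=x_-/\beta$ is a valid probability vector precisely because $s=0$ gives $\delta=\beta$. The case $\beta=0$, and the trivial case $y=0$, are symmetric. Thus the construction goes through in every case, completing the harder implication.
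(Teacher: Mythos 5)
Your proof is correct, and it differs from the paper's in the key step. Both arguments share the same opening move: Lemma \ref{lemma_vector_transform} supplies a column-stochastic $A$ collapsing $y$ to the two-coordinate normal form $\unitvector^Ty_+\,e_1-\unitvector^Ty_-\,e_2$, and the target matrix is a product $M=BA$. The paper then finishes by observing that $\unitvector^Tx=\unitvector^Ty$ and $\|x\|_1\leq\|y\|_1$ imply the classical majorization $x\prec(\unitvector^Ty_+,-\unitvector^Ty_-,0,\ldots,0)^T$, and invokes the Hardy--Littlewood--P\'olya theorem (cited from Marshall--Olkin) to obtain a \emph{doubly} stochastic $B$ mapping this vector to $x$. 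You instead build $B$ explicitly: its first two columns are $\tfrac{1}{\alpha}(x_++c)$ and $\tfrac{1}{\beta}(x_-+c)$, where the slack $s=\alpha-\gamma=\beta-\delta\geq 0$ extracted from hypothesis (i) is exactly what makes both columns sum to one, and the correction $c$ cancels in $\alpha b^{(1)}-\beta b^{(2)}=x$. Your route is more elementary and self-contained --- it never needs the doubly-stochastic characterization of majorization, only produces a column-stochastic $B$ (which is all the lemma requires), and yields a fully explicit $M$. The price is the case analysis for $\alpha=0$, $\beta=0$, or $y=0$ forced by the divisions, which you handle correctly; the paper's majorization argument avoids any case split, is shorter, and situates the lemma inside standard theory, at the cost of outsourcing the existence statement to a cited theorem.
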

\begin{proof}
(ii) $\Rightarrow$ (i): Simple calculation involving the triangle inequality.

(i) $\Rightarrow$ (ii): By Lemma \ref{lemma_vector_transform} there exists $A\in\mathbb R^{n\times n}$ column-stochastic such that \mbox{$Ay=( \unitvector^Ty_+,-\unitvector^Ty_-,0,\ldots,0)^T$}. Also $\|x\|_1\leq\|y\|_1$ implies that $x\prec(\unitvector^T y_+,-\unitvector^T y_-,0,\ldots,0)^T$ so we find $B\in\mathbb R^{n\times n}$ doubly stochastic with $B(\unitvector^Ty_+,-\unitvector^Ty_-,0,\ldots,0)^T=x$ \cite[Ch.~2, Thm.~B.6]{MarshallOlkin}. Thus $M:=BA$ is column-stochastic and satisfies $My=B(Ay)=x$.
\end{proof}

This enables proving the result in question.
\begin{proof}[Proof of Proposition \ref{prop_exist_channel}]
``(ii) $\Rightarrow$ (iii)'': Obvious. ``(iii) $\Rightarrow$ (i)'': Lemma \ref{lemma_qc_op_bound}.

``(i) $\Rightarrow$ (ii)'': There exist unitaries $U,V\in\mathbb C^{n\times n}$ and vectors $x,y\in\mathbb R^n$ such that $A=U\operatorname{diag}(x) U^\conj$, $B=V\operatorname{diag}(y) V^\conj$. By assumption $\unitvector^Tx=\tr(A)=\tr(B)=\unitvector^Ty$ and $\|x\|_1=\|A\|_1\leq\|B\|_1=\|y\|_1$. Hence Lemma \ref{lemma_stoch_transf} yields a column-stochastic matrix $M\in\mathbb R^{n\times n}$ with $My=x$. Define a map $\tilde{T}:\mathbb C^{n\times n}\to \mathbb C^{n\times n}$ via
\begin{align*}
|e_i\rangle\langle e_j|\mapsto\begin{cases}
0&\text{ if }i\neq j\\
\sum\nolimits_{k=1}^n M_{ki} |e_k\rangle\langle e_k|&\text{ if }i= j
\end{cases}
\end{align*}
and its linear extension onto all of $\mathbb C^{n\times n}$. The Choi matrix of $\tilde{T}$ is diagonal with non-negative entries because $M_{jk}\geq 0$ for all $j,k$ so $C(\tilde{T})\geq 0$ and $\tilde{T}$ is completely positive by Lemma \ref{lemma_choi_matrix}. Moreover $\tilde{T}$ is trace preserving because
$$
\tr\big(\tilde{T}( |e_j\rangle\langle e_j|)\big)=\sum\nolimits_{k=1}^n M_{kj}\tr( |e_k\rangle\langle e_k|)=(\unitvector^TM)_j=1=\tr( |e_j\rangle\langle e_j|)
$$
for all $j=1,\ldots,n$. This shows $\tilde{T}\in Q(n)$. Also 
\begin{align*}
\tilde{T}(\operatorname{diag}(y))=\sum\nolimits_{j=1}^n y_j \tilde{T}( |e_j\rangle\langle e_j|)&=\sum\nolimits_{i=1}^n\Big(\sum\nolimits_{j=1}^nM_{ij}y_{j}\Big) |e_i\rangle\langle e_i|\\
&=\sum\nolimits_{i=1}^n(My)_i |e_i\rangle\langle e_i|=\sum\nolimits_{i=1}^nx_i |e_i\rangle\langle e_i|=\operatorname{diag}(x)
\end{align*}
so $T(\cdot):=U \tilde T(V^\conj (\cdot)V)U^\conj$ ($\in Q(n)$ as a composition of quantum channels, Lemma \ref{lemma_convex_subsemigroup}) satisfies $T(B)=A$. 
Now if one of the $y_j$ (eigenvalues of $B$) is $0$ then the action of $\tilde T( |e_j\rangle\langle e_j|)=:\omega$ can obviously be chosen freely without affecting \mbox{$\tilde T(\diag y)=\diag x$}, that is, $T(B)=A$. If $\omega\in\mathbb D(\mathbb C^n)$ then $\tilde T,T$ remain in $Q(n)$ by the above argument so defining $\psi:=Ve_j$ concludes the proof. 
\end{proof}
\subsection{Examples}
\begin{example}\label{example_1}
The linear map
\begin{align*}
T:\mathbb C^{2\times 2}&\to\mathbb C^{2\times 2}\\
 \begin{pmatrix} a_{11}&a_{12}\\a_{21}&a_{22}\end{pmatrix}&\mapsto\begin{pmatrix} a_{11}+\frac12a_{22}&0\\0&\frac12a_{22}\end{pmatrix}
\end{align*}
is obviously {\sc cptp} and strictly positive ($T(\identity)>0$) but the only fixed points of $T$ are of the form $\footnotesize\begin{pmatrix}x&0\\0&0 \end{pmatrix}$, that is, not of full rank.
\end{example}
\begin{example}\label{example_2}
Consider the channel
\begin{align*}
T:\mathbb C^{3\times 3}&\to \mathbb C^{3\times 3}\\
\begin{pmatrix} a_{11}&a_{12}&a_{13}\\a_{21}&a_{22}&a_{23}\\a_{31}&a_{32}&a_{33} \end{pmatrix}&\mapsto \begin{pmatrix}a_{22}+a_{33}&0&0\\0& \frac12a_{11}&0\\0&0& \frac12a_{11} \end{pmatrix}\,.
\end{align*}
In particular this map is strictly positive by Prop.~\ref{prop_positivity} (iii) as $\operatorname{diag}(2,1,1)$ is a fixed point. However
\begin{align*}
1=\rank{|e_1\rangle\langle e_1|}&<\rank{T(|e_1\rangle\langle e_1|)}=2\\
2=\rank{|e_2\rangle\langle e_2|+|e_3\rangle\langle e_3|}&>\rank{T(|e_2\rangle\langle e_2|+|e_3\rangle\langle e_3|)}=1\,.
\end{align*}
\end{example}

\begin{example}\label{example_3new}
For any $m\in\mathbb N$ define $T_m:\mathbb C^{2\times 2}\to \mathbb C^{2\times 2}$ via
$$
T_m\begin{pmatrix} a_{11}&a_{12}\\a_{21}&a_{22}\end{pmatrix}=\begin{pmatrix} (1+\frac1m)a_{11}-\frac1ma_{22}&a_{12}\\a_{21}&(1+\frac1m)a_{22}-\frac1ma_{11} \end{pmatrix}\,.
$$
Obviously every $T_m$ is trace-preserving but not positive as
$$
T_m\begin{pmatrix} 1&0\\0&0 \end{pmatrix}=\begin{pmatrix} 1+\frac1m&0\\0&-\frac1m \end{pmatrix}\,.
$$
However $\lim_{m\to\infty}T_m=\operatorname{id}_2$ so for every $\varepsilon>0$ there exists $T\in B_\varepsilon(\operatorname{id}_2)$ which is not strictly positive, although the identity itself is strictly positive. 
This example can easily be generalized to arbitrary sizes of domain and co-domain. 
\end{example}

\begin{example}\label{example_5}
\begin{itemize}
\item[(i)] The Choi matrix of the linear map
\begin{align*}
T:\mathbb C^{3\times 3}&\to \mathbb C^{3\times 3}\\
\begin{pmatrix} a_{11}&a_{12}&a_{13}\\a_{21}&a_{22}&a_{23}\\a_{31}&a_{32}&a_{33} \end{pmatrix}&\mapsto \begin{pmatrix} a_{11}&\frac{i}{\sqrt2}(a_{12}+a_{13})&0\\-\frac{i}{\sqrt2}(a_{21}+a_{31})&a_{22}+a_{33}&0\\0&0&0 \end{pmatrix}
\end{align*}
has simple eigenvalues $2,1$ and the $7$-fold eigenvalue $0$ so $T$ is {\sc cptp}, not {\sc sp} and not a trace projection, that is, not of the form $A\mapsto\tr(A)\rho$ for any state $\rho$.\smallskip
\item[(ii)] Via $\tr(T(A)B)=\tr(AT^*(B))$ for all $A,B\in\mathbb C^{3\times 3}$ the dual of $T$ from (i) is
\begin{align*}
T^*:\mathbb C^{3\times 3}&\to \mathbb C^{3\times 3}\\
\begin{pmatrix} b_{11}&b_{12}&b_{13}\\b_{21}&b_{22}&b_{23}\\b_{31}&b_{32}&b_{33} \end{pmatrix}&\mapsto \begin{pmatrix} b_{11} & -\frac{i}{\sqrt{2}}b_{12} & -\frac{i}{\sqrt{2}}b_{12} \\ \frac{i}{\sqrt{2}}b_{21} & b_{22} & 0 \\ \frac{i}{\sqrt{2}}b_{21} & 0 & b_{22} \end{pmatrix}\,.
\end{align*}
Note that  the action of $T^*$ is determined by a subalgebra of the domain because 
$$
T^*(B)=T^*\begin{pmatrix} b_{11}&b_{12}&0\\b_{21}&b_{22}&0\\0&0&0 \end{pmatrix}\,.
$$
\end{itemize}
\end{example}
\begin{example}\label{example_5c}
Consider the unitary matrix $\sigma={\footnotesize\begin{pmatrix} 0&1\\1&0 \end{pmatrix}}$ and the induced channel $T:\mathbb C^{2\times 2}\to \mathbb C^{2\times 2}$, $\rho\mapsto \sigma\rho\sigma$. Then
$$
\|T-\id\|\geq \big\|T\big(|e_1\rangle\langle e_1|\big)-|e_1\rangle\langle e_1|\,\big\|_1=\big\|\,|e_2\rangle\langle e_2|-|e_1\rangle\langle e_1|\,\big\|_1=2
$$
but as a unitary channel, $T$ preserves the identity and thus is strictly positive. 
\end{example}
\end{document}